%% file: 2026_ZhRy_quantile.tex
\documentclass[11pt,oneside,fleqn,reqno]{article}
\usepackage{amsmath}
\usepackage{amsthm}
\usepackage{amssymb}
\usepackage{graphicx}
\usepackage{xcolor}
\usepackage{mathtools}
\usepackage{wrapfig}
\usepackage{natbib} \citeindextrue
\usepackage{subfigure}
\usepackage{dsfont}
\usepackage{multirow}
\usepackage{multicol}
\usepackage{algorithm}
\usepackage{algorithmic}
\usepackage{bm}
\usepackage{comment}

\DeclareMathOperator{\conv}{conv}

\allowdisplaybreaks


\input{notation_ams}

\input{macro_ams}

\begin{document}

\title{Quantile optimization in semidiscrete optimal transport}
\author{Yinchu Zhu \and Ilya O. Ryzhov}

\date{\today}

\maketitle

\begin{abstract}
Optimal transport is the problem of designing a joint distribution for two random variables with fixed marginals. In virtually the entire literature on this topic, the objective is to minimize expected cost. This paper is the first to study a variant in which the goal is to minimize a quantile of the cost, rather than the mean. For the semidiscrete setting, where one distribution is continuous and the other is discrete, we derive a complete characterization of the optimal transport plan and develop simulation-based methods to efficiently compute it. One particularly novel aspect of our approach is the efficient computation of a tie-breaking rule that preserves marginal distributions. In the context of geographical partitioning problems, the optimal plan is shown to produce a novel geometric structure.
\end{abstract}

\doublespace

\section{Introduction}\label{sec:intro}

Optimal transport \citep{Vi21} is a rich and challenging branch of optimization focusing on the efficient transformation of one probability distribution into another. Given two random variables, $X$ and $Y$, with known marginal distributions, the goal is to design a \textit{joint} distribution that minimizes some cost function depending on both $X$ and $Y$. In the special case where $X$ and $Y$ are both discrete, optimal transport reduces to the well-known ``transportation problem'' \citep{FoFu56}, in which linear programming is used to match supply to demand. However, when one or both of the random variables are continuous, the problem becomes much more difficult. Applications of these more general settings arise in facility logistics \citep{CaCaDe16}, computer vision \citep{BoDi23}, text mining \citep{Yurochkin19}, portfolio optimization \citep{Nguyen25}, and algorithmic fairness \citep{Chzhen20}. Optimal transport also interfaces with more theoretical areas of operations research, e.g., the literature on distributionally robust optimization, which often uses very similar representations of cost \citep{BlMuZh22,GaKl23}. The purely theoretical literature on optimal transport is surveyed in multiple monographs such as \cite{Sa15}, \cite{PeCu19}, \cite{Vi21}, and \cite{Ma23}.

The objective optimized in this extensive literature is, invariably, an \textit{expected} cost. In other words, if $\pi$ is the joint distribution of $\left(X,Y\right)$, the problem is written as $\inf_{\pi} \mathbb{E}\left[c\left(X,Y\right)\right]$ for some cost function $c$, subject to constraints that preserve the marginal distributions. In fact, the vast majority of the optimal transport literature is focused on \textit{one specific} cost function, namely $c\left(x,y\right) = \|x-y\|^2_2$, in which case the objective becomes the well-known Wasserstein distance metric \citep{CuPe18}. Some theoreticians have examined other choices, such as the $1$-norm \citep{XiLiZhZh24}, the $\infty$-norm \citep{ChDeJu08}, and more generally, cost functions satisfying the ``twist condition'' \citep{McRi16}, a class that includes $p$-norms. Even these special cases are somewhat rarely seen, and in any case, they are also minimized in expectation.

The present paper is, to our knowledge, the first study of optimal transport where the objective function is not the expected cost, but a \textit{quantile} of it, that is,
\begin{equation}\label{eq:obj}
\inf_{\pi} \mathbb{Q}^\alpha_{\pi}\left[c\left(X,Y\right)\right],
\end{equation}
where $\alpha \in \left(0,1\right)$ and
\begin{equation*}
\mathbb{Q}^\alpha_{\pi}\left[c\left(X,Y\right)\right] = \inf\left\{t : P_\pi\left(c\left(X,Y\right) \leq t\right) \geq \alpha\right\}.
\end{equation*}
The rest of the optimal transport framework remains unchanged: we still choose a joint distribution under pre-specified marginals. In general, quantile optimization is used when the underlying distributions are heavy-tailed or skewed, so that the means either do not exist or are heavily affected by extreme values. Such situations can arise in energy \citep{KiPo11}, finance \citep{CaGaMoOl22}, and even traditional OR application areas like inventory management \citep{GlLi96}, where rare events such as supply chain disruptions may yield unusually high costs. One may then prefer to optimize the median or an upper quantile of the cost to produce more risk-averse decisions. With regard to optimal transport in particular, we may consider the problem of geographical partitioning \citep{CaDe13}, where a territory is divided between facilities with given capacities; here, the decision-maker may prefer to design the service zones to, e.g., reduce the incidence of very high travel times.

Our study takes place in the context of \textit{semidiscrete} optimal transport, meaning that $X$ has a general (continuous) distribution, but $Y$ is discrete. While this is not the most general setting, there are compelling reasons to examine it. First, semidiscrete OT is, in and of itself, highly relevant to applications, including the aforementioned geographical partitioning problem, as well as economic choice modeling \citep{BoPe24}, artificial intelligence \citep{Chen19}, data analytics \citep{Gao24}, and computational biology \citep{SePrDa22}. Second, semidiscrete OT is far more flexible regarding the choice of $c$ than the general case, and does not have to be restricted to squared distances or $p$-norms. It is worth noting that, in geographical partitioning, the most commonly used cost function is the \textit{unsquared} Euclidean distance \citep{HaSc20}, which already falls outside much of the OT literature. Third, as \cite{TaShKu22} points out, many numerical algorithms for \textit{general} optimal transport work by discretizing at least one of the distributions anyway; for example, the celebrated Sinkhorn algorithm \citep{Cu13} is based on discretization.

In fact, outside of very narrow special cases, it is exceedingly difficult to compute exact solutions for general OT, and the relevant literature offers mainly theoretical characterizations \citep{GaSw98}. Semidiscrete OT is the exception to the rule. \cite{GeCuPeBa16} was the first to show that, in the classical formulation where the expected cost is minimized, it is possible to learn the exact optimal solution using very simple, fast, simulation-based iterative methods (essentially a one-line stochastic approximation algorithm; see, e.g., \citealp{KuYi03}) without having to discretize $X$. This tractability hinges on the fact that, in the semidiscrete setting, although $\pi$ itself is infinite-dimensional, the optimal solution can be completely characterized using only finitely many parameters. Thus, by studying quantile optimization in semidiscrete OT, we may sacrifice some nominal theoretical generality (partially compensated by the ability to handle a much greater variety of cost functions), but we gain a very clean characterization of the optimal solution that (as we show) can be computed very efficiently.

We provide a comprehensive analysis of the problem, proceeding in several steps:
\begin{enumerate}
\item We begin with a simpler feasibility determination problem, where both $t$ and $\alpha$ are fixed and the question is whether it is possible to find any distribution $\pi$ under which $t$ is the $\alpha$-quantile of $\left(X,Y\right)$. In other words, before setting out to optimize the quantile, we address the more basic issue of whether a given value is feasible as the quantile to begin with. Using an infinite-dimensional analog of Farkas' lemma, we derive a certificate of feasibility. This derivation has some similarity to how Kantorovich duality is typically used in optimal transport theory, but it is not a ``dual'' of the quantile optimization problem, because, thus far, we deal with only fixed $t$ and $\alpha$.
\item Using these results, we fix $t$ and characterize a range of $\alpha$ values that are feasible for that $t$. For quantile optimization, the key step is to link $t$ with the largest $\alpha$ in that range, since we prefer to place as much of the cost distribution below $t$ as possible. We characterize this correspondence as the solution to a nonlinear, but convex and \textit{finite}-dimensional optimization problem (also ``dual-like,'' but not exactly the dual of quantile optimization).
\item In quantile optimization, we start with $\alpha$ instead of $t$. Therefore, the solution is to invert the correspondence obtained in the previous step, i.e., to find $t$ for which $\alpha$ is the largest feasible value. Essentially, this is a stochastic root-finding problem that solves a difficult (but finite-dimensional) convex program in each iteration, in marked contrast with traditional (mean-minimizing) semidiscrete OT, where it is sufficient to solve only one program of a similar type.
\item We then show how to construct the \textit{solution} $\pi$ that achieves the optimal quantile. Like in traditional semidiscrete OT, the value of $Y$ is determined using a criterion computed from an observed value of $X$. Unlike the traditional problem, this criterion often produces ties for different candidate values of $Y$. We show the existence, and characterize the structure, of a tie-breaking rule that preserves the required marginal distribution of $Y$. In the language of the OT literature, the Monge property generally does not hold in quantile optimization, and the tie-breaking rule governs the optimal splitting of mass.
\end{enumerate}
The results of this analysis leave us with two computational challenges: solving the root-finding problem, and computing the tie-breaking rule. Both problems can be approached using simulation, i.e., we generate data from the distribution of $X$ and use these samples to estimate various difficult expectations. We do not even need to know the distribution of $X$; we only require a means of sampling from it. Thus, we may leverage techniques used in the simulation community. As laid out by \cite{KiHe08}, simulation-based methods generally fall into two categories: stochastic approximation (SA, also known as stochastic gradient descent), and sample average approximation (SAA). Both approaches have complementary strengths, and in fact, we have reason to use both in this work. Although stochastic approximation is often used to solve stochastic root-finding problems (even for some types of quantile optimization problems; see, e.g., \citealp{HuPeZhZh22}), in this particular case that problem is coupled with a difficult convex program whose value is not differentiable in the decision variable. Therefore, we instead use SAA to learn the optimal quantile, because it requires us to solve a relatively simple linear program and has strong convergence guarantees. On the other hand, when it comes to learning the tie-breaking rule, SAA is unwieldy because there are too many constraints, while SA turns out to be extremely easy to compute. We give a convergence proof showing that the true optimal transport plan is learned at the canonical $\mathcal{O}\left(N^{-\frac{1}{2}}\right)$ rate, where $N$ is the sample size.

Finally, we present numerical examples illustrating the computational tractability of our approach in the setting of geographical partitioning, and in the process obtain new insights into that application. For these problems, traditional (mean-minimizing) semidiscrete OT is solved by a classical geometric structure known as the additively weighted Voronoi diagram \citep{Au91}. Quantile optimization gives rise to a completely different, partially randomized structure, which, to our knowledge, has never been observed before in the literature on computational geometry. We visualize and contrast both mean- and quantile-optimizing solutions and show their impact on the cost distribution.

To summarize, our paper makes the following contributions. 1) We are the first to optimize a quantile objective for any kind of optimal transport. 2) For the semidiscrete setting, we develop a novel analytical framework that produces a clean characterization of the optimal solution. Two key aspects of the solution, both novel relative to the traditional setting, are that the optimal quantile solves a root-finding problem, and that the optimal $\pi$ requires a probabilistic tie-breaking rule. 3) We show how both aspects may be addressed using simulation-based methods, which are both computationally tractable and provably efficient. In particular, the estimated choice probabilities characterizing the optimal transport plan converge to their true values at the canonical square-root rate. 4) Using these methods, we compute and illustrate a novel geometric structure for the partitioning problem. In this way, our work provides cross-disciplinary insights that, we hope, will be of interest to operations researchers and computational geometers, as well as to the community working on optimal transport.

\section{Modeling and analysis}

Formally, our problem has the objective (\ref{eq:obj}) subject to the constraints
\begin{eqnarray}
\int_{\mathcal{X}} \pi\left(x,k\right)dx &=& p_k, \qquad k = 1,...,K,\label{eq:target}\\
\sum^K_{k=1} \pi\left(x,k\right) &=& f\left(x\right), \qquad x\in\mathcal{X},\label{eq:density}\\
\pi\left(x,k\right) &\geq& 0, \qquad x\in\mathcal{X}, \; k = 1,...,K.\label{eq:nonneg}
\end{eqnarray}
where $\mathcal{X}\subseteq \mathds{R}^d$ is the support of $X$, and $\left\{1,...,K\right\}$ is the support of $Y$. We assume that $c$ satisfies $P\left(c\left(X,k\right)<\infty\right) = 1$ for any $k$. The decision variable in (\ref{eq:obj})-(\ref{eq:nonneg}) is $\pi$, which can be viewed as the mixed joint likelihood of $\left(X,Y\right)$. In other words, $P\left(X\in dx,Y=k\right) = \pi\left(x,k\right)dx$. The density $f$ and the probability vector $p$ (with $p_k > 0$ for all $k$) are fixed problem inputs representing the marginal density of $X$ and the marginal pmf of $Y$, respectively. Constraints (\ref{eq:target})-(\ref{eq:density}) ensure that $\pi$ respects the pre-specified marginals.

Our technical analysis consists of three parts. First, in Section \ref{sec:feasibility}, we formulate a simpler feasibility determination problem and characterize its solutions. Section \ref{sec:quantile} then connects this problem to quantile optimization; essentially, the optimal quantile is related to a particular instance of the feasibility determination problem with appropriately specified inputs. Finally, Section \ref{sec:tiebreaking} characterizes ``tie-breaking rules,'' which are needed to design an implementable optimal transport plan that achieves the desired quantile.

\subsection{First building block: feasibility determination}\label{sec:feasibility}

Before proceeding to quantile optimization, let us first consider a simpler feasibility problem. For \textit{fixed} $t$ and $\alpha$, we wish to determine if there exists a joint distribution $\pi$ satisfying (\ref{eq:target})-(\ref{eq:nonneg}) as well as
\begin{equation}\label{eq:Pequality}
P_\pi\left(c\left(X,Y\right) \leq t\right) = \alpha.
\end{equation}
The answer is not obvious because $\alpha$ and $t$, as well as $f$ and $p$, are arbitrary. If a feasible distribution does not always exist, we wish to characterize the conditions under which feasibility occurs.

We may rewrite (\ref{eq:Pequality}) as
\begin{equation}\label{eq:equality}
\sum_k \int_{\mathcal{X}} 1_{\left\{c\left(x,k\right)\leq t\right\}}\pi\left(x,k\right)dx = \alpha,
\end{equation}
a linear constraint in $\pi$. Thus, (\ref{eq:equality}) together with (\ref{eq:target})-(\ref{eq:density}) is an infinite-dimensional system of linear equations. With the addition of the nonnegativity constraints (\ref{eq:nonneg}), we may apply the following infinite-dimensional analog of Farkas' lemma. The proof is given in the Appendix.

\begin{thm}\label{thm:farkas}
The system described by (\ref{eq:equality}) and (\ref{eq:target})-(\ref{eq:nonneg}) has a solution $\pi$ if and only if there is \textit{no} solution $\left(\zeta,\phi,\psi\right)$ to the system
\begin{equation}\label{eq:dualcons}
1_{\left\{c\left(x,k\right)\leq t\right\}}\zeta + \phi\left(x\right) + \psi_k \geq 0, \qquad \forall x\in\mathcal{X}, \quad k = 1,...,K
\end{equation}
satisfying
\begin{equation}\label{eq:farkas}
\alpha\zeta + \int_{x\in\mathcal{X}} f\left(x\right)\phi\left(x\right)dx + \sum^K_{k=1} p_k \psi_k < 0.
\end{equation}
\end{thm}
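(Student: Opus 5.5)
The easy direction goes first. Suppose $\pi$ satisfies (\ref{eq:equality}) and (\ref{eq:target})-(\ref{eq:nonneg}), and suppose $(\zeta,\phi,\psi)$ satisfies (\ref{eq:dualcons}) with $\phi$ integrable against $f$. Multiply (\ref{eq:dualcons}) by $\pi(x,k)\geq 0$, integrate over $x$, and sum over $k$. Using (\ref{eq:equality}), (\ref{eq:density}) and (\ref{eq:target}), the left-hand side becomes exactly $\alpha\zeta+\int f\phi+\sum_k p_k\psi_k$, so this quantity is $\geq 0$ and (\ref{eq:farkas}) fails. The only care needed is to fix the class of $\phi$, namely $L^1(f)$ or bounded measurable functions, so that every integral is finite and Fubini applies.

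For the hard direction I would avoid an abstract Hahn--Banach argument on an infinite-dimensional cone, because the image of the positive cone may fail to be closed there. Instead I would reduce to a finite-dimensional Farkas lemma by parametrizing $\pi$ through conditional probabilities. Write $\pi(x,k)=f(x)q_k(x)$, where $q(x)$ lies in the simplex $\Delta_K$. Then (\ref{eq:density}) and (\ref{eq:nonneg}) hold automatically, and the remaining constraints say that the vector
\[
m(q)=\Bigl(\mathbb{E}[q_1(X)],\dots,\mathbb{E}[q_K(X)],\ \mathbb{E}\bigl[\textstyle\sum_k 1_{\{c(X,k)\leq t\}}q_k(X)\bigr]\Bigr)\in\mathbb{R}^{K+1}
\]
equals $(p,\alpha)$. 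The set $M=\{m(q)\}$ is convex, since mixtures of $q$'s are allowed. It is also compact, because it is the expectation of a measurable selection of the compact convex polytope-valued map $x\mapsto\conv\{(e_k,1_{\{c(x,k)\le t\}})\}$; alternatively, compactness follows from weak-* compactness of $\{q:\,q(x)\in\Delta_K\}$ in $L^\infty$. So feasibility is equivalent to $(p,\alpha)\in M$.

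If $(p,\alpha)\notin M$, I would separate strictly in $\mathbb{R}^{K+1}$. This gives $(\psi,\zeta)$ and $\varepsilon>0$ such that $\sum_k p_k\psi_k+\alpha\zeta<\inf_{m\in M}\bigl(\sum_k m_k\psi_k+m_{K+1}\zeta\bigr)-\varepsilon$. The infimum over $q$ is attained pointwise, and it equals $\mathbb{E}[\min_k(\psi_k+\zeta 1_{\{c(X,k)\le t\}})]$; this step uses a measurable selection of the argmin, which exists because $k$ ranges over a finite set. Now define $\phi(x)=-\min_k\bigl(\psi_k+\zeta 1_{\{c(x,k)\le t\}}\bigr)$, a bounded measurable function. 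With this choice (\ref{eq:dualcons}) holds by construction, and the separation inequality becomes exactly (\ref{eq:farkas}).

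The main obstacle is justifying that $M$ is closed. Here the semidiscrete structure helps, and I expect a clean argument: $M$ equals the set of expectations of integrable selections of a bounded, compact-convex-valued correspondence. By Aumann's theorem this set is compact and convex, and under an atomless $X$ it equals the expectation of the convex hull of the values. Failing that, the weak-* compactness route via Banach--Alaoglu in $L^\infty(f)$ works directly, because $m$ is weak-* continuous: each component of $m(q)$ is an integral of $q_k$ against a fixed $L^1$ function.
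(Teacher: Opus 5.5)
Your proof is correct, and for the hard direction it takes a genuinely different route from the paper. The easy direction is the same multiply-by-$\pi$-and-integrate argument.

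\textbf{The paper's route.} The paper stays infinite-dimensional. It forms the cone $M$ of images of nonnegative $w\in L^2$ in the Hilbert space $L^2(\mathcal{X})\times\mathbb{R}^K\times\mathbb{R}$. It separates $m_*=(f,p,\alpha)$ from $M$ by Hahn--Banach and reads off $(\phi,\psi,\zeta)$ via the Riesz representation. Constraint (\ref{eq:dualcons}) then comes from nonnegativity of the separating functional on the cone.

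\textbf{Your route.} You write $\pi=fq$ with $q(x)\in\Delta_K$, so the density and nonnegativity constraints are built in and only $K+1$ linear functionals remain. Separation then happens in $\mathbb{R}^{K+1}$. Your $\phi$ is constructed as a pointwise minimum rather than produced by duality, and this is exactly the choice (\ref{eq:phi}) the paper makes afterward to reach (\ref{eq:dualobj}).

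\textbf{What your route buys.}
\begin{itemize}
\item It isolates the only genuinely infinite-dimensional fact needed, compactness of $M\subset\mathbb{R}^{K+1}$, and settles it cleanly by Banach--Alaoglu in $L^\infty(f)$.
\item It needs no assumption that $f\in L^2$ or that $\mathcal{X}$ has finite Lebesgue measure. Both are implicit in the paper: the first so that $m_*\in H$, the second so that $w\mapsto\int w(x,k)\,dx$ is even defined and continuous on $L^2$.
\item It yields a bounded $\phi$ satisfying (\ref{eq:dualcons}) at every $x$, whereas the paper's argument gives it only almost everywhere.
\item It replaces the paper's unproved assertion that its cone $M$ is closed in $H$. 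That assertion does hold when $\mathcal{X}$ has finite measure: each $w(\cdot,k)$ is dominated by $\sum_j w(\cdot,j)$, so convergent images have $L^2$-bounded preimages and a weak limit exists. But this is the same compactness mechanism you make explicit.
\end{itemize}

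The paper's argument, for its part, follows the generic Kantorovich-style infinite-dimensional Farkas template. Yours exploits the finiteness of the support of $Y$ to reduce everything to a finite-dimensional separation.
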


Thus, determining the feasibility of the original system is equivalent to solving the infinite-dimensional linear program
\begin{equation}\label{eq:farkasobj}
\inf_{\zeta,\phi,\psi} \alpha\zeta + \int_{x\in\mathcal{X}} f\left(x\right)\phi\left(x\right)dx + \sum^K_{k=1} p_k \psi_k,
\end{equation}
subject to (\ref{eq:dualcons}), and verifying that the optimal value is positive. We may simplify this problem by rewriting (\ref{eq:dualcons}) as
\begin{equation*}
\phi\left(x\right) \geq -1_{\left\{c\left(x,k\right)\leq t\right\}}\zeta - \psi_k, \qquad \forall x\in\mathcal{X}, \quad k = 1,...,K,
\end{equation*}
and setting
\begin{equation}\label{eq:phi}
\phi\left(x\right) = \max_k -1_{\left\{c\left(x,k\right)\leq t\right\}}\zeta - \psi_k
\end{equation}
for every $x\in\mathcal{X}$. Substituting (\ref{eq:phi}) into (\ref{eq:farkasobj}) yields the nonlinear, but finite-dimensional, unconstrained, and convex optimization problem
\begin{equation}\label{eq:dualobj}
\min_{\zeta,\psi} \alpha\zeta + \sum^K_{k=1} p_k \psi_k - \mathbb{E}\left(\min_k 1_{\left\{c\left(X,k\right)\leq t\right\}}\zeta + \psi_k\right).
\end{equation}
As the next result shows, the feasibility of the system (\ref{eq:dualcons})-(\ref{eq:farkas}) is determined entirely by whether (\ref{eq:dualobj}) has an optimal solution.

\begin{prop}\label{prop:unbounded}
Problem (\ref{eq:dualobj}) is unbounded if and only if (\ref{eq:dualcons})-(\ref{eq:farkas}) is feasible. Moreover, if (\ref{eq:dualobj}) is not unbounded, its optimal value is zero.
\end{prop}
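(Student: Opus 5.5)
The argument rests on one structural fact: the objective of (\ref{eq:dualobj}), call it $G(\zeta,\psi)$, is positively homogeneous of degree one. For $\lambda>0$ we have $\min_k \bigl(1_{\{c(X,k)\le t\}}\lambda\zeta+\lambda\psi_k\bigr)=\lambda\min_k\bigl(1_{\{c(X,k)\le t\}}\zeta+\psi_k\bigr)$, so $G(\lambda\zeta,\lambda\psi)=\lambda G(\zeta,\psi)$. Moreover $G(0,0)=0$, so the infimum of (\ref{eq:dualobj}) is at most zero.

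The plan is then two short steps followed by a careful check of the reduction.

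\textbf{Step 1.} The infimum is either $0$ or $-\infty$. If some $(\zeta,\psi)$ gives $G(\zeta,\psi)<0$, scaling by $\lambda\to\infty$ drives the objective to $-\infty$, so the problem is unbounded. Otherwise $G\ge 0$ everywhere, and $(0,0)$ attains the optimal value zero. This already proves the ``moreover'' clause.

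\textbf{Step 2.} $G$ takes a negative value if and only if (\ref{eq:dualcons})--(\ref{eq:farkas}) is feasible.

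Suppose first that $(\zeta,\psi)$ satisfies $G(\zeta,\psi)<0$. Define $\phi$ by (\ref{eq:phi}). Then (\ref{eq:dualcons}) holds pointwise by construction. Since $\max_k\bigl(-a_k\bigr)=-\min_k a_k$, the left side of (\ref{eq:farkas}) equals $G(\zeta,\psi)<0$.

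Conversely, let $(\zeta,\phi,\psi)$ satisfy (\ref{eq:dualcons})--(\ref{eq:farkas}). Constraint (\ref{eq:dualcons}) gives $\phi(x)\ge \max_k\bigl(-1_{\{c(x,k)\le t\}}\zeta-\psi_k\bigr)$ for every $x$. Integrating against $f\ge 0$ shows that $G(\zeta,\psi)$ is at most the expression in (\ref{eq:farkas}), which is negative. Combining Step 2 with Step 1 yields the equivalence with unboundedness.

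\textbf{Main obstacle: integrability.} The expectation in $G$ is always finite, because the integrand lies between $\min_k\psi_k-|\zeta|$ and $\max_k\psi_k+|\zeta|$. The point needing care is that $\phi$ in Theorem~\ref{thm:farkas} must be a function for which $\int f\phi$ makes sense. The $\phi$ defined in (\ref{eq:phi}) is bounded and measurable, since each indicator is measurable, assuming $c(\cdot,k)$ is measurable. I would therefore state that the function class in Theorem~\ref{thm:farkas} is taken to include bounded measurable functions. In the converse direction, the integral $\int f\phi$ is assumed to exist, possibly as $+\infty$, in which case (\ref{eq:farkas}) fails; so the monotonicity step is valid.
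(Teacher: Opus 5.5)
Your proof is correct and follows the same idea as the paper's: the objective is positively homogeneous in $(\zeta,\psi)$, so any negative value can be scaled to $-\infty$, while $(0,0)$ gives the value zero. Your version is somewhat tighter than the paper's. The paper asserts, before using homogeneity, that a non-unbounded unconstrained problem must attain its minimum, and it attributes the link to (\ref{eq:dualcons})--(\ref{eq:farkas}) to Theorem~\ref{thm:farkas}. You instead prove the dichotomy $\inf\in\{0,-\infty\}$ directly and verify both directions of the $\phi$-reduction (the substitution (\ref{eq:phi}) and pointwise domination), with a well-placed caveat about the function class allowed for $\phi$.
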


\noindent\textbf{Proof:} If (\ref{eq:dualobj}) is unbounded, (\ref{eq:dualcons})-(\ref{eq:farkas}) are feasible by Theorem \ref{thm:farkas}. Suppose now that (\ref{eq:dualobj}) is not unbounded. Then, because this problem is unconstrained and thus cannot be infeasible, it must have an optimal solution $\left(\zeta^*,\psi^*\right)$.

Suppose that the optimal value of (\ref{eq:dualobj}) is strictly negative, i.e.,
\begin{equation*}
\alpha\zeta^* + \sum^K_{k=1} p_k \psi^*_k - \mathbb{E}\left(\min_k 1_{\left\{c\left(X,k\right)\leq t\right\}}\zeta^* + \psi^*_k\right) < 0.
\end{equation*}
The inequality is preserved if we multiply both sides by arbitrarily large $a > 0$. Thus, we can find $a$ for which $\left(a\zeta^*,a\psi^*\right)$ has a better objective value than $\left(\zeta^*,\psi^*\right)$, contradicting the fact that $\left(\zeta^*,\psi^*\right)$ is optimal. Therefore, the optimal value of (\ref{eq:dualobj}) cannot be strictly negative, whence (\ref{eq:dualcons})-(\ref{eq:farkas}) are infeasible by Theorem \ref{thm:farkas}. Finally, because $\zeta = 0$ and $\psi = 0$ are feasible for (\ref{eq:dualobj}), the optimal value of (\ref{eq:dualobj}) must be zero.\qed

Optimal solutions of (\ref{eq:dualobj}), if they exist (they need not be unique), also provide important structural insight into the original feasibility problem, i.e., the solution of (\ref{eq:Pequality}) and (\ref{eq:target})-(\ref{eq:nonneg}). Suppose that some nontrivial $\left(\zeta^*,\psi^*\right)$ is optimal for (\ref{eq:dualobj}), and consider a random variable $Y$ satisfying
\begin{equation}\label{eq:argmin}
Y\left(\omega\right) \in \arg\min_k 1_{\left\{c\left(X\left(\omega\right),k\right)\leq t\right\}}\zeta^* + \psi^*_k
\end{equation}
for all $\omega$. There is some indeterminacy in (\ref{eq:argmin}) because the argmin may not be unique. However, if there exists a (possibly random) tie-breaking rule that causes $Y$ to have the correct marginal distribution, then such a $Y$ will be feasible for the original problem. We formally prove this result as follows.

\begin{prop}\label{prop:howtofeasible}
Suppose that $\left(\zeta^*,\psi^*\right)$ is optimal for (\ref{eq:dualobj}) with $\zeta^*\neq 0$. Suppose also that $Y$ satisfies (\ref{eq:argmin}) as well as $P\left(Y = k\right) = p_k$. Then, the joint distribution $\pi$ of $\left(X,Y\right)$ is feasible for (\ref{eq:Pequality}) and (\ref{eq:target})-(\ref{eq:nonneg}).
\end{prop}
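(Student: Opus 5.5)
The marginal constraints hold by construction. Since $\pi$ is the joint law of $(X,Y)$ with $X\sim f$, constraints (\ref{eq:density}) and (\ref{eq:nonneg}) hold automatically. The assumption $P(Y=k)=p_k$ gives (\ref{eq:target}). So the only thing to prove is (\ref{eq:Pequality}), namely $P(c(X,Y)\le t)=\alpha$. I will obtain it from the first-order optimality condition of (\ref{eq:dualobj}) in the $\zeta$ direction, evaluated at $(\zeta^*,\psi^*)$.

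Fix $\psi=\psi^*$ and write $g(\zeta)=\alpha\zeta+\sum_k p_k\psi^*_k-\mathbb{E}\,h(X,\zeta)$, where $h(x,\zeta)=\min_k 1_{\{c(x,k)\le t\}}\zeta+\psi^*_k$. By Proposition \ref{prop:unbounded}, $\zeta^*$ minimizes $g$, and $g(\zeta^*)=0$. I will then use the following exact identity, which follows from (\ref{eq:argmin}):
\begin{equation*}
h(X,\zeta^*)=1_{\{c(X,Y)\le t\}}\zeta^*+\psi^*_Y.
\end{equation*}
Take expectations and use $P(Y=k)=p_k$:
\begin{equation*}
\mathbb{E}\,h(X,\zeta^*)=\zeta^* P(c(X,Y)\le t)+\sum_k p_k\psi^*_k.
\end{equation*}
Substituting this into $g(\zeta^*)=0$ gives $\zeta^*\bigl(\alpha-P(c(X,Y)\le t)\bigr)=0$. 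Since $\zeta^*\neq 0$, it follows that $P(c(X,Y)\le t)=\alpha$, which is (\ref{eq:Pequality}).

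The only delicate point is this use of the hypothesis $\zeta^*\ne0$: it is exactly what allows division by $\zeta^*$ in the last step. When $\zeta^*=0$ the argument yields no information, which is why that case is excluded. I should also check that $\mathbb{E}\,h$ is finite so the expectation manipulations are legitimate; this holds because $h$ is bounded by $|\zeta^*|+\max_k|\psi^*_k|$.

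As a consistency check, the convexity argument gives the same conclusion from the other side. For $\zeta=(1+\epsilon)\zeta^*$, the identity above still bounds $h$ from above:
\begin{equation*}
h(X,(1+\epsilon)\zeta^*)\le 1_{\{c(X,Y)\le t\}}(1+\epsilon)\zeta^*+\psi^*_Y.
\end{equation*}
Hence $g((1+\epsilon)\zeta^*)\ge \epsilon\zeta^*(\alpha-P(c(X,Y)\le t))$. Optimality of $\zeta^*$ and the sign of $\epsilon$ are then consistent with this quantity being zero. The direct computation above already suffices, however, so this check is not needed.
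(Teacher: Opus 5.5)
Your proposal is correct and matches the paper's proof. Both use Proposition \ref{prop:unbounded} to get that the optimal value of (\ref{eq:dualobj}) is zero. Both then rewrite the expected minimum as $\zeta^* P(c(X,Y)\le t)+\sum_k p_k\psi^*_k$ using (\ref{eq:argmin}) and $P(Y=k)=p_k$, and divide by $\zeta^*\neq 0$. One small point: what you actually use is the zero optimal value $g(\zeta^*)=0$, not a first-order condition in $\zeta$, so that framing is a misnomer, and the closing ``consistency check'' can simply be dropped.
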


\noindent\textbf{Proof:} Constraints (\ref{eq:target})-(\ref{eq:nonneg}) are satisfied by the properties of $Y$. It remains to show (\ref{eq:Pequality}).

First, by Proposition \ref{prop:unbounded}, the optimal value of (\ref{eq:dualobj}) must be zero. Therefore,
\begin{equation}\label{eq:zerovalue}
\mathbb{E}\left(\min_k 1_{\left\{c\left(X,k\right)\leq t\right\}}\zeta^* + \psi^*_k\right)=\alpha\zeta^* + \sum^K_{k=1} p_k \psi^*_k.
\end{equation}
At the same time, we can derive
\begin{eqnarray}
\mathbb{E}\left(\min_k 1_{\left\{c\left(X,k\right)\leq t\right\}}\zeta^* + \psi^*_k\right) &=& \mathbb{E}\left(1_{\left\{c\left(X,Y\right)\leq t\right\}}\zeta^* + \psi^*_Y\right)\label{eq:Yisargmin}\\
&=& \mathbb{E}\left(\sum_k 1_{\left\{Y=k\right\}}1_{\left\{c\left(X,k\right)\leq t\right\}}\zeta^*\right) + \sum_k p_k\psi^*_k\label{eq:Yties}\\
&=& \zeta^*\mathbb{E}\left(1_{\left\{c\left(X,Y\right)\leq t\right\}}\right) + \sum_k p_k\psi^*_k\nonumber\\
&=& \zeta^* P\left(c\left(X,Y\right)\leq t\right) + \sum_k p_k \psi^*_k,\label{eq:Yderivation}
\end{eqnarray}
where (\ref{eq:Yisargmin}) follows from (\ref{eq:argmin}), and (\ref{eq:Yties}) holds because $P\left(Y=k\right) = p_k$. Setting (\ref{eq:zerovalue}) equal to (\ref{eq:Yderivation}) yields (\ref{eq:Pequality}), as required.\qed

It is worth noting that the lack of a unique argmin in (\ref{eq:argmin}) constitutes a profound difference between our setting and classical optimal transport. Much of the OT literature is focused on situations where $Y$ can be written as a deterministic function of $X$, known as a ``Monge map.'' This is the well-known Kantorovich-Monge equivalence property \citep{Pr07,Be21}. In our setting, however, the mass $\pi\left(x,\cdot\right)$ must be divided between different $\pi\left(x,k\right)$, and the tie-breaking rule provides the correct splitting.

We have not yet said anything about how one may compute a tie-breaking rule satisfying the conditions of Proposition \ref{prop:howtofeasible}. In fact, thus far we have not yet established that such a rule even exists. We will return to these issues later, in Section \ref{sec:tiebreaking}. At the moment, however, we can see how the existence of an optimal solution (\ref{eq:dualobj}) guarantees the feasibility of the original system and allows (potentially) a feasible joint distribution to be constructed.

\subsection{Quantile optimization}\label{sec:quantile}

We now use the insights from Section \ref{sec:feasibility} to develop an approach for quantile optimization. Again, by Theorem \ref{thm:farkas}, the original system described by (\ref{eq:Pequality}) and (\ref{eq:target})-(\ref{eq:nonneg}) is feasible if
\begin{equation}\label{eq:positive}
\alpha\zeta + \sum_k p_k \psi_k - \mathbb{E}\left(\min_k 1_{\left\{c\left(X,k\right)\leq t\right\}}\zeta + \psi_k\right) \geq 0
\end{equation}
for all $\zeta,\psi$. Note that, if $\zeta = 0$, then (\ref{eq:positive}) reduces to $\sum_k p_k\psi_k - \min_k \psi_k \geq 0$, which always holds for all $\psi$. If $\zeta \neq 0$, we may rewrite (\ref{eq:positive}) as
\begin{equation*}
\alpha\cdot\frac{\zeta}{\left|\zeta\right|} + \sum_k p_k \frac{\psi_k}{\left|\zeta\right|} - \mathbb{E}\left(\min_k 1_{\left\{c\left(X,k\right)\leq t\right\}}\frac{\zeta}{\left|\zeta\right|} + \frac{\psi_k}{\left|\zeta\right|}\right) \geq 0.
\end{equation*}
Since we may rescale $\psi$ without changing the inequality, it is sufficient to verify (\ref{eq:positive}) for $\zeta \in\left\{-1,1\right\}$. In other words, for all $\psi$, we must have
\begin{eqnarray*}
\alpha + \sum_k p_k \psi_k - \mathbb{E}\left(\min_k 1_{\left\{c\left(X,k\right)\leq t\right\}} + \psi_k\right) &\geq 0,\\
-\alpha + \sum_k p_k \psi_k - \mathbb{E}\left(\min_k -1_{\left\{c\left(X,k\right)\leq t\right\}} + \psi_k\right) &\geq 0.
\end{eqnarray*}
Therefore,
\begin{equation*}
\mathbb{E}\left(\min_k 1_{\left\{c\left(X,k\right)\leq t\right\}} + \psi_k\right) - \sum_k p_k \psi_k \leq \alpha \leq - \mathbb{E}\left(\min_k -1_{\left\{c\left(X,k\right)\leq t\right\}} + \psi_k\right)+\sum_k p_k \psi_k
\end{equation*}
must hold for all $\psi$. Equivalently, we may write
\begin{equation*}
\max_\psi \mathbb{E}\left(\min_k 1_{\left\{c\left(X,k\right)\leq t\right\}} + \psi_k\right) - \sum_k p_k \psi_k \leq \alpha \leq \min_\psi - \mathbb{E}\left(\min_k -1_{\left\{c\left(X,k\right)\leq t\right\}} + \psi_k\right)+\sum_k p_k \psi_k.
\end{equation*}
Using a change of variables $\psi' = -\psi$ for the upper bound, we have
\begin{equation}\label{eq:bounds}
\max_\psi \mathbb{E}\left(\min_k 1_{\left\{c\left(X,k\right)\leq t\right\}} + \psi_k\right) - \sum_k p_k \psi_k \leq \alpha \leq \min_{\psi'} \mathbb{E}\left(\max_k 1_{\left\{c\left(X,k\right)\leq t\right\}} + \psi'_k\right)-\sum_k p_k \psi'_k.
\end{equation}
By the max-min inequality, the lower bound in (\ref{eq:bounds}) is always less than or equal to the upper bound. Both the maximization problem in the lower bound and the minimization problem in the upper bound are unconstrained, and therefore feasible. Consequently, if we are given a value of $t$, (\ref{eq:bounds}) describes a nonempty interval of $\alpha$ values for which it is possible to make that $t$ the $\alpha$-quantile of the cost.

If our goal is quantile optimization, we should make $\alpha$ as large as possible for fixed $t$. In other words, given a fixed threshold for the cost, we would prefer to have as much of the cost distribution below this threshold as possible. We can connect the upper bound in (\ref{eq:bounds}) back to (\ref{eq:dualobj}) through the following result.

\begin{prop}\label{prop:alphastar}
Given fixed $t$, let
\begin{equation}\label{eq:alphastar}
\alpha^* = \min_{\psi'} \mathbb{E}\left(\max_k 1_{\left\{c\left(X,k\right)\leq t\right\}} + \psi'_k\right)-\sum_k p_k \psi'_k,
\end{equation}
and let $\tilde{\psi}$ be any minimizer of the right-hand side of (\ref{eq:alphastar}). Then, $\zeta^* = -1$ and $\psi^* = -\tilde{\psi}$ optimally solve (\ref{eq:dualobj}) with $\alpha = \alpha^*$.
\end{prop}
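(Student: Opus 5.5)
We need to show that $(\zeta^*,\psi^*)=(-1,-\tilde{\psi})$ attains the minimum of (\ref{eq:dualobj}) when $\alpha=\alpha^*$. The plan is to compute the objective at this point, show it equals zero, and then show the objective is nonnegative everywhere, so that zero is the optimal value.

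\textbf{Step 1: the objective at the candidate point vanishes.} Substituting $\zeta=-1$ and $\psi=-\tilde{\psi}$ into (\ref{eq:dualobj}) gives
\begin{equation*}
-\alpha^* - \sum_k p_k\tilde{\psi}_k - \mathbb{E}\left(\min_k -1_{\left\{c\left(X,k\right)\leq t\right\}} - \tilde{\psi}_k\right).
\end{equation*}
Since $\min_k(-a_k)=-\max_k a_k$, this equals $-\alpha^* - \sum_k p_k\tilde{\psi}_k + \mathbb{E}\left(\max_k 1_{\left\{c\left(X,k\right)\leq t\right\}} + \tilde{\psi}_k\right)$. By the definition (\ref{eq:alphastar}) and the fact that $\tilde{\psi}$ is a minimizer, this is exactly $-\alpha^*+\alpha^* = 0$.

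\textbf{Step 2: the objective is nonnegative for every $(\zeta,\psi)$ when $\alpha=\alpha^*$.} This is the substantive part.
\begin{itemize}
\item For $\zeta=0$, the objective reduces to $\sum_k p_k\psi_k-\min_k\psi_k\geq 0$.
\item For $\zeta\neq 0$, the objective is positively homogeneous of degree one in $(\zeta,\psi)$. So it suffices to treat $\zeta\in\{-1,1\}$, exactly as in the rescaling argument preceding (\ref{eq:bounds}).
\item For $\zeta=-1$, after the change of variables $\psi'=-\psi$, nonnegativity is equivalent to $\alpha^*\leq \mathbb{E}\left(\max_k 1_{\left\{c\left(X,k\right)\leq t\right\}} + \psi'_k\right)-\sum_k p_k\psi'_k$ for all $\psi'$. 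This holds by the definition of $\alpha^*$ as the minimum.
\item For $\zeta=1$, nonnegativity is equivalent to $\alpha^*\geq \mathbb{E}\left(\min_k 1_{\left\{c\left(X,k\right)\leq t\right\}} + \psi_k\right) - \sum_k p_k\psi_k$ for all $\psi$. In other words, the lower bound in (\ref{eq:bounds}) must not exceed the upper bound.
\end{itemize}

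\textbf{Main obstacle: the $\zeta=1$ case.} The text attributes the ordering of the bounds in (\ref{eq:bounds}) to the max-min inequality, but I would give a direct argument. For any $\psi$ and $\psi'$, pointwise we have
\begin{equation*}
\min_k\left(1_{\left\{c\left(X,k\right)\leq t\right\}}+\psi_k\right)+\max_k\left(-1_{\left\{c\left(X,k\right)\leq t\right\}}+\psi''_k\right)\leq \max_k\left(\psi_k+\psi''_k\right)
\end{equation*}
for suitable $\psi''$, which is awkward to exploit. Instead I would use a more robust route through existence of a feasible plan. The interval (\ref{eq:bounds}) is nonempty, as the text asserts. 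Take any feasible coupling, for example the independent coupling $\pi(x,k)=f(x)p_k$, and let $\beta$ be its probability $P_\pi(c(X,Y)\leq t)$. By Theorem \ref{thm:farkas} and the derivation leading to (\ref{eq:bounds}), $\beta$ satisfies both bounds. Hence the lower-bound supremum is at most $\beta$, and $\beta$ is at most the upper-bound infimum $\alpha^*$.

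This closes the case: for any fixed $\psi$, taking expectation of $\min_k(1_{\{\cdot\}}+\psi_k)\leq 1_{\left\{c\left(X,Y\right)\leq t\right\}}+\psi_Y$ under any feasible coupling gives $\mathbb{E}\left(\min_k 1_{\left\{c\left(X,k\right)\leq t\right\}}+\psi_k\right)-\sum_k p_k\psi_k\leq\beta\leq\alpha^*$. The symmetric pointwise bound with $\max$ gives $\beta\leq\alpha^*$ directly.

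\textbf{Conclusion.} The objective of (\ref{eq:dualobj}) is therefore nonnegative everywhere and vanishes at $(-1,-\tilde{\psi})$, so this point is optimal. This is consistent with Proposition \ref{prop:unbounded}, which says that when the problem is bounded its optimal value is zero.
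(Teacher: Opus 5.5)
Your proof is correct and follows the paper's route: the objective vanishes at $(-1,-\tilde{\psi})$, and $\alpha^*$ lying in the interval (\ref{eq:bounds}) makes the objective of (\ref{eq:dualobj}) nonnegative everywhere. The paper phrases this as ``not unbounded'' and then invokes Proposition \ref{prop:unbounded}, whereas you check nonnegativity directly. For the $\zeta=1$ case, your coupling argument places $\beta=P_\pi\left(c\left(X,Y\right)\leq t\right)$ for the independent coupling between the two bounds, which is a cleaner, self-contained justification of the bound ordering that the paper attributes to the max-min inequality.
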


\noindent\textbf{Proof:} Problem (\ref{eq:dualobj}) is not unbounded because $\alpha^*$ is, by construction, an element of the interval (\ref{eq:bounds}). Plugging $\left(\zeta^*,\psi^*\right)$ into the objective of (\ref{eq:dualobj}) yields
\begin{eqnarray*}
\alpha^*\zeta^* + \sum^K_{k=1} p_k \psi^*_k - \mathbb{E}\left(\min_k 1_{\left\{c\left(X,k\right)\leq t\right\}}\zeta^* + \psi^*_k\right) &=& -\alpha^* - \sum^K_{k=1} p_k \tilde{\psi}_k - \mathbb{E}\left(\min_k -1_{\left\{c\left(X,k\right)\leq t\right\}} - \tilde{\psi}_k\right)\\
&=& -\alpha^* - \sum^K_{k=1} p_k \tilde{\psi}_k + \mathbb{E}\left(\max_k 1_{\left\{c\left(X,k\right)\leq t\right\}} + \tilde{\psi}_k\right)\\
&=& 0,
\end{eqnarray*}
where the last line is due to the fact that $\tilde{\psi}$ is a minimizer in (\ref{eq:alphastar}). The desired conclusion follows by Proposition \ref{prop:unbounded}.\qed

We also have additional structure on the optimization problem in (\ref{eq:alphastar}), namely, that the right-hand side can always be minimized by binary-valued variables. The proof of this property is deferred to the Appendix so as not to distract from the presentation.

\begin{prop}\label{prop:binary}
There exists a minimizer $\tilde{\psi}$ of the right-hand side of (\ref{eq:alphastar}) satisfying $\tilde{\psi}_k\in\left\{0,1\right\}$ for all $k$.
\end{prop}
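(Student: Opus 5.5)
The plan is to study the convex function
\begin{equation*}
G\left(\psi\right) = \mathbb{E}\left(\max_k 1_{\left\{c\left(X,k\right)\leq t\right\}} + \psi_k\right)-\sum_k p_k \psi_k
\end{equation*}
and show two things. First, $G$ attains its minimum at an \emph{integer} vector. Second, any integer minimizer can be replaced by a $\left\{0,1\right\}$-valued one with no larger objective. Write $a_k\left(x\right) = 1_{\left\{c\left(x,k\right)\leq t\right\}}\in\left\{0,1\right\}$. Two elementary facts will be used throughout. $G\left(\psi + s\mathbf{1}\right) = G\left(\psi\right)$ for every scalar $s$, because $\sum_k p_k = 1$; so we may normalize $\min_k \psi_k = 0$. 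Also $G\geq 0$ everywhere, since $\max_k\left(a_k+\psi_k\right)\geq \psi_j$ for each $j$ and $\sum_k p_k\psi_k$ is an average of the $\psi_j$.

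\textbf{Step 1 (bounded gaps).} Sort the distinct values of a normalized $\psi$. Suppose two consecutive values differ by more than $1$, and let $U$ be the set of indices carrying the upper values. Every $k\in U$ then satisfies $a_k+\psi_k > a_j+\psi_j$ for all $j\notin U$, pointwise in $x$, so the maximum is attained inside $U$. Lowering all $\psi_k$, $k\in U$, by a small $\delta>0$ therefore lowers the expectation term by exactly $\delta$. It raises the linear term by only $\delta\sum_{k\in U}p_k<\delta$, since every $p_k>0$ and $U$ is a proper subset. So $G$ strictly decreases.

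Consequently the infimum of $G$ can be taken over the compact set $\left\{\psi : \min_k\psi_k = 0,\ \max_k\psi_k\leq K-1\right\}$. Since $G$ is continuous, a minimizer exists.

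\textbf{Step 2 (integrality).} For each fixed $x$, the map $\psi\mapsto\max_k\left(a_k\left(x\right)+\psi_k\right)$ is piecewise linear. Its pieces are determined only by comparisons $\psi_k-\psi_j$ against the values $a_j\left(x\right)-a_k\left(x\right)\in\left\{-1,0,1\right\}$. Hence $G$ is linear on every cell of the hyperplane arrangement $\left\{\psi_k-\psi_j\in\left\{-1,0,1\right\}\right\}$ intersected with $\left\{\psi_1 = 0\right\}$, a finite arrangement independent of $x$. A convex function that is linear on each cell of such an arrangement and has a minimizer attains its minimum at a vertex.

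Vertices are solutions of systems of difference equations $\psi_k-\psi_j = b_{kj}$ with integer right-hand sides and $\psi_1=0$. The constraint matrix is a network matrix, hence totally unimodular, so all vertices are integer. Thus there is an integer minimizer $\psi$ with $\min_k\psi_k=0$, and by Step 1 its distinct values are $\left\{0,1,\ldots,M\right\}$ with no gaps.

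\textbf{Step 3 (from integer to binary), the main obstacle.} For integer $\psi$ with values $0,\ldots,M$ and $a_k\in\left\{0,1\right\}$, all quantities $a_k+\psi_k$ are integers. This gives the layer-cake identity
\begin{equation*}
\max_k\left(a_k+\psi_k\right) = \sum_{s=1}^{M+1} 1_{\left\{\exists k:\ a_k+\psi_k\geq s\right\}},
\qquad
\sum_k p_k\psi_k = \sum_{s=1}^{M}\sum_k p_k 1_{\left\{\psi_k\geq s\right\}}.
\end{equation*}
Set $\psi^{(s)} = 1_{\left\{\psi\geq s\right\}}$. Then $\left\{\exists k: a_k+\psi_k\geq s\right\}$ is the union of $\left\{\exists k:\psi_k\geq s\right\}$ and $\left\{\exists k: a_k=1,\ \psi_k = s-1\right\}$. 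I would try to match these layers against $G\left(\psi^{(s)}\right)$ and $G\left(0\right)$, aiming for a decomposition of the form
\begin{equation*}
G\left(\psi\right) \geq \frac{1}{M}\sum_{s=1}^{M} G\left(\psi^{(s)}\right).
\end{equation*}
Possibly the comparison needs the thresholds $1_{\left\{\psi\geq s\right\}}$ only for the indices with $a_k=1$. Such an inequality forces some binary $\psi^{(s)}$ to be a minimizer.

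The delicate point is the bookkeeping of the extra layers created by $a_k=1$: an index at level $s-1$ with $a_k=1$ competes with indices at level $s$. If this direct accounting fails, the fallback is to use strong duality: combining Proposition \ref{prop:alphastar} with Theorem \ref{thm:farkas}, $\alpha^*$ is the value of a bipartite transport problem maximizing $P_\pi\left(c\left(X,Y\right)\leq t\right)$. Its dual is a min-cut problem, whose optimal cuts are indicator vectors and correspond to binary $\tilde\psi$. The fallback consists of making this max-flow/min-cut correspondence precise after discretizing $X$ by the finitely many patterns $\left(a_1\left(X\right),\ldots,a_K\left(X\right)\right)\in\left\{0,1\right\}^K$.
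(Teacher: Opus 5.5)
Your Steps 1 and 2 are sound. Together they produce an integer minimizer whose values, after translating so that $\min_k\psi_k=0$, are exactly $\{0,1,\ldots,M\}$. Your compactness argument even supplies the existence of a minimizer, which the paper takes for granted.

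The gap is Step 3, which is where the statement actually lives, and it is not proved. The layer-averaging inequality $G(\psi)\ge\frac{1}{M}\sum_{s=1}^{M}G(\psi^{(s)})$ that you aim for is false for general integer $\psi$, so it cannot come from layer bookkeeping alone; anything you prove must use minimality. For a counterexample, take $K=3$, $p=(0.05,0.05,0.9)$, the constant pattern $(a_1,a_2,a_3)\equiv(0,1,0)$ and $\psi=(0,1,2)$. Then $G(\psi)=0.15$, $G(\psi^{(1)})=1.05$ and $G(\psi^{(2)})=0.1$, so the average $0.575$ exceeds $G(\psi)$. The min-cut fallback is only an outline. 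You would still need strong duality for the pattern transportation problem and an explicit identification of its optimal cuts with binary $\psi$.

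The missing idea is that in Step 1 you perturbed in the wrong direction. Instead of lowering the upper block, raise the low coordinates. If $\psi_j<\max_k\psi_k-1$, replace $\psi_j$ by $\max_k\psi_k-1$. This leaves $\max_k(a_k+\psi_k)$ unchanged for every $x$: the new $j$th term is at most $\max_k\psi_k$, so it is dominated by the term of any index attaining $\max_k\psi_k$. Meanwhile $p^\top\psi$ strictly increases because $p_j>0$. Hence every minimizer satisfies $\max_k\psi_k-\min_k\psi_k\le 1$, and your normalized integer minimizer is already binary.

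In your layer notation this is the comparison with the top layer alone. Both $\max_k(a_k+\psi_k)$ and $\max_k(a_k+\psi^{(M)}_k)$ equal their top level plus $1_{\{\exists k:\,\psi_k=M,\,a_k=1\}}$, so
\[
G(\psi)-G(\psi^{(M)})=\sum_{s=1}^{M-1}\Bigl(1-\sum_k p_k1_{\{\psi_k\ge s\}}\Bigr)>0\qquad\text{for } M\ge 2 .
\]
Each summand is at least $\sum_{k:\psi_k=0}p_k>0$. In the example above, $\psi^{(2)}$ beats $\psi$ by $0.05$.

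This raising argument is exactly the paper's first step. The paper then writes the problem as an LP over the finitely many patterns $a\in\{0,1\}^K$ with constraints $\eta_a-\psi_k\ge a_k$, whose matrix is totally unimodular. It adds the box $0\le\psi_k\le 1$ and takes a binary optimal vertex, which plays the role of your Step 2.
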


We now have a correspondence between $t$ and $\alpha^*\left(t\right)$, the largest feasible quantile for that cost threshold. Proposition \ref{prop:alphastar} then provides us with a corresponding solution of (\ref{eq:dualobj}), and Proposition \ref{prop:howtofeasible} links that solution to a choice of $Y$ (modulo a suitable tie-breaking rule, which still remains to be discussed) that solves (\ref{eq:Pequality}) and (\ref{eq:target})-(\ref{eq:nonneg}) for $t$ and $\alpha^*\left(t\right)$.

The quantile optimization problem that we originally set out to solve is then readily seen to be an inversion of this correspondence: given $\alpha$, we search for $t$ such that $\alpha^*\left(t\right) = \alpha$. It is obvious that $\alpha^*\left(t\right)$ is increasing in $t$, and with some mild regularity conditions, it becomes continuous and strictly increasing, so a suitable $t$ always exists and is unique. The following results examine some simple conditions under which these properties hold; the proofs are deferred to the Appendix.

\begin{prop}\label{prop:continuity}
Suppose that, for every $k$, the random variable $c\left(X,k\right)$ has a density. Then, $\alpha^*\left(t\right)$ is continuous in $t$. Moreover, if the density of each $c\left(X,k\right)$ is bounded above, then $\alpha^*\left(t\right)$ is Lipschitz.
\end{prop}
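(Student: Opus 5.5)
The idea is to write $\alpha^*(t)$ as a minimum of finitely many functions of $t$, each of which is continuous (respectively Lipschitz). Continuity and the Lipschitz property then pass to the minimum.

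By Proposition \ref{prop:binary}, the minimum in (\ref{eq:alphastar}) can be restricted to $\psi'\in\{0,1\}^K$. Write $S=\{k:\psi'_k=1\}$. For a binary $\psi'$, the quantity $\max_k 1_{\{c(X,k)\leq t\}}+\psi'_k$ takes values in $\{0,1,2\}$, so it can be decomposed explicitly. It equals $2$ exactly when some $k\in S$ has $c(X,k)\leq t$. It equals $0$ exactly when $S=\emptyset$ and no $k$ has $c(X,k)\leq t$. In every other case it equals $1$. We therefore define
\begin{equation*}
g_S(t) = \mathbb{E}\left(\max_k 1_{\left\{c\left(X,k\right)\leq t\right\}} + 1_{\left\{k\in S\right\}}\right) - \sum_{k\in S} p_k ,
\end{equation*}
and obtain $\alpha^*(t)=\min_{S\subseteq\{1,\dots,K\}} g_S(t)$. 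Each $g_S(t)$ is a constant plus a nonnegative combination of probabilities of the form $F_A(t) = P\left(\min_{k\in A} c(X,k)\leq t\right)$ for subsets $A$. Concretely, $g_S(t) = 1 + P(\exists k\in S: c(X,k)\leq t) - \sum_{k\in S}p_k$ when $S\neq\emptyset$, and $g_\emptyset(t) = P(\exists k: c(X,k)\leq t)$. The minimum is over $2^K$ functions, so it suffices to show that each $F_A$ is continuous, and Lipschitz under the stronger hypothesis.

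For continuity, we apply a union bound to the event $\{\min_{k\in A}c(X,k)\in(s,t]\}$:
\begin{equation*}
0\leq F_A(t)-F_A(s)\leq \sum_{k\in A} P\left(s< c(X,k)\leq t\right),\qquad s<t.
\end{equation*}
Each $c(X,k)$ has a density, so its CDF is continuous. The right-hand side therefore tends to $0$ as $t-s\to 0$, and $F_A$ is continuous. If each density is bounded by $M_k$, the same bound gives $|F_A(t)-F_A(s)|\leq \left(\sum_k M_k\right)|t-s|$. Hence every $g_S$ is Lipschitz with constant at most $\sum_k M_k$.

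Finally, a pointwise minimum of finitely many continuous functions is continuous. A pointwise minimum of finitely many functions that are each Lipschitz with constant $L$ is Lipschitz with the same constant $L$. Applying this to $\min_S g_S$ gives both claims. The step that needs the most care is the reduction to a finite minimum: the minimizer $\tilde\psi$ depends on $t$, so we cannot differentiate through it. Proposition \ref{prop:binary} is what avoids this, because it holds for every $t$ with the same finite candidate set.

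As an alternative in case one prefers not to rely on that, note that the objective in (\ref{eq:alphastar}) changes by at most $P\left(\exists k: s<c(X,k)\leq t\right)$ uniformly in $\psi'$ when $t$ moves from $s$ to $t$. Infima of uniformly close functions are close, and this yields the same estimates directly.
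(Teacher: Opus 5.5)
Your proposal is correct and is essentially the paper's argument: restrict to $\psi'\in\{0,1\}^K$ via Proposition~\ref{prop:binary}, rewrite $\alpha^*(t)$ as the finite minimum in (\ref{eq:finitemin}) of functions built from $P(\min_{k\in A}c(X,k)\leq t)$, and pass continuity and the Lipschitz property through the minimum. Your union bound $F_A(t)-F_A(s)\leq\sum_{k\in A}P(s<c(X,k)\leq t)$ makes explicit the Lipschitz constant that the paper only asserts, and it shows that continuity of these functions suffices, so the paper's claim that they are differentiable is not needed.
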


\begin{prop}\label{prop:monotonicity}
Let $t_1 < t_2$ and suppose that the random vector $\left(c\left(X,1\right),...,c\left(X,K\right)\right)$ has a density that is bounded below by some $C>0$ in a neighborhood of $\left(t_1,...,t_1\right)\in\mathbb{R}^K$. Then, $\alpha^*\left(t_1\right) < \alpha^*\left(t_2\right)$.
\end{prop}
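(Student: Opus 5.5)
The plan is to compare the two minimization problems in (\ref{eq:alphastar}) at a single, common choice of $\psi'$. For each $t$, write
\begin{equation*}
g_t\left(\psi'\right) = \mathbb{E}\left(\max_k 1_{\left\{c\left(X,k\right)\leq t\right\}} + \psi'_k\right)-\sum_k p_k \psi'_k,
\end{equation*}
so that $\alpha^*\left(t\right) = \min_{\psi'} g_t\left(\psi'\right)$. Let $\tilde{\psi}$ be a minimizer for $t_2$; such a minimizer exists, and Proposition \ref{prop:binary} even allows us to take it binary, although that is not needed. Then
\begin{equation*}
\alpha^*\left(t_1\right) \leq g_{t_1}\left(\tilde{\psi}\right) \quad\text{and}\quad \alpha^*\left(t_2\right) = g_{t_2}\left(\tilde{\psi}\right).
\end{equation*}
It therefore suffices to show that $g_{t_2}\left(\tilde{\psi}\right) - g_{t_1}\left(\tilde{\psi}\right) > 0$.

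The linear terms $\sum_k p_k\tilde{\psi}_k$ cancel in this difference. What remains is the expectation of
\begin{equation*}
D = \max_k\left(1_{\left\{c\left(X,k\right)\leq t_2\right\}} + \tilde{\psi}_k\right) - \max_k\left(1_{\left\{c\left(X,k\right)\leq t_1\right\}} + \tilde{\psi}_k\right).
\end{equation*}
Since $t_1<t_2$, each indicator at $t_2$ dominates the corresponding indicator at $t_1$, so $D \geq 0$ pointwise. It is then enough to find an event of positive probability on which $D$ is bounded below by a positive constant.

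Choose $\delta>0$ with $t_1+\delta \leq t_2$, and small enough that the cube $\left(t_1,t_1+\delta\right)^K$ lies inside the neighborhood of $\left(t_1,\ldots,t_1\right)$ where the density of $\left(c\left(X,1\right),\ldots,c\left(X,K\right)\right)$ is at least $C$. Consider the event $E$ that every $c\left(X,k\right)$ lies in $\left(t_1,t_1+\delta\right)$. On $E$, every indicator at $t_1$ equals $0$ and every indicator at $t_2$ equals $1$. Hence on $E$ the two maxima are $1+\max_k\tilde{\psi}_k$ and $\max_k\tilde{\psi}_k$ respectively, so $D = 1$. The density bound gives $P\left(E\right)\geq C\delta^K$. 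Consequently
\begin{equation*}
\alpha^*\left(t_2\right) - \alpha^*\left(t_1\right) \geq \mathbb{E}\left(D\right) \geq \mathbb{E}\left(D\,1_E\right) = P\left(E\right) \geq C\delta^K > 0,
\end{equation*}
which proves the claim.

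The only delicate point is making sure we evaluate at a minimizer for the \emph{larger} threshold $t_2$ and an arbitrary (feasible) point for $t_1$, so that the inequalities run in the right direction. The second point is choosing an event on which the argmax structure is completely determined regardless of $\tilde{\psi}$. The cube just above $\left(t_1,\ldots,t_1\right)$ does exactly this, because it switches all indicators simultaneously. I expect no real obstacle beyond this bookkeeping; existence of the minimizer at $t_2$ is already used implicitly in (\ref{eq:alphastar}).
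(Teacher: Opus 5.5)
Your proof is correct. It shares the paper's outer structure: evaluate both problems at a minimizer for $t_2$ and show that $g_{t_2}(\psi)-g_{t_1}(\psi)$ is bounded away from zero uniformly in $\psi$. Where it differs is in how you obtain that gap.

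The paper restricts to binary $\psi$ via Proposition~\ref{prop:binary}. It then computes the difference exactly as $P(c(X,k)>t_1\ \forall k\in A)-P(c(X,k)>t_2\ \forall k\in A)$, with $A=\{k:\psi_k=1\}$ (or all indices when $\psi\equiv 0$). It deduces strict positivity from lower bounds on the marginal densities of each subvector $(c(X,k))_{k\in A}$. Finally, it uses finiteness of $\{0,1\}^K$ to extract a uniform gap.

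Your single cube event $E$, on which $D=1$ for every $\psi\in\mathbb{R}^K$, avoids all of that: the binary reduction, the case split over $A$, the marginalization step, and the finiteness argument. It also gives the explicit constant $C\delta^K$. Since your bound holds for all $\psi$, you could even skip the minimizer and simply take infima in $g_{t_2}(\psi)\geq g_{t_1}(\psi)+C\delta^K$.

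What the paper's route buys is the exact survival-probability identity. The paper reuses it in Lemma~\ref{lem:technical2} to get a gap linear in $t_2-t_1$, which the rate in Theorem~\ref{thm:conct} requires. Your cube only gives order $(t_2-t_1)^K$ for nearby thresholds. You could restore linear order with the same one-event argument: let only a coordinate $k_0\in\arg\max_k\psi_k$ fall in the thin slab $(t_1,t_2]$, and place the other coordinates in a fixed small cube above $t_1$.
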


\subsection{Characterization of tie-breaking rules}\label{sec:tiebreaking}

In the following, suppose that we already have the desired value of $t$ (the corresponding $\alpha^*$ is simply the target quantile $\alpha$). Thus, we have the optimal value of the quantile optimization problem (\ref{eq:obj}). For practical purposes, however, we wish to also know the optimal \textit{solution}, i.e., the joint distribution $\pi$ of $\left(X,Y\right)$ that achieves the infimum.

We abuse notation slightly by letting $\psi^*_k$ be a minimizer of the right-hand side of (\ref{eq:alphastar}); from Proposition \ref{prop:alphastar}, we know that any such minimizer can be converted into an optimal solution of (\ref{eq:dualobj}), so it is no longer necessary to distinguish between them in the notation. By combining Propositions \ref{prop:howtofeasible} and \ref{prop:alphastar}, we know that a suitable $\pi$ can be obtained from any $Y$ that satisfies
\begin{equation}\label{eq:argmax}
Y\left(\omega\right) \in \arg\max_k 1_{\left\{c\left(X\left(\omega\right),k\right)\leq t\right\}} + \psi^*_k,
\end{equation}
with ties in the argmax broken in a (possibly randomized) manner that satisfies $P\left(Y = k\right) = p_k$. We will now prove that a randomized tie-breaking rule satisfying these conditions always exists, and give a characterization that will later enable its tractable computation.

First, let us observe that, by definition,
\begin{equation*}
\mathbb{E}\left(\max_k 1_{\left\{c\left(X,k\right)\leq t\right\}} + \psi^*_k\right)-p^\top \psi^* \leq \mathbb{E}\left(\max_k 1_{\left\{c\left(X,k\right)\leq t\right\}} + \psi_k\right)-p^\top \psi, \qquad \forall\psi,
\end{equation*}
which may be rewritten as
\begin{equation*}
\mathbb{E}\left(\max_k 1_{\left\{c\left(X,k\right)\leq t\right\}}+ \psi_k\right) - \mathbb{E}\left(\max_k 1_{\left\{c\left(X,k\right)\leq t\right\}}+ \psi^*_k\right) \geq p^\top\left(\psi-\psi^*\right), \qquad \forall\psi.
\end{equation*}
This means precisely that $\psi^*$ is optimal if and only if $p$ is a subgradient of the function
\begin{equation*}
h\left(\psi\right) = \mathbb{E}\left(\max_k 1_{\left\{c\left(X,k\right)\leq t\right\}} + \psi_k\right)
\end{equation*}
at $\psi^*$. In order to eventually establish the existence of an optimal tie-breaking rule, it is necessary to describe all of the subgradients of $h$.

For notational convenience, let $H_k\left(\psi,X\right) = 1_{\left\{c\left(X,k\right)\leq t\right\}} + \psi_k$ and $H\left(\psi,X\right) = \max_k H_k\left(\psi,X\right)$. Then, $h\left(\psi\right) = \mathbb{E}\left(H\left(\psi,X\right)\right)$. Let us first focus on the subgradients of $H\left(\psi,x\right)$, holding $x$ fixed, and then pass to the expectation. We will also use the notation
\begin{equation*}
S\left(\psi,x\right) = \arg\max_k H_k\left(\psi,x\right)
\end{equation*}
so that we do not have to keep writing out the argmax.

For any $k$ and $x$, $H_k\left(\psi,x\right)$ is differentiable with $\nabla_\psi H_k\left(\psi,x\right) = e_k$, where $e_k$ is a vector of zeroes with the $k$th component equal to $1$. By Lemma 3.1.10 of \cite{Ne04}, the subdifferential $\partial H\left(\psi,x\right)$ of $H\left(\psi,x\right)$ at $\psi$ is given by
\begin{eqnarray*}
\partial H\left(\psi,x\right) &=& \conv\left\{e_j: j \in S\left(\psi,x\right)\right\}\\
&=& \left\{v \in \mathbb{R}^K_+:\sum_k v_k = 1, \, v_k = 0 \mbox{ for } k \notin S\left(\psi,x\right)\right\}.
\end{eqnarray*}
In words, $\partial H\left(\psi,x\right)$ is the probability simplex on only those coordinates $k$ that are elements of the argmax. To put it differently, $v \in \partial H\left(\psi,x\right)$ represents a discrete probability distribution on the elements of $S\left(\psi,x\right)$. Note that the subdifferential only depends on $x$ through $S\left(\psi,x\right)$.

By the main theorem of \cite{RoWe82}, the subdifferential of $h$ is given by $\partial h\left(\psi\right) = \mathbb{E}\left(\partial H\left(\psi,X\right)\right)$. Here, we are referring to the Aumann expectation \citep{Au65} of a random set $\partial H\left(\psi,X\right)$ whose realization is determined by $S\left(\psi,X\right)$; in other words, the expectation may be taken over the discrete distribution of $S\left(\psi,X\right)$, rather than the continuous distribution of $X$. Formally, $w \in \partial h\left(\psi\right)$ if and only if we can write
\begin{equation}\label{eq:wexp}
w = \sum_{S \subseteq \left\{1,...,K\right\}} Q\left(\psi,S\right) v_S,
\end{equation}
where
\begin{equation*}
Q\left(\psi,S\right) = P\left(S = \arg\max_k 1_{\left\{c\left(X,k\right)\leq t\right\}} + \psi_k\right),
\end{equation*}
and $v\left(S\right) \in \mathbb{R}^K_+$ satisfies $\sum_k v_k\left(S\right) = 1$, $v_k = 0$ for $k \notin S$. We may restrict the summation set in (\ref{eq:wexp}) to only those $S$ for which $Q\left(\psi,S\right) > 0$.

Now recall from before that $p \in \partial h\left(\psi^*\right)$. Therefore, by (\ref{eq:wexp}), for every $S$ with $Q\left(\psi^*,S\right) > 0$, we may find $v_S$ such that $p = \sum_S Q\left(\psi,S\right) v_S$. Each vector $v_S$ is a probability distribution, and this distribution is precisely our randomized tie-breaking rule: we first observe $X$ and therefore $S\left(\psi^*,X\right)$, and then let $P\left(Y = k\mid S\left(\psi^*,X\right)\right) = v_{S\left(\psi^*,X\right),k}$. Thus, (\ref{eq:argmax}) is satisfied because $v_{S\left(\psi^*,X\right),k} > 0$ only for $k \in S\left(\psi^*,X\right)$, and
\begin{equation*}
P\left(Y = k\right) = \mathbb{E}\left(P\left(Y = k\mid S\left(\psi^*,X\right)\right)\right) = \sum_S Q\left(\psi^*,S\right) P\left(Y = k\mid S\left(\psi^*,X\right) = S\right) = p_k,
\end{equation*}
where the last equality follows by (\ref{eq:wexp}).

We have now established that a suitable tie-breaking rule always exists. However, in order to compute it, we must find a probability distribution $v_S$ for each $S$ such that $p = \sum_S Q\left(\psi^*,S\right) v_S$. This is difficult, in part because the number of distributions to be computed is combinatorially large, and in part because the probabilities $Q\left(\psi^*,S\right)$ are computationally challenging; however, in Section \ref{sec:comptie}, we will present a computationally tractable scheme that does not require knowledge of $Q$ or enumeration of the possible $S$.

\section{Algorithms and performance guarantees}\label{sec:comp}

Our problem poses two algorithmic challenges. The first is to find the root $\alpha^*\left(t\right) = \alpha$ for given $\alpha$, as described in Section \ref{sec:quantile}, and the second is to compute the tie-breaking rule using the characterization in Section \ref{sec:tiebreaking}. Both involve the computation of difficult integrals: in the first case, we must deal with a difficult expectation in (\ref{eq:alphastar}), and in the second case, we must deal with a combinatorially large number of difficult probabilities in (\ref{eq:wexp}).

As discussed in Section \ref{sec:intro}, we use a simulation-based approach where independent data $\left\{X^n\right\}^N_{n=1}$ are generated from the distribution of $X$. We will use sample average approximation (SAA) to solve the root-finding problem and learn the optimal quantile, and we will use stochastic approximation (SA) to learn the tie-breaking rule. These algorithms are presented and studied in Sections \ref{sec:compquantile} and \ref{sec:comptie}, respectively.

\subsection{Learning the optimal quantile}\label{sec:compquantile}

As discussed in Section \ref{sec:quantile}, our task is to solve $\alpha^*\left(t\right) = \alpha$ given some prespecified $\alpha$. Because the computation of $\alpha^*\left(t\right)$ involves a difficult expectation, we can view this as a stochastic root-finding problem \citep{PaKi11}. Such problems are often solved using the stochastic approximation (SA) method \citep{KuYi03}. SA has previously been applied \citep{GeCuPeBa16,ZhRy26} to solve semidiscrete optimal transport problems with the standard objective of minimizing expected costs. However, in the setting of quantile optimization, we have two ``layers'' of root-finding rather than just one, because $\alpha^*\left(t\right)$ for fixed $t$ is itself the optimal value of a stochastic optimization problem. While variants of SA do exist for such settings \citep{Bo97}, their theoretical guarantees \citep{Do22} are weaker, or require much stronger assumptions, than what is available for standard SA. Moreover, our case has the additional complication that, in (\ref{eq:alphastar}), the expression inside the expected value is not pathwise differentiable due to the possibility of ties between different indices $k$.

For all of these reasons, we do not attempt to learn the optimal quantile through SA. Instead, we use the sample average approximation (SAA) method \citep{KiPaHe14,ShDeRu21}, which replaces the expectation in (\ref{eq:alphastar}) by its empirical equivalent given a sample $\left\{X^n\right\}^N_{n=1}$. The optimization problem over $\psi$ can then be reformulated as a linear program by introducing additional decision variables $z_n$, $n=1,...,N$. For fixed $t$, we solve
\begin{equation*}
\hat{\alpha}^N\left(t\right) = \min_{\psi_k,z_n} \frac{1}{N}\sum^N_{n=1} z_n - \sum_k p_k \psi_k
\end{equation*}
subject to
\begin{equation*}
z_n \geq 1_{\left\{c\left(X^n,k\right)\leq t\right\}} + \psi_k, \qquad n=1,...,N, \; k = 1,...,K.
\end{equation*}
This LP has $K+N$ decision variables and $K\cdot N$ constraints, and can be tractable for a fairly large sample size due to the availability of large-scale LP solvers.

However, we must solve a problem like this for every $t$ value that we might consider. Since $\psi$ is bounded by Proposition \ref{prop:binary} and $\hat{\alpha}^N$ only involves an expectation over a finite distribution, the optimal value is bounded. Then, since $\hat{\alpha}^N\left(t\right)$ is increasing in $t$, we can solve $\hat{\alpha}^N\left(t\right) = \alpha$ through the bisection method. Because we are using a sample average instead of the true expectation, $\hat{\alpha}^N\left(t\right)$ may have jumps and a root may not exist; in this case, the procedure will converge to the value
\begin{equation}\label{eq:hatroot}
\hat{t}^N = \inf\left\{t:\hat{\alpha}^N\left(t\right) \geq \alpha\right\}.
\end{equation}
Note that every iteration of the bisection method works with the same sample, i.e., we may approximate $\hat{t}^N$ to arbitrary precision without increasing $N$. The consistency of $\hat{\alpha}^N$ can be obtained through standard techniques \citep{ShDeRu21}. However, we can go further and characterize the convergence rates of $\hat{\alpha}^N$ and $\hat{t}^N$ under the same assumptions as in Propositions \ref{prop:continuity}-\ref{prop:monotonicity}. The proofs of the following results can be found in the Appendix.

\begin{thm}\label{thm:conc}
Suppose that, for every $k$, the random variable $c\left(X,k\right)$ has a density. Then, there exist $C_1,C_2 > 0$ such that
\begin{equation*}
P\left(\left|\alpha^*\left(\hat{t}^N\right) - \alpha\right| > \frac{w}{\sqrt{N}}\right) \leq C_1 e^{-C_2 w^2}
\end{equation*}
for all $w > 0$.
\end{thm}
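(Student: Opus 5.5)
Our plan is to reduce the statement to a uniform deviation bound between $\hat{\alpha}^N$ and $\alpha^*$ over all thresholds $t$, and then transfer that bound to $\alpha^*(\hat{t}^N)$ using the definition (\ref{eq:hatroot}) together with the continuity in Proposition \ref{prop:continuity}.

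\emph{Step 1: reduction to finitely many $\psi$.} By Proposition \ref{prop:binary}, for every $t$ the minimum in (\ref{eq:alphastar}) is attained on $\psi\in\{0,1\}^K$. The same proof should apply verbatim to the empirical distribution, since it only uses the structure of the integrand and not absolute continuity. Hence
\begin{equation*}
\alpha^*(t)=\min_{\psi\in\{0,1\}^K} g_\psi(t),\qquad \hat{\alpha}^N(t)=\min_{\psi\in\{0,1\}^K}\hat{g}^N_\psi(t),
\end{equation*}
where $g_\psi(t)=\mathbb{E}\bigl(\max_k 1_{\{c(X,k)\leq t\}}+\psi_k\bigr)-p^\top\psi$ and $\hat g^N_\psi$ is its sample analogue. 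Consequently
\begin{equation*}
\sup_t|\hat{\alpha}^N(t)-\alpha^*(t)|\leq \max_{\psi}\sup_t|\hat g^N_\psi(t)-g_\psi(t)|.
\end{equation*}

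\emph{Step 2: uniform concentration.} For fixed binary $\psi$, the integrand $\max_k(1_{\{c(x,k)\leq t\}}+\psi_k)$ takes values in $\{0,1,2\}$. It can be written as $\max_k \psi_k + 1_{\{x\in A_\psi(t)\}}$, where $A_\psi(t)=\bigcup_{k:\psi_k=\max\psi}\{c(x,k)\leq t\}$. The sets $A_\psi(t)$ are increasing in $t$, so $1_{A_\psi(t)}(X)=1_{\{m_\psi(X)\leq t\}}$ with $m_\psi(X)=\min_{k:\psi_k=\max\psi}c(X,k)$. The supremum over $t$ is therefore a one-dimensional Kolmogorov--Smirnov statistic for $m_\psi(X)$. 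The DKW inequality gives $P(\sup_t|\cdot|>\varepsilon)\leq 2e^{-2N\varepsilon^2}$, and a union bound over the $2^K$ choices of $\psi$ yields
\begin{equation*}
P\Bigl(\sup_t|\hat\alpha^N(t)-\alpha^*(t)|>w/\sqrt N\Bigr)\leq 2^{K+1}e^{-2w^2}.
\end{equation*}

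\emph{Step 3: transfer to $\hat t^N$.} Let $\delta=\sup_t|\hat\alpha^N-\alpha^*|$.

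For the upper deviation, $\hat\alpha^N$ is nondecreasing and right-continuous: each $\hat g^N_\psi$ is a right-continuous step function, and a finite minimum preserves this. Hence $\hat\alpha^N(\hat t^N)\geq\alpha$. For any $t<\hat t^N$ we have $\hat\alpha^N(t)<\alpha$, so $\alpha^*(t)<\alpha+\delta$. By continuity of $\alpha^*$ (Proposition \ref{prop:continuity}), $\alpha^*(\hat t^N)\leq\alpha+\delta$.

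For the lower deviation, $\alpha^*(\hat t^N)\geq\hat\alpha^N(\hat t^N)-\delta\geq\alpha-\delta$.

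Thus $|\alpha^*(\hat t^N)-\alpha|\leq\delta$, and the claim follows with $C_1=2^{K+1}$ and $C_2=2$. Two edge cases need care: $\hat t^N$ could be $\pm\infty$, which can be ruled out or handled because $\alpha\in(0,1)$ while $\hat\alpha^N\to$ its extreme values; and the density assumption is used only to obtain continuity of $\alpha^*$.

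The main obstacle is Step 1's claim that binary minimizers also exist for the empirical problem, together with the reduction of the index over $t$ to a single monotone family. If the appendix proof of Proposition \ref{prop:binary} relies on the density, I would instead bound $\psi$ in a compact box and use a VC-type uniform bound over $(\psi,t)$. The class $\{x\mapsto\max_k(1_{\{c(x,k)\leq t\}}+\psi_k)\}$ has finite pseudo-dimension, so Talagrand's inequality gives the same sub-Gaussian tail with different constants.
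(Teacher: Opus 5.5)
Your proof is correct. It shares the paper's skeleton: restrict to $\psi\in\{0,1\}^K$, bound $\|\hat{\alpha}^N-\alpha^*\|_\infty$ uniformly in $t$, then sandwich $\alpha^*(\hat{t}^N)$. It differs from the paper in both technical steps.

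For the concentration bound, the paper proves Lemma \ref{lem:technical} by showing the integrands form a VC-subgraph class and invoking Thm.~2.14.9 of van der Vaart and Wellner. You instead use the identity $\max_k(1_{\{c(x,k)\leq t\}}+\psi_k)=\max_k\psi_k+1_{\{m_\psi(x)\leq t\}}$, which the paper itself uses to derive (\ref{eq:finitemin}) in the proof of Proposition \ref{prop:continuity}. This turns each deviation into a one-dimensional Kolmogorov--Smirnov statistic. The DKW inequality with Massart's constant plus a union bound then gives explicit constants $C_1=2^{K+1}$, $C_2=2$, and the argument is more elementary.

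For the upper deviation, the paper uses the density to make sample costs a.s.\ distinct, bounds the jumps of $\hat{\alpha}^N$ by $1/N$, and obtains $|\alpha^*(\hat{t}^N)-\alpha|\leq\|\hat{\alpha}^N-\alpha^*\|_\infty+1/N$, absorbing the $1/N$ into the constants. You use the density only through the continuity of $\alpha^*$ and let $t\uparrow\hat{t}^N$, which removes the $1/N$ term entirely. You also supply the right-continuity of $\hat{\alpha}^N$, which the paper's step ``$\hat{\alpha}^N(\hat{t}^N)\geq\alpha$ by definition'' tacitly needs.

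The obstacle you flag is not one. The proof of Proposition \ref{prop:binary} depends only on the law of the binary vector $(1_{\{c(X,k)\leq t\}})_k$, via a totally unimodular LP, so it applies verbatim to the empirical measure. The paper relies on this too when it writes $\hat{\alpha}^N(t)$ as a minimum over $\{0,1\}^K$. Your fallback (a VC-type bound plus a Talagrand-type inequality) is essentially the paper's own route.

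Your edge case is immediate. $\hat{\alpha}^N(t)=0$ for $t$ below all sample costs and $\hat{\alpha}^N(t)=1$ above them, so $\hat{t}^N$ is a.s.\ finite because $\alpha\in(0,1)$.
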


\begin{thm}\label{thm:conct}
Suppose that, for every $k$, the random variable $c\left(X,k\right)$ has a density. Suppose also that $t^*$ satisfies $\alpha^*\left(t^*\right) = \alpha$ and that the random vector $\left(c\left(X,1\right),...,c\left(X,K\right)\right)$ has a density that is bounded below by some $C>0$ in a neighborhood of $\left(t^*,...,t^*\right)\in\mathbb{R}^K$. Then, there exist $C_3,C_4 > 0$ such that
\begin{equation*}
P\left(\left|\hat{t}^N - t^*\right| > \delta\right) \leq C_3 e^{-C_4 \delta^2 N}
\end{equation*}
for all sufficiently small $\delta > 0$.
\end{thm}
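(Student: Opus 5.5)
The plan is to combine two ingredients: a \emph{quantitative} version of Proposition \ref{prop:monotonicity}, namely a linear growth rate of $\alpha^*$ near $t^*$, and a \emph{pointwise} Hoeffding bound for $\hat{\alpha}^N$ at two fixed thresholds $t^*\pm\delta$. No uniform-in-$t$ concentration should be needed.

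\textbf{Step 1: a finite representation of $\alpha^*$ and $\hat{\alpha}^N$.} By Proposition \ref{prop:binary}, the minimum in (\ref{eq:alphastar}) may be taken over $\psi\in\{0,1\}^K$; the same argument applies verbatim to the empirical distribution, so it also holds for $\hat{\alpha}^N$. Identify a binary $\psi$ with the set $A=\{k:\psi_k=1\}$. For nonempty $A$ one has $\max_k 1_{\{c(x,k)\leq t\}}+\psi_k = 1+1_{\{\min_{k\in A}c(x,k)\leq t\}}$. Hence
\begin{equation*}
\alpha^*(t)=\min_{A}\Big[P\big(\min_{k\in A'}c(X,k)\leq t\big)+b_A\Big],
\end{equation*}
where $b_A$ is a constant independent of $t$, and $A'=A$ if $A\neq\emptyset$ while $A'=\{1,\dots,K\}$ if $A=\emptyset$. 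The same formula with $P$ replaced by the empirical measure gives $\hat{\alpha}^N(t)$. Consequently
\begin{equation*}
\left|\hat{\alpha}^N(t)-\alpha^*(t)\right|\leq\max_A\left|\hat{P}_N\big(\min_{k\in A'}c(X,k)\leq t\big)-P\big(\min_{k\in A'}c(X,k)\leq t\big)\right|.
\end{equation*}
For each fixed $t$, Hoeffding's inequality and a union bound over the $2^K$ sets then give
\begin{equation*}
P\left(\left|\hat{\alpha}^N(t)-\alpha^*(t)\right|>\varepsilon\right)\leq 2^{K+1}e^{-2N\varepsilon^2}.
\end{equation*}

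\textbf{Step 2: linear growth of $\alpha^*$ near $t^*$.} This is the main point, since Proposition \ref{prop:monotonicity} only gives strict monotonicity. Suppose the joint density is at least $C$ on the cube $[t^*-\eta,t^*+\eta]^K$, and let $t^*-\eta\leq t_1<t_2\leq t^*+\eta$. Fix any nonempty $A'$ and any $j\in A'$. Each term in the minimum increases from $t_1$ to $t_2$ by
\begin{equation*}
P\big(\min_{k\in A'}c(X,k)\in(t_1,t_2]\big)\;\geq\; P\big(c(X,j)\in(t_1,t_2],\; c(X,k)\in(t_2,t^*+\eta]\ \forall k\neq j\big).
\end{equation*}
Taking $t_2\leq t^*+\eta/2$, the right-hand side is at least $C(t_2-t_1)(\eta/2)^{K-1}=:c_0(t_2-t_1)$. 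Since every term of the minimum increases by at least $c_0(t_2-t_1)$, so does their minimum. This gives $\alpha^*(t_2)-\alpha^*(t_1)\geq c_0(t_2-t_1)$ for all $t_1<t_2$ within $\eta/2$ of $t^*$. It is here that \emph{small} $\delta$ (namely $\delta<\eta/2$) enters.

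\textbf{Step 3: reduce $\{|\hat{t}^N-t^*|>\delta\}$ to deviations at two fixed points.} Monotonicity of $\hat{\alpha}^N$ and the definition (\ref{eq:hatroot}) handle both tails.
\begin{itemize}
\item If $\hat{t}^N>t^*+\delta$, then $\hat{\alpha}^N(t^*+\delta)<\alpha$. By Step 2, $\alpha^*(t^*+\delta)\geq\alpha+c_0\delta$, so the deviation at $t^*+\delta$ exceeds $c_0\delta$.
\item If $\hat{t}^N<t^*-\delta$, then there is $t<t^*-\delta$ with $\hat{\alpha}^N(t)\geq\alpha$. Monotonicity gives $\hat{\alpha}^N(t^*-\delta)\geq\alpha$, while $\alpha^*(t^*-\delta)\leq\alpha-c_0\delta$, so the deviation at $t^*-\delta$ is at least $c_0\delta$.
\end{itemize}
Applying Step 1 at these two points yields
\begin{equation*}
P\left(|\hat{t}^N-t^*|>\delta\right)\leq 2^{K+2}e^{-2c_0^2\delta^2N},
\end{equation*}
which is the claim with $C_3=2^{K+2}$ and $C_4=2c_0^2$. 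Finiteness of all costs, so that $\alpha^*\to 1$ as $t\to\infty$, is not even needed, because only a neighborhood of $t^*$ is used.

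The step I expect to need the most care is Step 2: one must verify that the constant $c_0$ is uniform over all sets $A$, and that the minimum in (\ref{eq:alphastar}) can legitimately be restricted to binary $\psi$ for the empirical problem as well. If the latter fails, the fallback is to note that Step 1 only needs the bound $\left|\hat{\alpha}^N(t)-\alpha^*(t)\right|\leq\sup_\psi$ of the difference between the empirical and true objectives. One can then restrict to a bounded set of $\psi$, using bounds like those in Theorem \ref{thm:conc}, and use a covering argument in place of the union over $2^K$ sets. That changes only the constants $C_3$ and $C_4$.
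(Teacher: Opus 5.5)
Your proof is correct, but it takes a different route from the paper. The paper works with the uniform deviation $\|\hat{\alpha}^N-\alpha^*\|_\infty$. It sets $b(\delta)=\min\{\alpha^*(t^*+\delta)-\alpha,\ \alpha-\alpha^*(t^*-\delta)\}$ and shows that $\|\hat{\alpha}^N-\alpha^*\|_\infty\leq\frac{1}{2}b(\delta)$ forces $|\hat{t}^N-t^*|\leq\delta$. It then controls that supremum through Lemma \ref{lem:technical}, which uses VC-subgraph classes and a sub-Gaussian tail bound for suprema of empirical processes, and proves $b(\delta)\geq D\delta$ in Lemma \ref{lem:technical2}.

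Your Step 2 is the same linear-growth idea as Lemma \ref{lem:technical2}: each binary-$\psi$ term grows linearly, so the minimum does too. You do it more economically, though. The single event $\{c(X,j)\in(t_1,t_2],\ c(X,k)\in(t_2,t^*+\eta]\ \forall k\neq j\}$ covers every $A$ and both sides of $t^*$ at once, with no detour through marginal densities.

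The genuine difference is Step 3. Since $\hat{\alpha}^N$ is nondecreasing and $\hat{t}^N$ is its generalized inverse at level $\alpha$, both tail events reduce to statements about $\hat{\alpha}^N$ at the two deterministic points $t^*\pm\delta$. Plain Hoeffding with a union bound over $2^K$ sets is therefore enough. It gives explicit constants $C_3=2^{K+2}$ and $C_4=2c_0^2$, with $c_0=C(\eta/2)^{K-1}$, and avoids empirical process theory entirely. What the paper's heavier tool buys is reuse: the same uniform bound drives Theorems \ref{thm:conc}, \ref{thm:decision} and \ref{thm:subset}, where empirical and true objectives are compared at the random point $\hat{t}^N$.

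Finally, you can drop your fallback about binary $\psi$ in the empirical problem. The proof of Proposition \ref{prop:binary} uses nothing about the law of $X$ beyond the indicator vector lying in $\{0,1\}^K$. The paper itself writes $\hat{\alpha}^N(t)$ as a minimum over $\{0,1\}^K$ in the proof of Theorem \ref{thm:conc}.
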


A consequence of Theorem \ref{thm:conct} is that, with high probability, the SAA solution will give us $\psi$ that is exact for the estimated root $\hat{t}^N$, in the sense of minimizing the right-hand side of (\ref{eq:alphastar}) with $t = \hat{t}^N$. In other words, all of the SAA estimation error will come from $\hat{t}^N$, not the $\psi$ value obtained from the LP.

\begin{thm}\label{thm:decision}
Define the sets
\begin{eqnarray}
\Psi^*\left(t\right) &=& \arg\min_{\psi\in\left\{0,1\right\}^K} \mathbb{E}\left(\max_k 1_{\left\{c\left(X,k\right)\leq t\right\}} + \psi_k\right) - p^\top \psi,\label{eq:Psistar}\\
\hat{\Psi}^N\left(t\right) &=& \arg\min_{\psi\in\left\{0,1\right\}^K} \frac{1}{N}\sum^N_{n=1}\left(\max_k 1_{\left\{c\left(X^n,k\right)\leq t\right\}} + \psi_k\right) - p^\top \psi.\label{eq:Psihat}
\end{eqnarray}
Suppose that we are in the situation of Theorem \ref{thm:conct}. Then, there exist $C_5,C_6 > 0$ such that
\begin{equation*}
P\left(\hat{\Psi}^N\left(\hat{t}^N\right) \neq \Psi^*\left(\hat{t}^N\right)\right) \leq C_5 e^{-C_6 N}.
\end{equation*}
\end{thm}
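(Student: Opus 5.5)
The plan is to combine three ingredients: the exponential localization of $\hat{t}^N$ from Theorem \ref{thm:conct}, a uniform-in-$t$ concentration bound for the empirical objective over the finite set $\{0,1\}^K$, and a strictly positive optimality gap near $t^*$ separating $\Psi^*(t)$ from the other binary vectors. Write $g(\psi,t)=\mathbb{E}\left(\max_k 1_{\{c(X,k)\leq t\}}+\psi_k\right)-p^\top\psi$ and let $\hat{g}^N(\psi,t)$ be its sample analog in (\ref{eq:Psihat}). Since $\hat{t}^N$ is random and depends on the same sample, the comparison between $\hat{g}^N$ and $g$ must hold uniformly over $t$ in a neighborhood of $t^*$.

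\textbf{Step 1 (uniform concentration).} For fixed $\psi\in\{0,1\}^K$, the value $\max_k 1_{\{c(x,k)\leq t\}}+\psi_k$ is determined by which of the events $\{c(x,k)\leq t\}$ occur. Hence it is a finite linear combination, with coefficients bounded by $2$, of indicators of sets of the form $\bigcap_{k\in A}\{c(x,k)\leq t\}\cap\bigcap_{k\notin A}\{c(x,k)>t\}$. Indexed by $t$, each such family has VC dimension bounded in terms of $K$ (it is a Boolean combination of $K$ one-parameter monotone threshold classes). A DKW/VC-type inequality then gives
\begin{equation*}
P\left(\sup_t\left|\hat{g}^N(\psi,t)-g(\psi,t)\right|>\varepsilon\right)\leq C e^{-cN\varepsilon^2},
\end{equation*}
and a union bound over the $2^K$ vectors $\psi$ keeps this form.

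\textbf{Step 2 (localization and gap).} By Theorem \ref{thm:conct}, $|\hat{t}^N-t^*|\leq\delta$ except on an event of probability at most $C_3e^{-C_4\delta^2N}$, for fixed small $\delta$. Some pairs $\psi\neq\psi'$ are tied identically: for instance $\psi$ and $\psi+\mathbf{1}$ both lie in $\{0,1\}^K$ when $\psi=0$, and their objective differences vanish pathwise. Such pairs are tied in $\hat{g}^N$ as well and never cause a discrepancy, so I will call $\psi,\psi'$ equivalent when $H(\psi,X)-p^\top\psi=H(\psi',X)-p^\top\psi'$ almost surely, for all $t$ near $t^*$. The claim to establish is that there exist $\delta,\eta>0$ such that for all $t\in[t^*-\delta,t^*+\delta]$, every $\psi\notin\Psi^*(t)$ satisfies $g(\psi,t)\geq\min g(\cdot,t)+\eta$ unless $\psi$ is equivalent to a minimizer. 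I would derive this from the continuity of $g(\psi,\cdot)$ (the argument of Proposition \ref{prop:continuity} applies to each fixed $\psi$) together with the density lower bound at $(t^*,\dots,t^*)$. That lower bound should force any non-equivalent pair to have expected objectives that differ, with strict inequality, by an amount bounded away from zero on the neighborhood.

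\textbf{Step 3 (combination).} On the intersection of the Step 1 event with $\varepsilon=\eta/3$ and the Step 2 localization event, every non-equivalent suboptimal $\psi$ has $\hat{g}^N(\psi,\hat{t}^N)>\hat{g}^N(\psi^*,\hat{t}^N)$ for $\psi^*\in\Psi^*(\hat{t}^N)$. Equivalent vectors tie exactly in both $g$ and $\hat{g}^N$. Hence $\hat{\Psi}^N(\hat{t}^N)=\Psi^*(\hat{t}^N)$, and the failure probability is at most $Ce^{-cN\eta^2/9}+C_3e^{-C_4\delta^2N}$, which has the form $C_5e^{-C_6N}$.

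The main obstacle is Step 2. It requires showing that, near $t^*$, the only exact ties in $g$ among binary vectors are the pathwise (structural) ones, and that the remaining gaps are bounded below uniformly in $t$. If a non-structural tie held exactly at some $t$, the sample would generically break it and the two argmin sets would differ with non-vanishing probability. I would first try to show, using the density lower bound, that $g(\psi,t)-g(\psi',t)$ for non-equivalent pairs is strictly monotone in $t$ across the neighborhood. If that fails, I would shrink $\delta$ so that the neighborhood contains no such crossing other than possibly at $t^*$ itself, and treat that point separately.
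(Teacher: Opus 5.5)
Your proof has a genuine gap at Step~2. The rest of your plan follows the paper's: localization of $\hat t^N$ via Theorem~\ref{thm:conct}, a uniform-in-$t$ exponential bound for the finitely many empirical objectives (your Step~1 is essentially Lemma~\ref{lem:technical}), and a uniform optimality gap near $t^*$, with Steps~1 and~3 fine as written.

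Step~2 cannot be closed from the stated hypotheses, because the density lower bound at $(t^*,\dots,t^*)$ does not stop two non-equivalent binary vectors from being tied exactly at $t^*$. Take $K=2$, $c(X,1),c(X,2)$ independent uniform on $[0,1]$, and $p=(0.8,0.2)$. Then
\begin{itemize}
\item $g(\mathbf{0},t)=g(\mathbf{1},t)=2t-t^2$;
\item $g((1,0),t)=0.2+t$ and $g((0,1),t)=0.8+t$;
\item the first two cross, both minimal, at $t_c=(1-\sqrt{0.2})/2$.
\end{itemize}
With $\alpha=\alpha^*(t_c)$, every hypothesis of Theorem~\ref{thm:conct} holds with $t^*=t_c$. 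Yet $\Psi^*(t)=\{\mathbf{0},\mathbf{1}\}$ just below $t^*$ and $\{(1,0)\}$ just above, with gap $\approx\sqrt{0.2}\,|t-t^*|$.

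This difference is strictly monotone in $t$, so your monotonicity idea does not help. Two quantities are both of exact order $N^{-1/2}$: $\hat t^N-t^*$, and the sampling error of $\hat g^N(\mathbf{0},\cdot)-\hat g^N((1,0),\cdot)$, which is the empirical mean of $1_{\{c(X,2)\le t<c(X,1)\}}-0.2$ with variance $0.16$ at $t^*$. A CLT argument then shows that $P(\hat\Psi^N(\hat t^N)\neq\Psi^*(\hat t^N))$ converges to a positive constant. Shrinking $\delta$ cannot fix this. ``Treating $t^*$ separately'' is moot, since $\hat t^N$ is one of the sample costs and so $\hat t^N\neq t^*$ almost surely. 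The statement itself fails in this example.

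The paper's proof has the same hole. It asserts $\inf_{|t-t^*|\le\delta}B(t)\ge B_0>0$ ``by continuity,'' where $B(t)=\min_{\psi\notin\Psi^*(t)}g(\psi,t)-\alpha^*(t)$. Continuity of the finitely many $g(\psi,\cdot)$ only gives this when $\Psi^*(t)=\Psi^*(t^*)$ near $t^*$. Its conclusion $\hat\Psi^N(t)=\Psi^*(t)$ further requires the members of $\Psi^*(t)$ to be tied in the sample, that is, pathwise.

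What is missing is a nondegeneracy hypothesis: $\Psi^*(t^*)$ must be a single pathwise class in your sense. Under the density assumption this means a singleton or $\{\mathbf{0},\mathbf{1}\}$; the latter is the only pathwise-tied pair, because $\sum_k p_k=1$.

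Under that hypothesis Step~2 is immediate. Let $\eta_0=B(t^*)>0$ and choose $\delta$ so that $|g(\psi,t)-g(\psi,t^*)|<\eta_0/4$ for all $\psi$ and all $|t-t^*|\le\delta$. Then on that interval $\Psi^*(t)=\Psi^*(t^*)$, and every other $\psi$ has gap above $\eta_0/2$. Your Steps~1 and~3 then finish the proof with $\varepsilon=\eta_0/6$.
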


Recall from (\ref{eq:argmax}) that the joint distribution $\pi$ is designed by computing $Y$ according to a criterion that depends on both $t$ and $\psi$. If $t=\hat{t}^N$, but $\psi\in\Psi^*\left(\hat{t}^N\right)$ with a proper tie-breaking rule, we will obtain a joint distribution that is feasible and has $\hat{t}^N$ as its $\alpha^*\left(\hat{t}^N\right)$-quantile. The only source of error will then be $\hat{t}^N$. Theorem \ref{thm:decision} tells us that we find $\psi$ that is ``correct'' for the estimated quantile $\hat{t}^N$ with high probability.

If we slightly strengthen the conditions of Theorem \ref{thm:decision} (essentially making $\alpha^*\left(t\right)$ Lipschitz as per Proposition \ref{prop:continuity}), we can further show that $\hat{\Psi}^N\left(\hat{t}^N\right) \subseteq \Psi^*\left(t^*\right)$ with high probability. We will use this result in Section \ref{sec:comptie} to assess how close our estimated $\pi$ is to the optimal transport plan.

\begin{thm}\label{thm:subset}
Let $\Psi^*$ and $\hat{\Psi}^N$ be as in (\ref{eq:Psistar})-(\ref{eq:Psihat}). Suppose that, for every $k$, the random variable $c\left(X,k\right)$ has a density that is bounded above. Suppose also that $t^*$ satisfies $\alpha^*\left(t^*\right) = \alpha$ and that the random vector $\left(c\left(X,1\right),...,c\left(X,K\right)\right)$ has a density that is bounded below by some $C>0$ in a neighborhood of $\left(t^*,...,t^*\right)\in\mathbb{R}^K$. Then, there exist $C_7,C_8 > 0$ such that
\begin{equation*}
P\left(\hat{\Psi}^N\left(\hat{t}^N\right) \subseteq \Psi^*\left(t^*\right)\right) \geq 1 - C_7 e^{-C_8 N}.
\end{equation*}
\end{thm}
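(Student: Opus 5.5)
The plan is to work entirely over the finite set $\{0,1\}^K$ (legitimate by Proposition \ref{prop:binary}) and to combine three things: uniform concentration of the empirical objective, a uniform positive gap between optimal and suboptimal values on a small interval around $t^*$, and the localization of $\hat{t}^N$ from Theorem \ref{thm:conct}. For $\psi\in\{0,1\}^K$ write
\begin{equation*}
g(\psi,t)=\mathbb{E}\left(\max_k 1_{\left\{c\left(X,k\right)\leq t\right\}}+\psi_k\right)-p^\top\psi,
\end{equation*}
and let $\hat g^N(\psi,t)$ be its sample analog. Since $\Psi^*(t^*)$ is a nonempty subset of a finite set, one of two cases holds. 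If $\Psi^*(t^*)=\{0,1\}^K$, the claim is trivial. Otherwise the gap
\begin{equation*}
\Delta=\min_{\psi\notin\Psi^*(t^*)}g(\psi,t^*)-\alpha^*(t^*)
\end{equation*}
is strictly positive. The goal is to show that, on an event of probability at least $1-C_7e^{-C_8N}$, every $\psi\notin\Psi^*(t^*)$ has $\hat g^N(\psi,\hat t^N)>\min_{\psi'}\hat g^N(\psi',\hat t^N)$, so that no such $\psi$ lies in $\hat\Psi^N(\hat t^N)$.

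\textbf{Step 1 (continuity in $t$ uniformly over $\psi$).} For fixed $\psi$, $|g(\psi,t)-g(\psi,t')|\le \mathbb{E}\max_k|1_{\{c(X,k)\le t\}}-1_{\{c(X,k)\le t'\}}|\le\sum_k P(c(X,k)\in(t\wedge t',t\vee t'])\le L|t-t'|$. Here $L$ comes from the upper bound on the densities of the $c(X,k)$, exactly as in Proposition \ref{prop:continuity}. Choose $\delta$ with $L\delta<\Delta/4$, also small enough for Theorem \ref{thm:conct}. Then for $|t-t^*|\le\delta$, every $\psi\notin\Psi^*(t^*)$ and every $\psi^\circ\in\Psi^*(t^*)$ satisfy $g(\psi,t)-g(\psi^\circ,t)\ge\Delta/2$.

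\textbf{Step 2 (uniform concentration in $t$, the main obstacle).} We need $\sup_{t}\max_{\psi\in\{0,1\}^K}|\hat g^N(\psi,t)-g(\psi,t)|\le\Delta/4$ with probability at least $1-Ce^{-cN}$. The difficulty is that $\hat t^N$ is random and depends on the sample, so a pointwise Hoeffding bound at a fixed $t$ does not apply directly. The first thing to try is to note that for each $\psi$ the summand $\max_k(1_{\{c(X,k)\le t\}}+\psi_k)$ is a function only of the $K$-dimensional indicator vector $(1_{\{c(X,k)\le t\}})_k$. Its empirical average is therefore a fixed finite linear combination of empirical probabilities of events of the form $\{c(X,k)\le t\ \forall k\in A,\ c(X,k)>t\ \forall k\notin A\}$, for $A\subseteq\{1,\dots,K\}$. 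Each such event is a Boolean combination of half-lines in the single parameter $t$. A DKW/VC-type inequality for the class indexed by $t$ then gives the exponential bound uniformly in $t$, with finitely many $(\psi,A)$ handled by a union bound. A fallback that avoids VC theory is to discretize $[t^*-\delta,t^*+\delta]$ into a grid of mesh $\Delta/(8L)$ and apply Hoeffding at the grid points. Between grid points, one uses monotonicity of the indicators in $t$ together with the Lipschitz bound from Step 1.

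\textbf{Step 3 (combination).} Intersect the event of Step 2 with the event $\{|\hat t^N-t^*|\le\delta\}$ from Theorem \ref{thm:conct}; the latter has probability at least $1-C_3e^{-C_4\delta^2N}$, and $\delta$ is now a fixed constant. On the intersection, for any $\psi\notin\Psi^*(t^*)$ and $\psi^\circ\in\Psi^*(t^*)$,
\begin{equation*}
\hat g^N(\psi,\hat t^N)-\hat g^N(\psi^\circ,\hat t^N)\ge\Delta/2-2\cdot\Delta/4\ge0.
\end{equation*}
To get the strict inequality needed, shrink the tolerance in Step 2 to $\Delta/8$. Then $\psi$ is not an empirical minimizer, so $\hat\Psi^N(\hat t^N)\subseteq\Psi^*(t^*)$. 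A union bound over the two failure events, with constants absorbed, yields $C_7,C_8$.
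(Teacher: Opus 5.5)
Your proof is correct. It uses the same ingredients as the paper: restriction to $\{0,1\}^K$, a positive gap at $t^*$, concentration uniform in $(\psi,t)$, the Lipschitz bound in $t$ from the bounded densities, and localization of $\hat{t}^N$ via Theorem \ref{thm:conct}. Where you differ is in how the comparison is organized.

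The paper relies on the root property of $\hat{t}^N$. Using the jump-size argument from the proof of Theorem \ref{thm:conc}, it gets $\hat{\alpha}^N(\hat{t}^N) < \alpha + \frac{1}{N}$. Hence on $\{\hat{\psi}\notin\Psi^*(t^*)\}$ one has $\min_{\psi\notin\Psi^*(t^*)}\mathbb{E}^N G(\psi,\hat{t}^N) < \alpha+\frac{1}{N}$. It then splits the failure event into three parts: $\bar{\Delta}^N > B(t^*)/3$; a Lipschitz deviation of $t\mapsto\min_{\psi\notin\Psi^*(t^*)}\mathbb{E}G(\psi,t)$ exceeding $B(t^*)/3$; and $\frac{1}{N} > B(t^*)/3$, which vanishes only for large $N$.

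You instead build a uniform gap of $\Delta/2$ between $\Psi^*(t^*)$ and its complement on all of $[t^*-\delta,t^*+\delta]$. You then compare two empirical objectives at the same $\hat{t}^N$, which is essentially the template of the paper's proof of Theorem \ref{thm:decision}. This route avoids both the $1/N$ jump bound and the large-$N$ caveat. It also gives a stronger statement: $\hat{\Psi}^N(t)\subseteq\Psi^*(t^*)$ holds simultaneously for all $t$ near $t^*$. So it would apply to any exponentially concentrated estimator of $t^*$, not just the bisection root.

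On Step 2: you only need a fixed tolerance, not deviations of order $N^{-1/2}$. So your grid-plus-Hoeffding bracketing, using monotonicity of $\hat{g}^N(\psi,\cdot)$, is a valid elementary substitute for Lemma \ref{lem:technical}. Even simpler, as in the proof of Proposition \ref{prop:continuity}, $g(\psi,\cdot)$ for binary $\psi$ is a constant plus the cdf of $\min_{k\in A}c(X,k)$ (or of $\min_k c(X,k)$ when $A=\emptyset$). The same holds for the empirical version, so DKW applies directly.

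Finally, shrinking the tolerance to $\Delta/8$ is unnecessary. Since $L\delta<\Delta/4$ strictly, Step 1 already gives a gap strictly larger than $\Delta/2$, so tolerance $\Delta/4$ yields a strict inequality.
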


\subsection{Learning the tie-breaking rule}\label{sec:comptie}

For the moment, let us suppose, as in Section \ref{sec:tiebreaking}, that the desired $t^*$, and its corresponding $\psi^*$, have already been found. In practice, $t^*$ would be replaced by the SAA estimate $\hat{t}^N$. However, from Theorem \ref{thm:decision}, we would still have the correct $\psi^*$ value for $\hat{t}^N$ with high probability. Thus, (\ref{eq:argmax}) combined with a suitable tie-breaking rule will produce a cost distribution whose $\alpha^*\left(\hat{t}^N\right)$-quantile is $\hat{t}^N$. The developments below will produce a rule that achieves $\alpha^*\left(t\right)$ for any $t$ as long as $\psi^*$ minimizes (\ref{eq:alphastar}) for that $t$.

From Section \ref{sec:tiebreaking}, we know that a valid tie-breaking rule is represented by a collection $\left\{v_S\right\}$ of probability distributions. Each $S\subseteq\left\{1,...,K\right\}$ satisfies $Q\left(\psi^*,S\right) > 0$ and each $v_S \in \mathbb{R}^K_+$ satisfies $\sum_k v_{S,k} = 1$ and $v_{S,k} = 0$ for $k \notin S$. Thus, any solution of the system
\begin{eqnarray}
\sum_{k\in S} v_{S,k} &=& 1, \qquad \forall S \mbox{ with } Q\left(\psi^*,S\right) > 0,\label{eq:Sconstraints}\\
\sum_{S:k\in S} Q\left(\psi^*,S\right) v_{S,k} &=& p_k, \qquad k = 1,...,K,\label{eq:kconstraints}
\end{eqnarray}
is a valid tie-breaking rule. It is crucial that \textit{any} solution will yield an optimal transport plan through Proposition \ref{prop:howtofeasible}. Thus, we are free to select any convenient objective function and optimize it under the constraints (\ref{eq:Sconstraints})-(\ref{eq:kconstraints}). The main difficulty is that the number of constraints is combinatorially large, making it prohibitive to apply SAA as in Section \ref{sec:compquantile}.

We propose to use the expected entropy objective
\begin{equation}\label{eq:entropy}
\min_{v_{S,k}} \sum_S Q\left(\psi^*,S\right) \sum_{k\in S} v_{S,k}\log\left(v_{S,k}\right).
\end{equation}
This objective is strictly convex and thus (\ref{eq:entropy}) will have a unique optimal solution. Under this choice, we obtain an unconstrained dual problem with only $K$ decision variables, as shown in the following result.

\begin{prop}\label{prop:entropydual}
Problem (\ref{eq:entropy}) subject to (\ref{eq:Sconstraints})-(\ref{eq:kconstraints}) has the unconstrained dual
\begin{equation}\label{eq:entropydual}
\max_{\theta_k} p^\top \theta - \sum_S Q\left(\psi^*,S\right) \log \sum_{k\in S} e^{\theta_k}.
\end{equation}
\end{prop}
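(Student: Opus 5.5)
The plan is to derive the dual of (\ref{eq:entropy}) by ordinary Lagrangian duality. Only the constraints (\ref{eq:kconstraints}) get multipliers. The simplex constraints (\ref{eq:Sconstraints}), together with $v_{S,k}=0$ for $k\notin S$ and nonnegativity, stay inside the inner minimization. Then the inner problem separates across $S$ and each piece has a closed form.

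\emph{Lagrangian.} Attach a multiplier $\theta_k\in\mathbb{R}$ to each constraint in (\ref{eq:kconstraints}) and write
\begin{equation*}
L(v,\theta) = \sum_S Q\left(\psi^*,S\right)\sum_{k\in S} v_{S,k}\log v_{S,k} + \sum_k \theta_k\Big(p_k - \sum_{S:k\in S} Q\left(\psi^*,S\right) v_{S,k}\Big).
\end{equation*}
Regrouping gives
\begin{equation*}
L(v,\theta) = p^\top\theta + \sum_S Q\left(\psi^*,S\right)\sum_{k\in S}\big(v_{S,k}\log v_{S,k} - \theta_k v_{S,k}\big).
\end{equation*}
The dual function $g(\theta)$ is the infimum of $L$ over $v$ with each $v_S$ in the simplex on $S$. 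Since $Q(\psi^*,S)>0$, this infimum splits into separate minimizations for each $S$.

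\emph{Inner problems.} Each one is $\min_{v_S\in\Delta_S}\sum_{k\in S}(v_{S,k}\log v_{S,k}-\theta_k v_{S,k})$. This is the classical Gibbs variational problem, and it is the main computation. To solve it, put a multiplier on $\sum_{k\in S} v_{S,k}=1$ and set derivatives to zero. This gives $\log v_{S,k}+1-\theta_k-\lambda_S=0$, hence
\begin{equation*}
v_{S,k}=\frac{e^{\theta_k}}{\sum_{j\in S}e^{\theta_j}}.
\end{equation*}
This point lies in the interior of the simplex, and the objective is strictly convex, so it is the global minimizer. Substituting back, the minimum value is $-\log\sum_{k\in S}e^{\theta_k}$. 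Equivalently, this is the Legendre conjugate of negative entropy on the simplex, or Gibbs' inequality via Jensen. Hence
\begin{equation*}
g(\theta)=p^\top\theta-\sum_S Q\left(\psi^*,S\right)\log\sum_{k\in S}e^{\theta_k},
\end{equation*}
and maximizing $g$ over $\theta\in\mathbb{R}^K$ gives exactly (\ref{eq:entropydual}). The maximization is unconstrained because (\ref{eq:kconstraints}) are equalities, so the $\theta_k$ are free.

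\emph{Zero duality gap.} The primal is convex with affine constraints, and its feasible set is nonempty by the existence argument of Section~\ref{sec:tiebreaking}. Linear constraints need only the weakened Slater condition: a feasible point in the relative interior of the domain. Alternatively, I would appeal to Fenchel duality, since the objective is continuous on the compact product of simplices. Either way, strong duality holds, and (\ref{eq:entropydual}) is a genuine dual with equal optimal value.

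The point I expect to be the main obstacle is boundary behavior. Entropy is only defined on $v\ge0$ with $0\log0=0$, so I must justify that the stationary point is the true minimizer even though feasible $v$ may hit the boundary. Strict convexity together with interiority of the Gibbs solution settles this. A second, milder issue is that the dual maximum need not be attained, since some $\theta_k$ may drift to $\pm\infty$ when an optimal $v$ has zero entries. The proposition only asserts the form of the dual, so I would state equality of optimal values as a supremum and leave attainment aside.
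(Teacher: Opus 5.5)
Your derivation is correct, but you dualize differently from the paper. The paper puts a multiplier on every constraint: $\lambda_{S,k}\ge 0$ for nonnegativity, $w_S$ for (\ref{eq:Sconstraints}), and $\theta_k$ for (\ref{eq:kconstraints}). It reads off the dual function from the conjugate $y\mapsto e^{y-1}$ of $v\log v$, giving a dual in $(\lambda,w,\theta)$. It then eliminates the extra variables by hand, setting $\lambda\equiv 0$ and $w_S=Q(\psi^*,S)\log\sum_{k\in S}e^{-\theta_k-1}$. After that, (\ref{eq:entropydual}) appears up to a tacit substitution $\theta\mapsto-\theta$.

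You instead keep the simplices as implicit constraints and dualize only (\ref{eq:kconstraints}). The inner problem then decouples over $S$ and is solved in closed form by the Gibbs identity $\min_{v\in\Delta_S}\sum_{k\in S}(v_k\log v_k-\theta_k v_k)=-\log\sum_{k\in S}e^{\theta_k}$. Your route reaches the reduced $K$-variable dual in one step, with the sign already matching. It also exhibits the softmax minimizer that Proposition \ref{prop:softmax} uses next. The paper's route is more mechanical and rests only on textbook conjugates, at the cost of an extra elimination step. The two dual functions agree: for fixed $\theta$, maximizing the paper's dual over $(\lambda,w)$ is exactly the Lagrange dual of your inner simplex problem, which satisfies Slater's condition at the uniform distribution.

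Your zero-gap paragraph goes beyond what the proposition asserts. The paper does not prove it here; Proposition \ref{prop:softmax} only checks a zero gap at a dual maximizer that is assumed to exist. If you keep the paragraph, drop the refined-Slater justification. It needs a feasible $v$ with $v_{S,k}>0$ for all $k\in S$, and such a point may not exist.

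For example, take $K=2$, $Q(\psi^*,\{1\})=Q(\psi^*,\{1,2\})=\frac{1}{2}$ and $p=(\frac{1}{2},\frac{1}{2})$. The constraint for $k=2$ forces $v_{\{1,2\},2}=1$, hence $v_{\{1,2\},1}=0$. This case is also one where the dual supremum $0$ is not attained, as you anticipated. Such data can genuinely arise. Take $\psi^*=0$, $c(\cdot,1)\le c(\cdot,2)$ and $P(c(X,1)\le t<c(X,2))=\frac{1}{2}$; since feasibility of (\ref{eq:Sconstraints})-(\ref{eq:kconstraints}) is equivalent to $p\in\partial h(\psi^*)$, this $\psi^*$ is then optimal.

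Your compactness alternative is the argument to keep. $L(v,\theta)$ is continuous and convex in $v$ on the compact product of simplices and affine in $\theta$. Sion's minimax theorem then gives $\min_v\sup_\theta L=\sup_\theta\min_v L$, which is a zero duality gap in the supremum sense.
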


\noindent\textbf{Proof:} By combining Example 3.21, Sec. 3.3.2, and eq. (5.11) in \cite{BoVa04}, we obtain the dual function
\begin{equation}\label{eq:dualfunction}
g\left(\lambda,w,\theta\right) = -\sum_S w_S - \sum_k p_k \theta_k - \sum_S \sum_{k\in S} Q\left(\psi^*,S\right)e^{\frac{\lambda_{S,k}-w_S-Q\left(\psi^*,S\right)\theta_k}{Q\left(\psi^*,S\right)}-1}.
\end{equation}
By eq. (5.16) in \cite{BoVa04}, the dual problem maximizes (\ref{eq:dualfunction}) subject to $\lambda\geq 0$ and no constraints on $w,\theta$. It is optimal to set $\lambda \equiv 0$ and $w_S = Q\left(\psi^*,S\right)\log\left(\sum_{k\in S} e^{-\theta_k -1}\right)$. Plugging these values into (\ref{eq:dualfunction}) leads to (\ref{eq:entropydual}).\qed

Moreover, an optimal solution of the dual can be immediately converted into a valid $v_S$ for any $S$. For implementation purposes, one can do this ``on demand,'' storing only the dual variables and creating a probability distribution only after $X$ and $S\left(\psi^*,X\right)$ have been observed.

\begin{prop}\label{prop:softmax}
Let $\theta^*$ be an optimal solution of (\ref{eq:entropydual}). Then, $v^*_{S,k} = \frac{e^{\theta^*_k}}{\sum_{j\in S}e^{\theta^*_j}}$ for $k\in S$ is optimal for problem (\ref{eq:entropy}) subject to (\ref{eq:Sconstraints})-(\ref{eq:kconstraints}).
\end{prop}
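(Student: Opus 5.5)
The proof will go through Lagrangian duality and KKT conditions for the primal problem (\ref{eq:entropy}) subject to (\ref{eq:Sconstraints})-(\ref{eq:kconstraints}). The primal is a convex program with affine constraints and a strictly convex objective on the relevant coordinates. The feasible set is nonempty: the existence argument in Section \ref{sec:tiebreaking} produced distributions $v_S$ with $p=\sum_S Q(\psi^*,S)v_S$. Slater's condition in its refined form for affine constraints therefore holds, so strong duality holds and any primal-dual pair satisfying the KKT conditions is optimal.

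\textbf{Step 1: $v^*$ is feasible.} Constraint (\ref{eq:Sconstraints}) holds by construction, since the softmax weights sum to one over $S$. For (\ref{eq:kconstraints}), I will use the first-order optimality condition of the dual (\ref{eq:entropydual}). The dual objective is concave and differentiable in $\theta$, so $\theta^*$ is a stationary point. Differentiating in $\theta_k$ gives
\begin{equation*}
p_k - \sum_{S:k\in S} Q(\psi^*,S)\frac{e^{\theta^*_k}}{\sum_{j\in S}e^{\theta^*_j}} = 0,
\end{equation*}
which is exactly $\sum_{S:k\in S}Q(\psi^*,S)v^*_{S,k}=p_k$. The sum over $S$ is finite, since there are at most $2^K$ subsets, so differentiation under the sum is trivial.

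\textbf{Step 2: $v^*$ is optimal.} I will verify stationarity of the Lagrangian with multipliers $\theta^*$ for (\ref{eq:kconstraints}) and $w_S = Q(\psi^*,S)\bigl(\log\sum_{j\in S}e^{\theta^*_j} - 1\bigr)$ for (\ref{eq:Sconstraints}), up to the sign convention used in the proof of Proposition \ref{prop:entropydual}. The multipliers on nonnegativity are set to zero, which is legitimate because $v^*_{S,k}>0$ and complementary slackness therefore forces them to vanish. The derivative of the objective in $v_{S,k}$ is $Q(\psi^*,S)(\log v_{S,k}+1)$. Substituting $\log v^*_{S,k}=\theta^*_k-\log\sum_{j\in S}e^{\theta^*_j}$ makes the Lagrangian gradient vanish.

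Together with primal feasibility from Step 1, all KKT conditions hold. For a convex problem with affine constraints, KKT sufficiency then gives optimality directly, without even invoking strong duality. Uniqueness follows from strict convexity, because each included $S$ has $Q(\psi^*,S)>0$.

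\textbf{Main obstacle.} The only delicate point is the sign bookkeeping between the Lagrangian in the proof of Proposition \ref{prop:entropydual}, which uses $-\theta$, and the form (\ref{eq:entropydual}). I will handle it by working directly with the stationarity of (\ref{eq:entropydual}) as in Step 1, rather than reusing (\ref{eq:dualfunction}). An alternative to the KKT argument in Step 2 is to compute the duality gap directly. For feasible $v^*$, the primal value is
\begin{equation*}
\sum_S Q(\psi^*,S)\sum_{k\in S}v^*_{S,k}\bigl(\theta^*_k-\log\textstyle\sum_{j\in S}e^{\theta^*_j}\bigr) = p^\top\theta^* - \sum_S Q(\psi^*,S)\log\sum_{j\in S}e^{\theta^*_j},
\end{equation*}
using (\ref{eq:Sconstraints}) and (\ref{eq:kconstraints}). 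This is the dual value at $\theta^*$. Weak duality then shows that $v^*$ is optimal.
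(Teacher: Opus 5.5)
Your proof is correct and essentially the paper's. Step 1 is exactly how the paper obtains feasibility, since the dual's first-order condition is (\ref{eq:kconstraints}). Your closing ``alternative'' (evaluate the primal objective at $v^*$, show it equals the dual value at $\theta^*$, conclude by weak duality) is precisely the paper's optimality argument, and your KKT-sufficiency check is an equivalent certificate.

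One small correction. Your opening appeal to refined Slater is not justified by the existence argument of Section \ref{sec:tiebreaking}: it requires a strictly positive feasible point, which that argument does not guarantee, although $v^*$ itself supplies one. Nothing in your proof depends on it, however, since neither KKT sufficiency nor weak duality needs a constraint qualification.
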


\noindent\textbf{Proof:} We first verify feasibility. Clearly $\sum_k v^*_{S,k} = 1$ for all $S$. The first-order optimality conditions of (\ref{eq:entropydual}) are given by
\begin{equation}\label{eq:entropykkt}
p_k - \sum_S Q\left(\psi^*,S\right) \frac{e^{\theta^*_k}}{\sum_{j\in S} e^{\theta^*_j}}1_{\left\{k\in S\right\}} = 0,
\end{equation}
which is exactly (\ref{eq:kconstraints}).

Lastly, we plug our choice of $v^*$ into the primal objective and find
\begin{eqnarray*}
\sum_S Q\left(\psi^*,S\right) \sum_{k\in S} v^*_{S,k}\log\left(v^*_{S,k}\right) &=& \sum_S Q\left(\psi^*,S\right) \sum_{k\in S} \frac{e^{\theta^*_k}}{\sum_{j\in S}e^{\theta^*_j}} \left(\theta^*_k - \log\left(\sum_{j\in S}e^{\theta^*_j}\right)\right)\\
&=& \sum_S \sum_{k\in S} Q\left(\psi^*,S\right) \frac{e^{\theta^*_k}}{\sum_{j\in S}e^{\theta^*_j}}\theta^*_k - \sum_S Q\left(\psi^*,S\right) \log\left(\sum_{j\in S}e^{\theta^*_j}\right)\\
&=& \sum_k \theta^*_k \sum_{S:k\in S} Q\left(\psi^*,S\right) \frac{e^{\theta^*_k}}{\sum_{j\in S}e^{\theta^*_j}} - \sum_S Q\left(\psi^*,S\right) \log\left(\sum_{j\in S}e^{\theta^*_j}\right)\\
&=& p^\top \theta^* - \sum_S Q\left(\psi^*,S\right) \log\left(\sum_{j\in S}e^{\theta^*_j}\right),
\end{eqnarray*}
where the last line follows by (\ref{eq:entropykkt}). So, strong duality holds for the pair $\left(v^*,\theta^*\right)$, which completes the proof.\qed

Finally, we explain how to optimize the dual. Problem (\ref{eq:entropydual}) can be written as $\max_\theta p^\top \theta - \mathbb{E}\left(\log \sum_{k\in S\left(X\right)} e^{\theta_k}\right)$. The expression inside the expected value is pathwise differentiable (unlike in Section \ref{sec:compquantile}), giving rise to a simple stochastic approximation algorithm
\begin{equation}\label{eq:sa}
\theta^{m+1}_k = \theta^m_k + \gamma_m\left(p_k - \frac{e^{\theta^m_k}}{\sum_{j\in S\left(\psi^*,X^{m+1}\right)} e^{\theta^m_j}}1_{\left\{k\in S\left(\psi^*,X^{m+1}\right)\right\}}\right),
\end{equation}
where $\gamma_m$ is a stepsize and $X^{m+1}$ is an independent sample from the distribution of $X$. We use $m$ instead of $n$ in the superscript because, in a standard implementation of SA, a new independent sample (separate from the ones that were used to estimate $\hat{t}^N$) would be required in every iteration to ensure the validity of the procedure. However, this algorithm does not require us to know $Q$, to enumerate $S$, or to store $v_S$ for every $S$. The initial iterate $\theta^0$ can be arbitrary. A common variation known as Polyak averaging \citep{PoJu92}, which often leads to improved convergence rates, adds an auxiliary iterate
\begin{equation*}
\bar{\theta}^{m+1}_k = \frac{m}{m+1}\theta^m_k + \frac{1}{m+1}\theta^{m+1}_k,
\end{equation*}
which simply averages the iterates obtained through (\ref{eq:sa}). We then use $\bar{\theta}^m$ as our estimates of the dual variables. Whether or not Polyak averaging is used, the computational and storage cost of stochastic approximation is trivial.

There is an extensive theory on the convergence rates of SA; see, e.g., the seminal work of \cite{BaMo11,BaMo13}. There are, however, some issues with applying these results to the sequence $\left\{\theta^m\right\}$ obtained using (\ref{eq:sa}). First, the theory often makes strong assumptions, such as strong convexity, on the objective function being optimized. When these assumptions are not met, one typically obtains a convergence rate for the value of the objective function (\ref{eq:entropydual}), not the decision variable. Second, and perhaps more importantly, the theory would require us to have a second sample from the distribution of $X$, independent of the sample $\left\{X^n\right\}^N_{n=1}$ used to compute $\hat{t}^N$. This would increase the burden of data collection.

For these reasons, we derive a performance guarantee in the following framework. First, since we will be working with SAA estimates of $t^*$ and $\psi^*$, we will abuse notation slightly to make the dependence of the choice on these parameters explicit, by writing $S\left(t,\psi,x\right) = \arg\max_k 1_{\left\{c\left(X,k\right)\leq t\right\}} + \psi_k$. Now suppose that, in (\ref{eq:sa}), we generate $X^{m+1}$ not by sampling from the density $f$, but by \textit{bootstrapping} from the original sample $\left\{X^n\right\}^N_{n=1}$. The set $S\left(\psi^*,X^{m+1}\right)$ in (\ref{eq:sa}) will be replaced by $S\left(\hat{t}^N,\hat{\psi}^N,X^{m+1}\right)$. Then, (\ref{eq:sa}) will optimize, not (\ref{eq:entropydual}), but the sample analog
\begin{equation}\label{eq:entropysample}
\max_\theta L^N\left(\theta,\hat{t}^N,\hat{\psi}^N\right), \qquad L^N\left(\theta,t,\psi\right) = p^\top \theta - \frac{1}{N} \sum^N_{n=1} \log \left(\sum_{k\in S\left(t,\psi,X^{n+1}\right)} e^{\theta_k}\right).
\end{equation}
There are two advantages to doing this. First, we may work with one sample of size $N$ for all of our computations, without having to divide the total data-collection effort into two samples. Second, the computational cost of running SA becomes separated from $N$, because (\ref{eq:sa}) is very fast and can easily be run for much more than $N$ iterations. Thus, we may assume that the optimal solution of (\ref{eq:entropysample}) can be cheaply computed to arbitrary precision, and the only remaining task is to characterize the convergence of the optimal solution $\hat{\theta}^N$ of (\ref{eq:entropysample}), and the corresponding optimal transport plan, to the true optimal solutions.

First, we build on Theorem \ref{thm:subset} to characterize the probability that the estimated $\hat{t}^N$ leads to an error in the choice set (before tie-breaking). Note the distinction: Theorem \ref{thm:subset} studies the set of parameters characterizing the optimal assignment rule, whereas Proposition \ref{prop:newargmax} below concerns a specific assignment of a \textit{new} $X$ not in the sample. In other words, we have moved from assessing the accuracy of the estimated parameters to evaluating their quality in making decisions for other $X$ that had not previously been observed. The probability that the choice set estimated for that $X$ under $\hat{t}^N$ and $\hat{\psi}^N$ does not match the true choice set (computed using $t^*$ and $\psi^*$) is shown to be of the canonical order $\mathcal{O}\left(N^{-\frac{1}{2}}\right)$.

\begin{prop}\label{prop:newargmax}
Suppose that we are in the situation of Theorem \ref{thm:subset}. Let $\psi^* \in \Psi^*\left(t^*\right)$ and $\hat{\psi}^N \in \hat{\Psi}^N\left(\hat{t}^N\right)$, and let $X$ be a new draw from the density $f$, independent of $\left\{X^n\right\}^N_{n=1}$. Then,
\begin{equation*}
P\left(S\left(\hat{t}^N,\hat{\psi}^N,X\right) \neq S\left(t^*,\psi^*,X\right)\right) \lesssim N^{-\frac{1}{2}}.
\end{equation*}
\end{prop}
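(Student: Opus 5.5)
We want to bound $P(S(\hat t^N,\hat\psi^N,X)\neq S(t^*,\psi^*,X))$, where $X$ is a fresh draw. The plan is to split on a good event $E_N$ on which the estimated dual parameter is "correct," and then control the remaining discrepancy through the threshold error alone. Let $E_N=\{\hat\Psi^N(\hat t^N)\subseteq\Psi^*(t^*)\}$. By Theorem \ref{thm:subset}, $P(E_N^c)\le C_7e^{-C_8N}$, which is $o(N^{-1/2})$. So it suffices to bound the probability of a mismatch on $E_N$.

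On $E_N$ we have $\hat\psi^N\in\Psi^*(t^*)$, so $\hat\psi^N$ is an optimal binary dual for $t^*$. A delicate point is that $\hat\psi^N$ need not equal the particular $\psi^*$ in the statement. I would address this in one of two ways. The first is to interpret $\psi^*$ as the selected element, i.e., compare against $S(t^*,\hat\psi^N,X)$. The second is to argue that different elements of $\Psi^*(t^*)$ induce almost surely the same choice sets $S$. For the second option, convexity of the objective in (\ref{eq:alphastar}) shows the minimizer set is convex. Moreover, $p$ is a subgradient at every minimizer, and the subdifferential structure (\ref{eq:wexp}) together with $p_k>0$ then forces agreement of the argmax sets a.s. I regard this reconciliation as the main obstacle. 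If the a.s.\ agreement cannot be established in full, I would fall back on the first interpretation, which is consistent with how Theorem \ref{thm:subset} is used in the paper.

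Granting the reduction, on $E_N$ we need to bound $P(S(\hat t^N,\psi,X)\neq S(t^*,\psi,X)\mid \hat t^N)$ for a fixed binary $\psi$. The indicator vector $(1_{\{c(X,k)\le t\}})_k$ is the same at $\hat t^N$ and at $t^*$ unless some $c(X,k)$ lies between $t^*$ and $\hat t^N$. Because $X$ is independent of the sample, conditioning on the sample gives
\begin{equation*}
P\left(\text{mismatch}\mid \hat t^N\right)\le \sum_k P\left(c(X,k)\in[\min(t^*,\hat t^N),\max(t^*,\hat t^N)]\right)\le K\bar f\,|\hat t^N-t^*|,
\end{equation*}
where $\bar f$ bounds the densities of the $c(X,k)$ (this bound is assumed in Theorem \ref{thm:subset}).

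Taking expectations, the total probability is at most $K\bar f\,\mathbb{E}|\hat t^N-t^*|+C_7e^{-C_8N}$. To bound $\mathbb{E}|\hat t^N-t^*|$, integrate the tail bound of Theorem \ref{thm:conct}, $P(|\hat t^N-t^*|>\delta)\le C_3e^{-C_4\delta^2N}$, over $\delta\in(0,\delta_0]$; this contributes $O(N^{-1/2})$. For $\delta>\delta_0$, either use the fact that the mismatch probability is trivially at most $1$ and is multiplied by $P(|\hat t^N-t^*|>\delta_0)$, which is exponentially small, or use that $\hat t^N$ is bounded. Either way, the sum is $\lesssim N^{-1/2}$.
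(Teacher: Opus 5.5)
Your argument is essentially the paper's: the same good event from Theorem \ref{thm:subset}, the same reduction to a common $\psi$ on that event, and the same use of the density bound on $c(X,k)$ together with the integrated tail of Theorem \ref{thm:conct}. Your observation that identical indicator vectors give identical argmax sets is cleaner than the paper's route through the two set differences and the quantities $d_i,d_j$. It forces a mismatch to place some $c(X,k)$ between $t^*$ and $\hat t^N$, and so gives the $K\bar f\,|\hat t^N-t^*|$ bound directly. For the large-deviation part, use the first of your two options, since $\hat t^N$ need not be bounded.

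Your second way of reconciling $\hat\psi^N$ with $\psi^*$ is false, so drop it. Take $K=2$, $X$ uniform on $[0,1]^2$, $c(x,k)=x_k$, $p=(3/4,1/4)$ and $\alpha=3/4$. Then $t^*=1/2$, and all hypotheses of Theorem \ref{thm:subset} hold.
\begin{itemize}
\item Both $\psi=(0,0)$ and $\psi=(1,0)$ attain the minimum $3/4$ in (\ref{eq:alphastar}). So $p$ is a subgradient at both, and $p_k>0$.
\item Yet on $\{X_1>1/2,\,X_2>1/2\}$, an event of probability $1/4$, the choice sets are $\{1,2\}$ and $\{1\}$ respectively. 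The two vectors induce the same plan only after tie-breaking.
\item The sample objectives of these two vectors at $\hat t^N$ differ by $\hat q_{01}(\hat t^N)-1/4$, where $\hat q_{01}(t)$ is the empirical frequency of $\{X_1>t,\,X_2\le t\}$. Since $q_{01}(t)=t(1-t)$ has a quadratic maximum $1/4$ at $t^*$, this difference has either sign with probability bounded away from zero.
\end{itemize}
Hence $\hat\Psi^N(\hat t^N)$ keeps switching between $\{(1,0)\}$ and $\{(0,0),(1,1)\}$. For any genuinely fixed $\psi^*\in\Psi^*(t^*)$, the mismatch probability therefore stays bounded away from zero.

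The only valid reconciliation is your first option, taking $\psi^*=\hat\psi^N$ on $E_N$. This is exactly what the paper's unargued ``we may assume $\hat\psi=\psi^*$ without loss of generality'' does. You should either state the result in that form, or add a hypothesis guaranteeing that all elements of $\Psi^*(t^*)$ induce almost surely the same choice sets, for instance that $\Psi^*(t^*)$ is unique up to adding a constant vector.
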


Next, we study the accuracy of the solution of (\ref{eq:entropysample}). Since adding the same constant to each component $\theta$ will not change the value of the objective function, we may normalize $\theta_K = 0$ and think of both (\ref{eq:entropydual}) and (\ref{eq:entropysample}) as problems in $K-1$ dimensions. It is then fairly straightforward to show that (\ref{eq:entropydual}) has a unique solution $\theta^*$. The gradient of (\ref{eq:entropysample}) is given by
\begin{equation}\label{eq:samplegradient}
\ell^N_k\left(\theta,t,\psi\right) = \frac{\partial L^N\left(\theta,t,\psi\right)}{\partial\theta_k} = p_k -\frac{1}{N} \sum^N_{n=1}\frac{e^{\theta^n_k}}{\sum_{j\in S\left(t,\psi,X^{n+1}\right)} e^{\theta^m_j}}1_{\left\{k\in S\left(t,\psi,X^{n+1}\right)\right\}}.
\end{equation}
The next result establishes that, when (\ref{eq:samplegradient}) is sufficiently small, we have an accurate approximation of $\theta^*$.

\begin{thm}\label{thm:gradient}
Suppose that we are in the situation of Theorem \ref{thm:subset}, and let $\hat{\psi}^N \in \hat{\Psi}^N\left(\hat{t}^N\right)$. Additionally, suppose that $\hat{\theta}$ satisfies
\begin{equation}\label{eq:gradientbound}
\mathbb{E}\left(\|\ell^N\left(\hat{\theta},\hat{t}^N,\hat{\psi}^N\right)\|_2\right) \lesssim N^{-\frac{1}{2}}.
\end{equation}
Then, $\mathbb{E}\left(\|\hat{\theta}-\theta^*\|_2\right) \lesssim N^{-\frac{1}{2}}$.
\end{thm}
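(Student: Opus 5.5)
We argue by a standard M-estimation route that pairs the strong concavity of the population dual (\ref{eq:entropydual}) near $\theta^*$ with a uniform bound on how far the sample gradient $\ell^N$ is from the population gradient. Write $\ell(\theta)$ for the gradient of the population objective (\ref{eq:entropydual}), with components $p_k - \mathbb{E}\bigl(e^{\theta_k}1_{\{k\in S(t^*,\psi^*,X)\}}/\sum_{j\in S(t^*,\psi^*,X)}e^{\theta_j}\bigr)$, so that $\ell(\theta^*)=0$. Throughout we use the normalization $\theta_K=0$. We will use the decomposition
\begin{equation*}
\ell(\hat\theta) = \ell^N(\hat\theta,\hat t^N,\hat\psi^N) + \bigl[\ell(\hat\theta)-\bar\ell(\hat\theta,\hat t^N,\hat\psi^N)\bigr] + \bigl[\bar\ell(\hat\theta,\hat t^N,\hat\psi^N)-\ell^N(\hat\theta,\hat t^N,\hat\psi^N)\bigr],
\end{equation*}
where $\bar\ell(\theta,t,\psi)$ is the population gradient computed with choice sets $S(t,\psi,X)$.

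The first term is controlled directly by (\ref{eq:gradientbound}). For the second term, each summand of the gradient is a softmax weight in $[0,1]$. The two integrands differ only on the event $\{S(\hat t^N,\hat\psi^N,X)\neq S(t^*,\psi^*,X)\}$, where $X$ is a fresh draw. So this term is at most $\sqrt{K}$ times the probability of that event, uniformly in $\theta$. Proposition \ref{prop:newargmax} then gives an $N^{-1/2}$ bound in expectation. The fact that $\hat\psi^N$ may differ from $\psi^*$ is absorbed there, using Theorem \ref{thm:subset}: on a high-probability event $\hat\psi^N\in\Psi^*(t^*)$, and the complement has probability $e^{-C_8N}$, which is negligible since all quantities are bounded.

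For the third term we need a uniform law of large numbers at rate $N^{-1/2}$ over $\theta$ and over the random $(t,\psi)$. The map $\psi$ ranges over the finite set $\{0,1\}^K$. For fixed $\psi$, the class of sets $\{x: S(t,\psi,x)=S\}$ indexed by $t$ is built from finitely many threshold sets $\{c(x,k)\le t\}$, so it has finite VC dimension. The softmax weights are Lipschitz in $\theta$ and bounded, so the resulting function class is VC-type. A standard maximal inequality then gives $\mathbb{E}\sup_{\theta,t,\psi}\|\bar\ell-\ell^N\|\lesssim N^{-1/2}$. One caveat: the supremum over $\theta\in\mathbb{R}^{K-1}$ is handled by noting that each function is of the form $e^{\theta_k}1_{k\in S}/\sum_{j\in S}e^{\theta_j}$, which is a finite union of softmax functions indexed by $S$. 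Indicator times a bounded monotone-in-parameters function still preserves the VC-type property. Combining the three terms yields $\mathbb{E}\|\ell(\hat\theta)\|_2\lesssim N^{-1/2}$.

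The main obstacle is converting a small population gradient into closeness of $\hat\theta$ to $\theta^*$. This step requires showing that the population objective is strictly concave with Hessian $-\sum_S Q(\psi^*,S)\,\mathrm{Cov}_{\mathrm{softmax}_S(\theta)}$. This Hessian is negative definite on $\{\theta_K=0\}$ provided the bipartite structure of sets $S$ with $Q>0$ connects all indices. I would establish this connectivity using $p_k>0$ and the tie structure, or else restrict to the connected components, where $\theta^*$ is unique as claimed. The Hessian degenerates as $\|\theta\|\to\infty$, so a global linear bound $\|\ell(\theta)\|\ge c\|\theta-\theta^*\|$ is false. I would therefore use concavity as follows. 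Along the segment from $\theta^*$ to $\theta$, the quantity $\langle -\ell(\theta),\theta-\theta^*\rangle$ is nondecreasing in distance. Hence $\|\ell(\theta)\|\ge c\min(\|\theta-\theta^*\|,r)$ for a radius $r$ on which the Hessian is uniformly definite. This gives $\|\hat\theta-\theta^*\|\lesssim\|\ell(\hat\theta)\|$ on the event $\{\|\ell(\hat\theta)\|<cr\}$. On the complement, whose probability is $\lesssim N^{-1/2}$ by Markov's inequality, we need a crude bound on $\|\hat\theta\|$. This bound must come from $\hat\theta$ nearly solving the sample problem; since the sample dual is coercive on a high-probability event, I would argue $\hat\theta$ stays in a bounded set. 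That step is where I expect the most care to be needed. Together these give $\mathbb{E}\|\hat\theta-\theta^*\|_2\lesssim N^{-1/2}$.
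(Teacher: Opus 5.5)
\textbf{The gap: the final step.} Your last step, the crude bound on $\|\hat\theta\|$, cannot be closed the way you suggest. Since $\|\ell^N\|_2\le\sqrt{K}$ always, hypothesis (\ref{eq:gradientbound}) ignores what $\hat\theta$ does on any event of probability $O(N^{-1/2})$. It therefore gives no control of $\|\hat\theta\|$ on $\{\|\ell(\hat\theta)\|_2\ge cr\}$. Coercivity of the sample dual only locates its exact maximizer, not an arbitrary random $\hat\theta$ obeying (\ref{eq:gradientbound}).

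In fact the statement is false as written. The choice $\hat\theta=\theta^*$ satisfies (\ref{eq:gradientbound}) by your own bounds on the second and third terms, since those bounds are uniform in $\theta$ and $\ell(\theta^*)=0$. Now let $\hat\theta$ equal $\theta^*$ with probability $1-N^{-1/2}$ and $\theta^*+Ne_1$ with probability $N^{-1/2}$, using an independent coin. This $\hat\theta$ still satisfies (\ref{eq:gradientbound}), yet $\mathbb{E}\|\hat\theta-\theta^*\|_2=\sqrt{N}$.

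An extra hypothesis is needed, e.g.\ that the normalized $\hat\theta$ takes values in a fixed compact set $\Theta$. With it your argument closes at once: by Markov, the bad event contributes at most $\sup_{\theta\in\Theta}\|\theta-\theta^*\|_2\cdot\mathbb{E}\|\ell(\hat\theta)\|_2/(cr)\lesssim N^{-1/2}$. The paper's proof does not supply this either. Its Step 3 infers $\lambda_{\min}(H^N)\ge\kappa$ from curvature at $\theta^*$ ``by continuity.'' That only controls the integrated Hessian when the whole segment $[\theta^*,\hat\theta]$ stays near $\theta^*$, which is exactly the localization your monotonicity argument makes explicit.

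\textbf{Comparison with the paper on the other steps.} Apart from this, your route is sound and in places more careful than the paper's.

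\emph{Decomposition.} The paper swaps $(\hat t^N,\hat\psi^N)\to(t^*,\psi^*)$ inside the empirical average ($J_1$), and then needs only Hoeffding at the fixed pair $(t^*,\psi^*)$ ($J_2$). But Proposition \ref{prop:newargmax} concerns a fresh $X$, not the sample points that determine $\hat t^N$. Your swap inside the population expectation is the one that proposition actually bounds. The price is a bound on $\bar\ell-\ell^N$ that is uniform in $(\theta,t,\psi)$. The clean way to get it is to write $\rho_k(x,\theta,t,\psi)=\sum_T w_{T,k}(\theta)1_{\{S(t,\psi,x)=T\}}$ with $w_{T,k}\in[0,1]$. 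The supremum over $\theta$ then disappears, and only finitely many VC classes indexed by $t$ remain, as in Lemma \ref{lem:technical}.

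\emph{Hessian.} Your Hessian $-\sum_S Q(\psi^*,S)\,\mathrm{Cov}_{\mathrm{softmax}_S(\theta)}$ is the correct one. Lemma \ref{lem:hessian} instead uses $\mathrm{diag}(\sigma)-\sigma\sigma^\top$, which drops the negative of the covariance over $X$ of $\rho(X,\theta,t^*,\psi^*)$; for example, the true Hessian is zero when all choice sets are singletons. So the bound $(\min_j p_j)p_K$ there is not valid, and your connectivity criterion is what is really needed.

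\emph{Connectivity.} Connectivity does hold here, so drop the ``restrict to components'' fallback, which would not give uniqueness of $\theta^*$. The density bound near $(t^*,\dots,t^*)$ gives $P(c(X,k)>t^*\ \forall k)>0$. On that event the choice set is $A=\{k:\psi^*_k=1\}$, or $\{1,\dots,K\}$ if $A=\emptyset$. Any $k\notin A$ can appear only in choice sets containing $A$, and it must appear in one because $p_k>0$.
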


Note that $\hat{\theta}$ in (\ref{eq:gradientbound}) need not be deterministic. In fact, if $\hat{\theta} = \hat{\theta}^N$, i.e., we have the exact optimal solution of (\ref{eq:entropysample}), the condition holds trivially since the left-hand side of (\ref{eq:gradientbound}) is equal to zero. We allow some small amount of error, the idea being that SA can always be run long enough to make (\ref{eq:gradientbound}) hold.

Our final result bounds the error of the \textit{choice probabilities} computed from $\hat{t}^N$, $\hat{\psi}^N$, and $\hat{\theta}^N$ (or a close approximation), again for a new data point $X$. This completes our analysis, as the choice probabilities are precisely what characterizes the optimal transport plan for an arbitrary $X$. We see that the canonical $\mathcal{O}\left(N^{-\frac{1}{2}}\right)$ rate is obtained.

\begin{thm}\label{thm:final}
Define
\begin{equation*}
\rho_k\left(x,\theta,t,\psi\right) = \frac{e^{\theta_k}1_{\left\{k\in S\left(x,t,\psi\right)\right\}}}{1_{\left\{K\in S\left(x,t,\psi\right)\right\}+\sum_{j<K-1} e^{\theta}_j1_{\left\{j\in S\left(x,t,\psi\right)\right\}}}}.
\end{equation*}
Let $\psi^* \in \Psi^*\left(t^*\right)$ and $\hat{\psi}^N \in \hat{\Psi}^N\left(\hat{t}^N\right)$, and let $X$ be a new draw from the density $f$, independent of $\left\{X^n\right\}^N_{n=1}$. Suppose that we are in the situation of Theorem \ref{thm:subset}. Additionally, suppose that $\hat{\theta}$ satisfies (\ref{eq:gradientbound}). Then, $\mathbb{E}\left(\left|\rho_k\left(X,\hat{\theta},\hat{t}^N,\hat{\psi}^N\right)-\rho_k\left(X,\theta^*,t^*,\psi^*\right)\right|\right) \lesssim N^{-\frac{1}{2}}$.
\end{thm}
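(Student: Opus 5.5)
We would prove Theorem \ref{thm:final} by splitting the error according to whether the estimated choice set for the new point $X$ agrees with the true one. Write $\hat{S} = S\left(X,\hat{t}^N,\hat{\psi}^N\right)$ and $S^* = S\left(X,t^*,\psi^*\right)$. Since $\rho_k$ is a softmax over a subset, $\rho_k \in [0,1]$, so
\begin{equation*}
\mathbb{E}\left|\rho_k\left(X,\hat{\theta},\hat{t}^N,\hat{\psi}^N\right)-\rho_k\left(X,\theta^*,t^*,\psi^*\right)\right| \leq P\left(\hat{S}\neq S^*\right) + \mathbb{E}\left(\left|\rho_k\left(X,\hat{\theta},t^*,\psi^*\right)-\rho_k\left(X,\theta^*,t^*,\psi^*\right)\right|1_{\{\hat{S}=S^*\}}\right).
\end{equation*}
On the event $\{\hat{S}=S^*\}$, $\rho_k$ depends on $(t,\psi)$ only through the choice set, so we may replace $(\hat{t}^N,\hat{\psi}^N)$ by $(t^*,\psi^*)$ in the first argument. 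Proposition \ref{prop:newargmax} gives $P(\hat{S}\neq S^*)\lesssim N^{-1/2}$ directly, since we are in the situation of Theorem \ref{thm:subset}.

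For the second term, we drop the indicator and use the fact that, for a fixed nonempty set $S$, the map $\theta\mapsto e^{\theta_k}/\sum_{j\in S}e^{\theta_j}$ (with the normalization $\theta_K=0$) is globally Lipschitz. Its gradient has entries bounded by $1$ in absolute value, so its Lipschitz constant in $\|\cdot\|_2$ is at most $\sqrt{K}$, uniformly over the finitely many $S$. Hence the second term is at most $\sqrt{K}\,\mathbb{E}\|\hat{\theta}-\theta^*\|_2$. Theorem \ref{thm:gradient} applies under exactly the stated hypotheses (the situation of Theorem \ref{thm:subset} together with (\ref{eq:gradientbound})) and bounds this by a constant times $N^{-1/2}$. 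Summing the two bounds completes the proof.

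The independence of the new $X$ from the sample is used only inside Proposition \ref{prop:newargmax}. The second term needs no independence, because the Lipschitz bound is pathwise.

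The main obstacle is bookkeeping rather than substance: the definition of $\rho_k$ in the statement normalizes $\theta_K=0$, and its index range appears garbled (``$j<K-1$''). We would interpret it as the softmax over $S$ with $\theta_K=0$, consistent with Proposition \ref{prop:softmax}, and we need that convention to match the one under which Theorem \ref{thm:gradient} is stated. Under that interpretation, only the uniform Lipschitz property of softmax needs care. It holds on all of $\mathbb{R}^{K-1}$, so no compactness argument or control of the size of $\hat{\theta}$ is required.
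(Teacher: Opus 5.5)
Your proposal is correct and follows the paper's proof essentially step for step. It uses the same split on the event $\{S(X,\hat{t}^N,\hat{\psi}^N)=S(X,t^*,\psi^*)\}$, the same $\sqrt{K}$-Lipschitz bound on $\rho_k$ in $\theta$ (from the unit bound on its partial derivatives), and the same conclusion via Proposition~\ref{prop:newargmax} and Theorem~\ref{thm:gradient}.
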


\section{Application: geographical partitioning}

Geographical partitioning problems are a major application area of semidiscrete optimal transport \citep{CaCaDe16,CaPeRy24}. In such problems, we consider a geographical region $\mathcal{X}\subseteq\mathbb{R}^2$ in which there are $K$ facilities with fixed locations $x_k \in\mathcal{X}$. The random variable $X$ represents the location of a randomly selected customer, with $f$ being the population density. The random variable $Y$ takes values in $\left\{1,...,K\right\}$ and represents a choice of facility. The $k$th facility handles a proportion $p_k$ of the total population.

The classical formulation of this problem chooses $\pi$ to minimize the expected cost $\mathbb{E}_\pi\left(c\left(X,Y\right)\right)$, where typically $c\left(x,k\right) = \|x-x_k\|_2$. Other reasonably common cost variants include squared Euclidean distance \citep{Oh07} and $1$-norms \citep{MaHoSoLe18}. It is well-known that the optimal assignment rule, given a random location $X$, sets $Y\left(X\right) = \arg\min_j c\left(X,j\right) - g^*_j$ for a certain specific choice of $g^*\in\mathbb{R}^K$, where the argmin is a.s. unique under mild regularity conditions. Thus, the $k$th facility serves all customers in the set $A_k = \left\{x:Y\left(x\right) = k\right\}$, and the sets $A_1,...,A_K$ comprise a partition of $\mathcal{X}$. This partitioning structure is known in the literature as an ``additively weighted Voronoi diagram''. The weights $g^*$ can be easily computed (for virtually any cost function) using stochastic approximation \citep{GeCuPeBa16}.

In the following, we consider an illustrative example of the partitioning problem with $\mathcal{X} = \left[0,1\right]^2$ and $f$ being the uniform density on $\mathcal{X}$. There are six facilities, and the inputs $p$ and $x_k$ are pre-generated randomly. In Figure \ref{fig:example1}, we take $c\left(x,k\right) = \|x-x_k\|_2$ and use our approach to solve $\inf_\pi \mathbb{Q}^{\alpha}_\pi\left(c\left(X,Y\right)\right)$ for various choices of $\alpha$. For reference, we also include the additively weighted Voronoi diagram (i.e., the partition that minimizes expected cost) in Figure \ref{fig:ex1voronoi}. The Voronoi diagram shows six disjoint regions whose areas correspond to the inputs $p_k$. Starting with the upper-leftmost facility and moving clockwise, these areas are $0.0522$ (green), $0.2140$ (orange), $0.1864$ (cyan), $0.1091$ (yellow), $0.2296$ (purple), and $0.2087$ (blue).

\begin{figure}[t]
        \centering
        \subfigure[Voronoi partition.]{
            \includegraphics[width=0.47\textwidth]{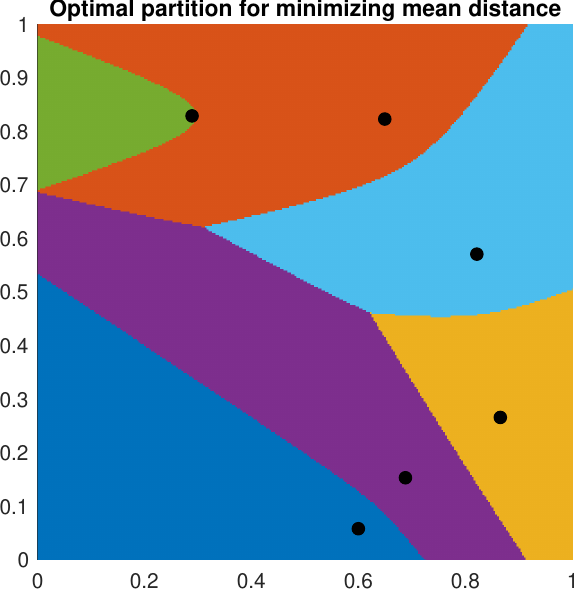}\label{fig:ex1voronoi}
        }
        \subfigure[$\alpha = 0.5$ (median).]{
            \includegraphics[width=0.47\textwidth]{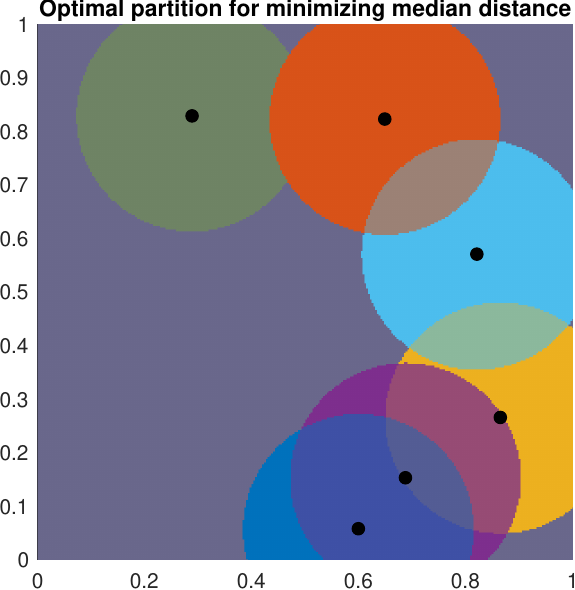}\label{fig:ex1median}
        }
        \subfigure[$\alpha = 0.25$.]{
            \includegraphics[width=0.47\textwidth]{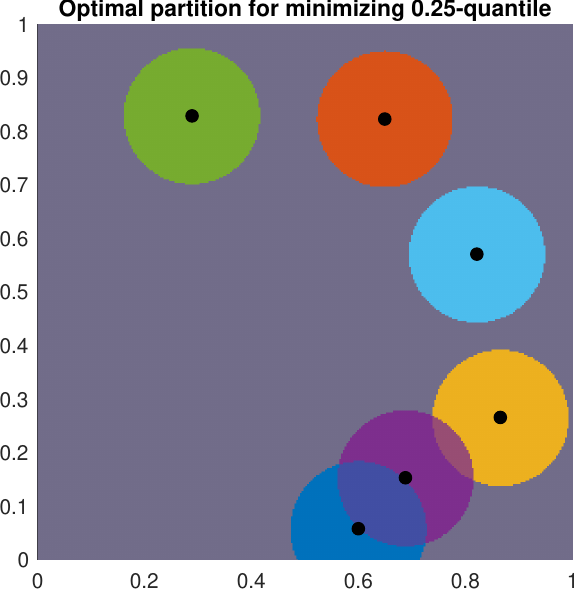}\label{fig:ex1low}
        }
        \subfigure[$\alpha = 0.95$.]{
            \includegraphics[width=0.47\textwidth]{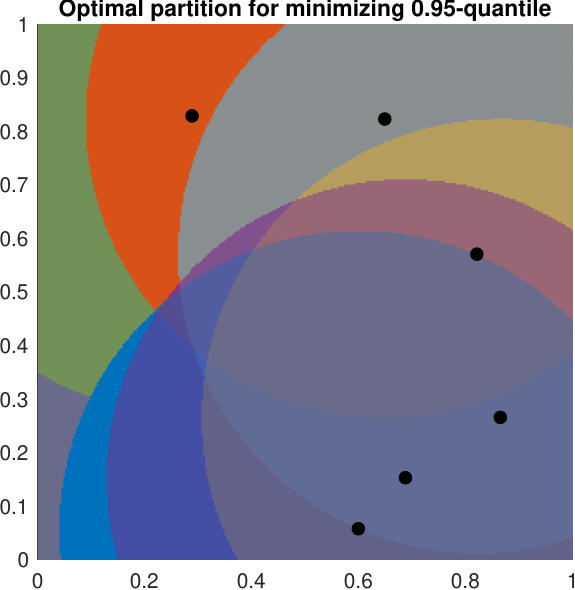}\label{fig:ex1high}
        }
        \caption{Illustrative example of optimal assignments for $c\left(x,k\right) = \|x-x_k\|_2$.}\label{fig:example1}
\end{figure}

Figure \ref{fig:ex1median} shows the optimal division of the region when the objective is to minimize the median travel distance (that is, $\alpha = 0.5$). At a high level, the geometric structure of the optimal assignment is the same for all values of $\alpha$, but let us focus on the median to make the discussion more concrete. Using a sample size $N = 10^4$ for SAA, and the CVX package \citep{cvx} with the Mosek solver, we can compute an estimate $t^* \approx 0.2099$. For each facility $k$, we draw a circle $E_k = \left\{x:\|x-x_k\|_2 \leq t^*\right\}$ of radius $t^*$ centered at $x_k$. The optimal $\psi^*_k$ in this example are equal to $1$ for all $k$ except the upper-leftmost (green) facility, which has $\psi^*_k = 0$. The main reason for this difference is that $p_k$ is very small for the green facility, requiring us to select it less frequently than the others.

For every $x$, let $B\left(x\right) = \left\{k:x\in E_k\right\}$. We assign $x$ to a facility as follows:
\begin{itemize}
\item If $B\left(x\right) = \emptyset$, we assign $x$ by randomizing over those $k$ with $\psi^*_k = 1$. In Figure \ref{fig:ex1median}, these $x$ are the points that do not lie in any of the circles. The color of this portion of the map is a ``weighted average'' of the colors used for the orange, cyan, yellow, purple, and blue regions (but not the green one), with the weights proportional to the optimal $\theta^*_k$ from Section \ref{sec:comptie}.
\item If $\left|B\left(x\right)\right| \geq 1$ and $\psi^*_k = 0$ for all $k \in B\left(x\right)$, we assign $x$ by randomizing over $B\left(x\right) \cup\left\{j:\psi^*_j=1\right\}$. In Figure \ref{fig:ex1median}, this is the portion of the green circle that does not overlap with the orange one. The color is not pure green because it is a weighted average of all six colors.
\item If $\left|B\left(x\right)\right| \geq 1$ and at least one $k\in B\left(x\right)$ has $\psi^*_k = 1$, we randomize only over $\left\{k\in B\left(x\right):\psi^*_k=1\right\}$. Thus, in Figure \ref{fig:ex1median}, the overlap between, e.g., the orange and cyan circles only has those two colors, while the overlap between the orange and green circles is entirely orange. The pure orange, cyan, yellow, purple and blue portions of those respective circles represent locations where the assignment is deterministic.
\end{itemize}
Note, however, that we still need to run the algorithms in Section \ref{sec:comp} even if we know this general structure, because that is the only way to obtain the correct $t^*$, $\psi^*$, and $\theta^*$, or a close approximation thereof.

We see the same structure in Figures \ref{fig:ex1low}-\ref{fig:ex1high}, where $\alpha$ is set to $0.25$ and $0.95$, respectively. Although one typically prefers to optimize upper quantiles in cost minimization, we include $\alpha = 0.25$ purely for visual insight. The main difference between these two cases and $\alpha = 0.5$ is that the radius $t^*$ of each circle (i.e., the optimal quantile) is smaller in the first case (relative to $\alpha = 0.5$) and larger in the second case. For $\alpha = 0.25$, we have $\psi^*_k = 1$ for all $k$ and so the green circle uses only one color. For $\alpha = 0.95$, we again have $\psi^*_k = 1$ for all facilities except the green one, which has $\psi^*_k = 0$. A consequence is that the locations close to the orange facility are assigned randomly, because they exist in the overlap between the orange and cyan circles, but the locations close to the green facility are deterministically assigned to the \textit{orange} one, because they exist only in the overlap between orange and green.

\begin{figure}[t]
        \centering
        \subfigure[$\alpha = 0.5$.]{
            \includegraphics[width=0.47\textwidth]{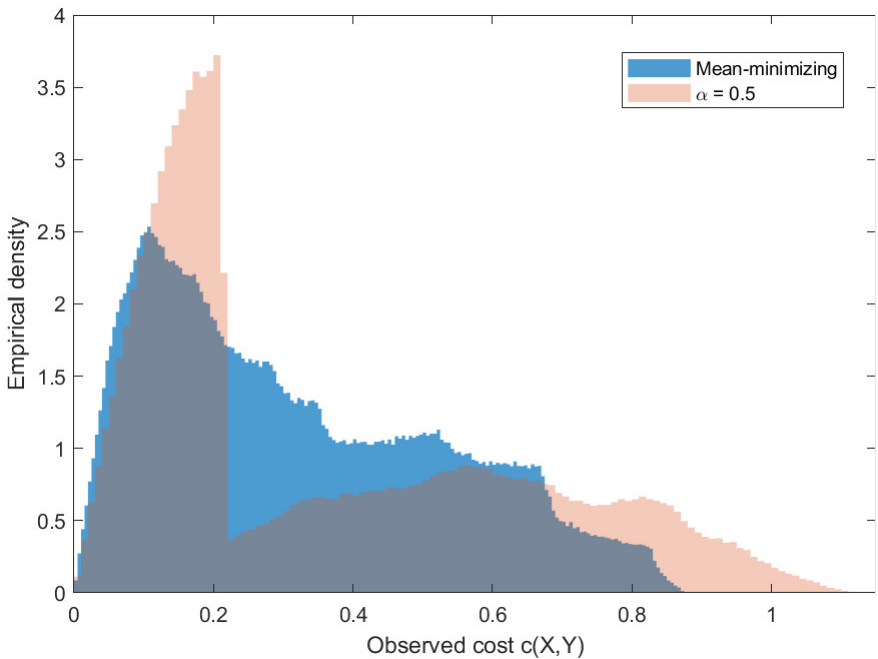}\label{fig:dist1}
        }
        \subfigure[$\alpha = 0.95$.]{
            \includegraphics[width=0.47\textwidth]{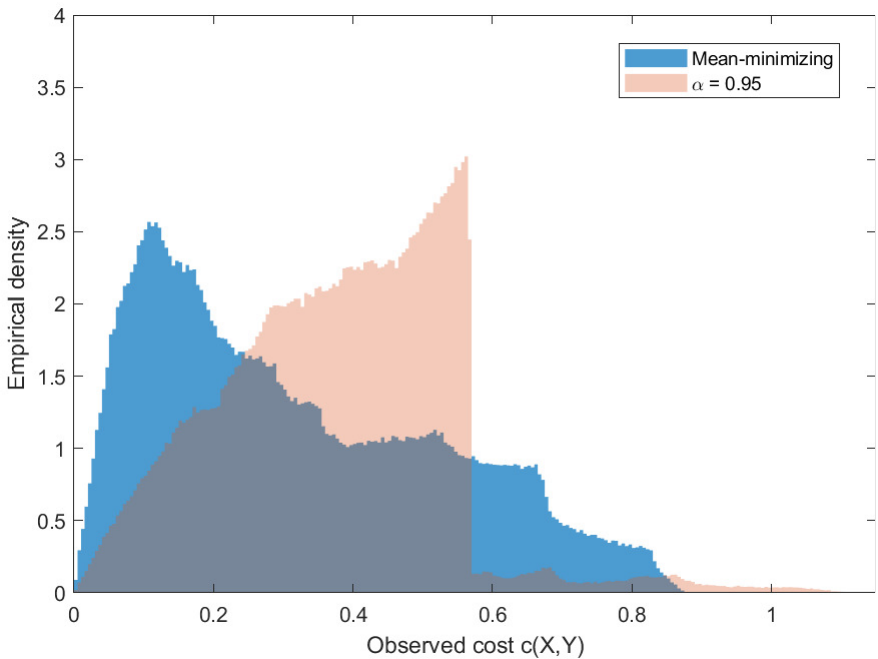}\label{fig:dist2}
        }
        \caption{Empirical densities of $c\left(X,Y\right)$ for different values of $\alpha$. The mean-minimizing partition is included in each plot for easy visual reference.}\label{fig:dist}
\end{figure}

\begin{table}[b]
\centering
\begin{tabular}{|c||c|c|c|}
\hline
Method & Mean cost & Median cost & $95$th percentile\tabularnewline
\hline
\hline
Mean-minimizing & 0.3221 & 0.2753 & 0.7084\tabularnewline
\hline
$\alpha = 0.5$ & 0.3666 & 0.2148 & 0.8781\tabularnewline
\hline
$\alpha = 0.95$ & 0.3777 & 0.3864 & 0.5667\tabularnewline
\hline
\end{tabular}
\caption{\label{tab:stats}Statistics for $c\left(X,Y\right)$ under three distributions of $Y$.}
\end{table}

Figure \ref{fig:dist} shows the empirical densities of $c\left(X,Y\right)$ when $Y$ is chosen according to the solution of quantile optimization. We compare $\alpha = 0.5$ (median-minimizing), $\alpha = 0.95$, and also include the mean-minimizing distribution for reference. We omit $\alpha = 0.25$ from the comparison for brevity. The mean, median, and $95$th percentile of $c\left(X,Y\right)$ under each distribution are summarized in Table \ref{tab:stats}. In Figure \ref{fig:dist1}, we see that the median-minimizing solution prefers to reduce larger costs without much regard for how far out they are in the tails; thus, there are quite a few extreme values, but their incidence is low. On the other hand, in Figure \ref{fig:dist2}, the incidence of very high costs is very sharply reduced, at the expense of the median.

It is also interesting to see how the visualization of the optimal assignment changes depending on the cost function. Figure \ref{fig:example2} presents a second example with different $p$ and $x_k$ as well as a different $c\left(x,k\right) = \|x-x_k\|_1$. The $1$-norm changes the shape of the regions in the Voronoi diagram (Figure \ref{fig:ex2voronoi}). However, when the objective is to optimize the median of the $1$-norm, the high-level structure is the same as in the previous example, with the only difference being that the circles are now diamonds. Once more, the lower-leftmost facility is the only one with $\psi^*_k = 0$, because its low $p_k$ value requires us to select it less frequently. We do not show results for other quantiles here, since they generally repeat the insights from the previous example, but it is certainly possible to optimize them if desired.

\begin{figure}[t]
        \centering
        \subfigure[Voronoi partition.]{
            \includegraphics[width=0.47\textwidth]{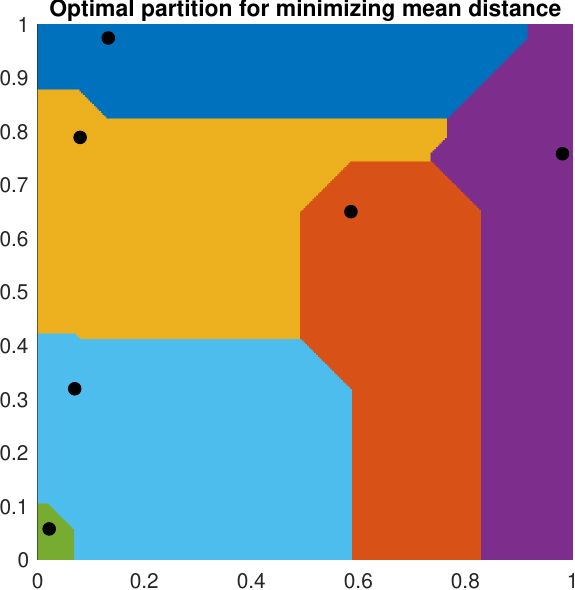}\label{fig:ex2voronoi}
        }
        \subfigure[$\alpha = 0.5$ (median).]{
            \includegraphics[width=0.47\textwidth]{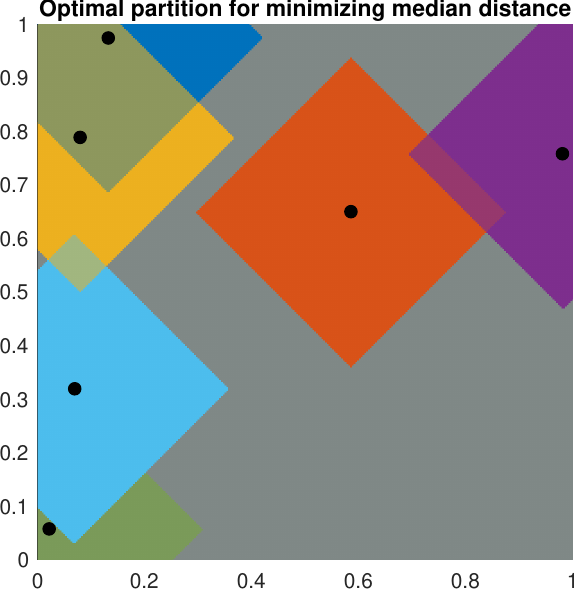}\label{fig:ex2median}
        }
        \caption{Illustrative example of optimal assignments for $c\left(x,k\right) = \|x-x_k\|_1$.}\label{fig:example2}
\end{figure}

To our knowledge, the geometric structure in Figures \ref{fig:example1}-\ref{fig:example2} has never before been observed in the literature on either optimal transport or computational geometry. These examples show that, in the setting of quantile optimization, not only can the optimal assignments be characterized theoretically, they are also computationally tractable using our methods.

\section{Conclusion}\label{sec:conc}

To our knowledge, this is the first paper to consider quantile optimization in any form of optimal transport. The semidiscrete setting is a natural first choice to develop these ideas because it enables cleaner theoretical characterization and efficient computation. While some of the structure in our paper may be generalizable, it is highly unlikely that optimal plans would be computable under general continuous distributions. Even under the classical objective of minimizing expected cost, it is virtually impossible to obtain computationally tractable plans without resorting to discretization. Therefore, from a practical point of view, the semidiscrete setting may in fact be the most relevant to study.

Even in the semidiscrete case, quantile optimization has shown itself to be more complex than mean minimization because the optimal plan does not have the Monge property. In other words, given $X$, the optimal choice of $Y$ is not deterministic, but may require randomization via a tie-breaking rule. Such rules are quite under-studied in the optimal transport literature, where one usually works with assumptions under which Kantorovich-Monge equivalence holds. We believe that our approach to characterizing and computing the tie-breaking rule is one of the most novel aspects of our work. This aspect may be useful outside the quantile optimization setting, in any situation where unusual or non-standard cost functions preclude the existence of an optimal deterministic map. We hope that our work will lead to increased interest in such scenarios.

\bibliographystyle{agsm} 
\bibliography{quantile} 

\newpage

\section{Appendix: proofs}

Below, we give complete proofs for all results that were stated in the main text.

\subsection{Proof of Theorem \ref{thm:farkas}}

First, we show by contradiction that both alternatives cannot be true simultaneously. Suppose that $\pi$ satisfies (\ref{eq:equality}) and (\ref{eq:target})-(\ref{eq:nonneg}), while $\left(\zeta,\phi,\psi\right)$ satisfies (\ref{eq:dualcons})-(\ref{eq:farkas}). Since $\pi\left(x,k\right) \geq 0$, (\ref{eq:dualcons}) yields
\begin{equation*}
1_{\left\{c\left(x,k\right)\leq t\right\}}\pi\left(x,k\right)\zeta + \phi\left(x\right)\pi\left(x,k\right) + \psi_k\pi\left(x,k\right) \geq 0.
\end{equation*}
Summing over $k$ and integrating over $x$ yields
\begin{equation*}
\zeta\int_{x\in\mathcal{X}} \sum_k 1_{\left\{c\left(x,k\right)\leq t\right\}}\pi\left(x,k\right)dx + \int_{x\in\mathcal{X}}\left(\phi\left(x\right)\sum_k \pi\left(x,k\right)\right)dx + \sum_k \psi_k \int_{x\in\mathcal{X}} \pi\left(x,k\right)dx \geq 0.
\end{equation*}
By (\ref{eq:target})-(\ref{eq:density}) and (\ref{eq:equality}), this expression simplifies to
\begin{equation*}
\zeta\alpha + \int_{x\in\mathcal{X}}\phi\left(x\right)f\left(x\right)dx + \sum_k p_k \psi_k \geq 0,
\end{equation*}
which contradicts (\ref{eq:farkas}).

Now, to complete the proof, it is sufficient to suppose that the system described by (\ref{eq:equality}) and (\ref{eq:target})-(\ref{eq:nonneg}) has no solution, and then to show that the system described by (\ref{eq:dualcons})-(\ref{eq:farkas}) must have a solution. Define the cone
\begin{equation*}
M\hspace{-0.05in} =\hspace{-0.05in} \left\{\left(\sum_k w\left(\cdot,k\right),\int_{x\in\mathcal{X}}w\left(x,\cdot\right)dx,\int_{x\in\mathcal{X}}\sum_k 1_{\left\{c\left(x,k\right)\leq t\right\}}w\left(x,k\right)dx\right):w\left(\cdot,\cdot\right)\geq 0,\int_{x\in\mathcal{X}}\sum_k w\left(x,k\right)^2 dx < \infty\right\}.
\end{equation*}
Let $m_* = \left(f\left(\cdot\right),p,\alpha\right)$. Consider the Hilbert space
\begin{equation*}
H = \left\{\left(h\left(\cdot\right),v,a\right):\int_{x\mathcal{X}}h\left(x\right)^2 dx< \infty,\,v\in\mathbb{R}^K,\,a\in\mathbb{R}\right\}
\end{equation*}
with the inner product
\begin{equation*}
\langle\left(h_1,v_1,a_1\right),\left(h_2,v_2,a_2\right)\rangle = \int_{x\in\mathcal{X}} h_1\left(x\right)h_2\left(x\right)dx + v^\top_1 v_2 + a_1a_2.
\end{equation*}
Observe that $\left\{m_*\right\}$ and $M$ are two subsets of $H$ that are both closed and convex; furthermore, $\left\{m_*\right\}$ is also compact. Because the system described by (\ref{eq:equality}) and (\ref{eq:target})-(\ref{eq:nonneg}) has no solution, we have $m_*\notin M$, which means that $\left\{m_*\right\}$ and $M$ are disjoint. By a geometric version of the Hahn-Banach theorem (Prop. IV.3.6 and Thm. IV.3.9 in \citealp{Co19}), there exists a continuous linear functional $r:H\rightarrow\mathbb{R}$ and $\delta \in\mathbb{R}$ such that $r\left(m\right)\geq \delta$ and $r\left(m_*\right) < \delta$ for any $m\in M$.

Since we can take $w\equiv 0$, we have $\left(0,0,0\right) \in M$ and thus $\delta \leq 0$. From this it follows that $r\left(m_*\right) < 0$, and also $\inf_{m\in M} r\left(m\right)\geq 0$. To show the second property, suppose that $r\left(m\right) < 0$ for some $m \in M$. Then, since $bm\in M$ for any $b > 0$, it follows that $\delta = -\infty$, which contradicts the fact that $\delta \in\mathbb{R}$.

By Theorem 5.25 of \cite{Fo99}, every continuous linear functional on the Hilbert space is identified with an element of the Hilbert space. Therefore, there exists $\left(\phi,\psi,\zeta\right)\in H$ such that
\begin{equation*}
r\left(h,v,a\right) = \int_{x\in\mathcal{X}} \phi\left(x\right)h\left(x\right)dx + \psi^\top v + a\zeta.
\end{equation*}
Since $\inf_{m\in M}r\left(m\right) \geq 0$, we may take any $\pi\left(\cdot,\cdot\right)\geq 0$ with $\int_{x\in\mathcal{X}}\sum_k \pi\left(x,k\right)^2 dx < \infty$ and obtain
\begin{equation*}
\int_{x\in\mathcal{X}} \phi\left(x\right)\sum_k \pi\left(x,k\right)dx + \sum_k \psi_k\int_{x\in\mathcal{X}}\pi\left(x,k\right)dx + \zeta\int_{x\in\mathcal{X}}\sum_k 1_{\left\{c\left(x,k\right)\leq t\right\}}\pi\left(x,k\right)dx \geq 0,
\end{equation*}
or, equivalently,
\begin{equation*}
\mathbb{E}\left(\phi\left(X\right)\right) + \mathbb{E}\left(\psi_Y\right) + \zeta \mathbb{E}\left(1_{\left\{c\left(X,Y\right)\leq t\right\}}\right) \geq 0
\end{equation*}
for any joint distribution of $\left(X,Y\right)$. Consequently, (\ref{eq:dualcons}) must hold for all $x$ and $k$. On the other hand, since $r\left(m_*\right)<0$, we have
\begin{equation*}
\int_{x\in\mathcal{X}} g\left(x\right)f\left(x\right)dx + \sum_k p_k u_k + \alpha b < 0,
\end{equation*}
which establishes (\ref{eq:farkas}).

\subsection{Proof of Proposition \ref{prop:binary}}

We first show that, for all $k$, we may suppose that $\tilde{\psi}_k \in \left[0,1\right]$ without loss of generality. Let $\tilde{\psi}$ be a minimizer and suppose that, for some $j$, $\tilde{\psi}_j < \max_k \tilde{\psi}_k - 1$. Now, let $\psi'$ satisfy $\psi'_k = \tilde{\psi}_k$ for $k \neq j$, and $\psi'_j = \max_k \tilde{\psi}_k - 1$. Then,
\begin{equation*}
1_{\left\{c\left(X,j\right)\leq t\right\}} + \tilde{\psi}_j \leq 1 + \psi'_j = \max_k \tilde{\psi_k} \leq \max_k 1_{\left\{c\left(X,k\right)\leq t\right\}} + \tilde{\psi_k},
\end{equation*}
whence
\begin{equation*}
\mathbb{E}\left(\max_k 1_{\left\{c\left(X,k\right)\leq t\right\}} + \tilde{\psi}_k\right) = \mathbb{E}\left(\max_k 1_{\left\{c\left(X,k\right)\leq t\right\}} + \psi'_k\right).
\end{equation*}
However, $p^\top \psi' > p^\top \tilde{\psi}$, so the objective value of $\psi'$ is strictly better than that of $\tilde{\psi}$, contradicting the fact that $\tilde{\psi}$ is a minimizer. Therefore, we must have $\tilde{\psi}_j \geq \max_k \tilde{\psi}_k - 1$ for all $j$. Since the objective function is invariant to translation, i.e., adding the same constant to each element of $\psi'$ in (\ref{eq:alphastar}) does not change the objective value, we may suppose that $\tilde{\psi}_k \in \left[0,1\right]$ for all $k$.

Now, let $A_k = 1_{\left\{c\left(X,k\right)\leq t\right\}}$ and observe that the random vector $A = \left(A_1,...,A_K\right)$ has discrete support contained in $\left\{0,1\right\}^K$. We may therefore rewrite the right-hand side of (\ref{eq:alphastar}) as
\begin{equation*}
\min_{\psi} \sum_{a\in\left\{0,1\right\}^K} P\left(A = a\right)\cdot \max_k \left(a_k + \psi_k\right) - p^\top \psi.
\end{equation*}
This problem can be rewritten as the linear program $\min_\psi \sum_a P\left(A=a\right) \eta_a - p^\top \psi$ subject to $\eta_a \geq a_k + \psi_k$ for all $a,k$. Note that, for fixed $a,k$, the values $a_k$ are also fixed. Therefore, we may rewrite the constraints as
\begin{equation}\label{eq:tu}
\eta_a - \psi_k \geq a_k,
\end{equation}
where the right-hand side values are binary. If we write the left-hand side in matrix form as $B\beta$, where $\beta$ is the concatenation of $\eta$ and $\psi$, we see that $B^\top$ is the incidence matrix of a bipartite graph where, for each $\left(a,k\right)$ pair, an edge connects a node indexed by $k$ to a node indexed by $a$. Therefore, $B$ is totally unimodular. By the discussion on p. 268 of \cite{Sc98}, the system (\ref{eq:tu}) combined with $0\leq\psi_k \leq 1$ and $0\leq \eta_a \leq 2$ represents a polyhedron with all integer-valued vertices. Since any LP always has an optimal extreme-point solution, there exists a minimizer $\tilde{\psi}$ with $\tilde{\psi}_k\in\left\{0,1\right\}$ for all $k$.

\subsection{Proof of Proposition \ref{prop:continuity}}

By Proposition \ref{prop:binary}, the variable $\psi'$ on the right-hand side of (\ref{eq:alphastar}) may be restricted to the finite set $\left\{0,1\right\}^K$ without loss of generality. For any $a \in \left\{0,1\right\}^K$, let $A = \left\{k:a_k = 1\right\}$. Then,
\begin{equation*}
\max_k 1_{\left\{c\left(X,k\right)\leq t\right\}} + a_k = \left\{
\begin{array}{c l}
1_{\left\{\min_k c\left(X,k\right)\leq t\right\}} & A = \emptyset,\\
1 + 1_{\left\{\min_{k\in A} c\left(X,k\right) \leq t\right\}} & A \neq \emptyset.
\end{array}
\right.
\end{equation*}
Taking the minimum over all $a \in\left\{0,1\right\}^K$, we rewrite (\ref{eq:alphastar}) as
\begin{equation}\label{eq:finitemin}
\alpha^*\left(t\right) = \min\left\{P\left(\min_k c\left(X,k\right)\leq t\right), \min_{A\subseteq\left\{1,...,K\right\},A\neq\emptyset} 1 + P\left(\min_{k\in A} c\left(X,k\right)\leq t\right) - \sum_{k\in A}p_k\right\}.
\end{equation}
Observe that, for any nonempty $A\subseteq\left\{1,...,K\right\}$, the cdf $P\left(\min_{k\in A}c\left(X,k\right) \leq t\right)$ is continuous in $t$ because all $c\left(X,k\right)$ are atomless and $A$ is a finite set. Thus, (\ref{eq:finitemin}) is a pointwise minimum of a finite number of differentiable functions, and therefore continuous. The boundedness of the densities provides the Lipschitz property.

\subsection{Proof of Proposition \ref{prop:monotonicity}}

From the assumptions, it is easy to see that, for any nonempty $A\subseteq\left\{1,...,K\right\}$, we have
\begin{equation}\label{eq:setineq}
P\left(c\left(X,k\right) > t_2,\,\forall k\in A\right) < P\left(c\left(X,k\right) > t_1,\,\forall k\in A\right),
\end{equation}
since the marginal density of $\left\{c\left(X,k\right):k\in A\right\}$ will remain bounded away from zero in a neighborhood of the point $\left(t_1,...,t_1\right)\in\mathbb{R}^{\left|A\right|}$. Therefore, it remains to derive the desired conclusion from (\ref{eq:setineq}).

As an intermediate step, we fix $\psi\in\left\{0,1\right\}^K$ and show
\begin{equation}\label{eq:expineq}
\mathbb{E}\left(\max_k 1_{\left\{c\left(X,k\right)\leq t_1\right\}} + \psi_k\right) < \mathbb{E}\left(\max_k 1_{\left\{c\left(X,k\right)\leq t_2\right\}} + \psi_k\right),
\end{equation}
recalling that it is sufficient to consider binary-valued $\psi$ by Proposition \ref{prop:binary}. Denote $F\left(t\right) = \max_k 1_{\left\{c\left(X,k\right)\leq t\right\}} + \psi_k$. Let $A = \left\{k:\psi_k=1\right\}$. We consider two cases: $A\neq\emptyset$ and $A = \emptyset$. In the first case, we write
\begin{equation*}
F\left(t\right) = \max\left\{\max_{k\in A^c} 1_{\left\{c\left(X,k\right)\leq t\right\}},1 + \max_{k\in A}1_{\left\{c\left(X,k\right)\leq t\right\}}\right\}.
\end{equation*}
Define $M\left(t\right) = \max_{k\in A}1_{\left\{c\left(X,k\right)\leq t\right\}}$. Then, $F\left(t\right) = 1$ if and only if $M\left(t\right) = 0$, and, moreover, $F\left(t\right) = 2$ if and only if $M\left(t\right) = 1$. Therefore,
\begin{eqnarray*}
\mathbb{E}\left(F\left(t\right)\right) &=& P\left(F\left(t\right) = 1\right) + 2\cdot P\left(F\left(t\right) = 2\right)\\
&=& P\left(M\left(t\right) = 0\right) + 2\cdot P\left(M\left(t\right)=1\right)\\
&=& P\left(M\left(t\right) = 0\right) + 2\cdot \left(1 - P\left(M\left(t\right)=0\right)\right)\\
&=& 2 - P\left(M\left(t\right) = 0\right).
\end{eqnarray*}
Consequently,
\begin{eqnarray}
\mathbb{E}\left(F\left(t_2\right)\right) - \mathbb{E}\left(F\left(t_1\right)\right) &=& P\left(\max_{k\in A}1_{\left\{c\left(X,k\right)\leq t_1\right\}} = 0\right) - P\left(\max_{k\in A}1_{\left\{c\left(X,k\right)\leq t_2\right\}} = 0\right),\nonumber\\
&=& P\left(c\left(X,k\right) > t_1,\,\forall k \in A\right) - P\left(c\left(X,k\right) > t_2,\,\forall k \in A\right)\label{eq:rewritediff}
\end{eqnarray}
and, by (\ref{eq:setineq}), the difference in (\ref{eq:rewritediff}) is strictly positive.

Now, we consider the case where $A=\emptyset$. Then, $\psi \equiv 0$ and $F\left(t\right) = \max_k 1_{\left\{c\left(X,k\right)\leq t\right\}}$. It follows that
\begin{equation*}
\mathbb{E}\left(F\left(t\right)\right) = P\left(\max_k 1_{\left\{c\left(X,k\right)\leq t\right\}} = 1\right) = 1 - P\left(c\left(X,k\right) > t,\,\forall k\right),
\end{equation*}
whence
\begin{equation*}
\mathbb{E}\left(F\left(t_2\right)\right) - \mathbb{E}\left(F\left(t_1\right)\right) = P\left(c\left(X,k\right) > t_1\,\forall k\right) - P\left(c\left(X,k\right) > t_2\,\forall k\right),
\end{equation*}
which again is strictly positive by (\ref{eq:setineq}). Thus, (\ref{eq:expineq}) is proved.

Finally, since $\left\{0,1\right\}^K$ is a finite set, there exists $\delta > 0$ such that
\begin{equation}\label{eq:unifineq}
\mathbb{E}\left(\max_k 1_{\left\{c\left(X,k\right)\leq t_1\right\}} + \psi_k\right) + \delta \leq \mathbb{E}\left(\max_k 1_{\left\{c\left(X,k\right)\leq t_2\right\}} + \psi_k\right), \quad \forall \psi\in\left\{0,1\right\}^K.
\end{equation}
Take
\begin{equation*}
\psi'\in\arg\min_{\psi\in\left\{0,1\right\}^K} \mathbb{E}\left(\max_k 1_{\left\{c\left(X,k\right)\leq t_2\right\}} + \psi_k\right) - p^\top \psi.
\end{equation*}
Then, applying (\ref{eq:unifineq}) to $\psi'$, we obtain
\begin{eqnarray*}
\min_{\psi\in\left\{0,1\right\}^K} \mathbb{E}\left(\max_k 1_{\left\{c\left(X,k\right)\leq t_1\right\}} + \psi_k\right) - p^\top \psi +\delta &\leq& \mathbb{E}\left(\max_k 1_{\left\{c\left(X,k\right)\leq t_1\right\}} + \psi'_k\right) - p^\top \psi' + \delta\\
&\leq& \mathbb{E}\left(\max_k 1_{\left\{c\left(X,k\right)\leq t_2\right\}} + \psi'_k\right) - p^\top \psi',
\end{eqnarray*}
which is precisely $\alpha^*\left(t_1\right) + \delta \leq \alpha^*\left(t_2\right)$, as required.

\subsection{Proof of Theorem \ref{thm:conc}}

Define
\begin{equation*}
G\left(\psi,t\right) = \max_k \left(1_{\left\{c\left(X,k\right)\leq t\right\}} + \psi_k\right) - p^\top \psi
\end{equation*}
and $g\left(\psi,t\right) = \mathbb{E}\left(G\left(\psi,t\right)\right)$. By definition, $\min_{\psi\in\left\{0,1\right\}^K} g\left(\psi,t\right) = \alpha^*\left(t\right)$, noting that it is sufficient to consider binary-valued $\psi$ by Proposition \ref{prop:binary}. Let
\begin{equation*}
\Delta^N\left(\psi,t\right) = \sqrt{N}\left(\mathbb{E}^N \left(G\left(\psi,t\right)\right) - \mathbb{E}\left(G\left(\psi,t\right)\right)\right),
\end{equation*}
where $\mathbb{E}^N$ denotes the empirical expectation, i.e., the expectation over the discrete distribution where $X = X^n$ with probability $\frac{1}{N}$ for $n = 1,...,N$. We may write
\begin{equation*}
\hat{\alpha}^N\left(t\right) = \min_{\psi\in\left\{0,1\right\}^K} g\left(\psi,t\right) + N^{-\frac{1}{2}}\Delta^N\left(\psi,t\right).
\end{equation*}
By the elementary inequality $\left|\min \left(h_1+h_2\right) - \max h_1\right| \leq \max \left|h_2\right|$, we have
\begin{equation*}
\sqrt{N}\left|\hat{\alpha}^N\left(t\right) - \alpha^*\left(t\right)\right| \leq \sup_{\psi\in\left\{0,1\right\}^K} \left|\Delta^N\left(\psi,t\right)\right|
\end{equation*}
for all $t$. Consequently,
\begin{equation*}
\sqrt{N}\sup_{t\in\mathbb{R}}\left|\hat{\alpha}^N\left(t\right) - \alpha^*\left(t\right)\right| \leq \sup_{t\in\mathbb{R}}\sup_{\psi\in\left\{0,1\right\}^K} \left|\Delta^N\left(\psi,t\right)\right|.
\end{equation*}
Using techniques from empirical process theory, the following technical lemma can be obtained. The proof is given in a separate subsection of the Appendix.

\begin{lem}\label{lem:technical}
There exist constants $D_1,D_2 > 0$ such that
\begin{equation}\label{eq:techlemma}
P\left(\sup_{\psi\in\left\{0,1\right\}^K} \sup_t \left|\Delta^N\left(\psi,t\right)\right| > w\right) \leq D_1 e^{-D_2 w^2}
\end{equation}
for all $w > 0$.
\end{lem}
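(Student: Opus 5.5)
Because $\left\{0,1\right\}^K$ is finite, a union bound lets me work with one fixed $\psi$ at a time. I will show that for each $\psi$ there are constants with $P\left(\sup_t \left|\Delta^N\left(\psi,t\right)\right| > w\right) \leq D'_1 e^{-D'_2 w^2}$. Summing the $2^K$ bounds then gives (\ref{eq:techlemma}) with $D_1 = 2^K D'_1$ and $D_2 = D'_2$.

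For fixed $\psi$, the term $p^\top\psi$ cancels in $\Delta^N$. The reduction used in the proof of Proposition \ref{prop:continuity} shows that $\max_k 1_{\left\{c\left(X,k\right)\leq t\right\}}+\psi_k$ depends only on $A=\left\{k:\psi_k=1\right\}$:
\begin{itemize}
\item if $A = \emptyset$, it equals $1_{\left\{Z_A \leq t\right\}}$ with $Z_A = \min_k c\left(X,k\right)$;
\item if $A \neq \emptyset$, it equals $1 + 1_{\left\{Z_A\leq t\right\}}$ with $Z_A=\min_{k\in A}c\left(X,k\right)$.
\end{itemize}
In either case the additive constant cancels after centering, so
\begin{equation*}
\Delta^N\left(\psi,t\right) = \sqrt{N}\left(F^N_A\left(t\right) - F_A\left(t\right)\right),
\end{equation*}
where $F_A$ is the cdf of the scalar random variable $Z_A$ and $F^N_A$ is its empirical cdf built from $Z_A\left(X^1\right),\dots,Z_A\left(X^N\right)$. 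These values are i.i.d., and $Z_A$ is a.s. finite because each $c\left(X,k\right)$ is.

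Therefore $\sup_t\left|\Delta^N\left(\psi,t\right)\right|$ is exactly the Kolmogorov--Smirnov statistic of an i.i.d. sample, and the Dvoretzky--Kiefer--Wolfowitz inequality with Massart's constant gives
\begin{equation*}
P\left(\sup_t \sqrt{N}\left|F^N_A\left(t\right)-F_A\left(t\right)\right|>w\right)\leq 2e^{-2w^2}
\end{equation*}
for all $w>0$ and all $N$. This holds with no continuity assumption, though here $F_A$ is continuous anyway since each $c\left(X,k\right)$ has a density. Combining with the union bound yields $D_1 = 2^{K+1}$ and $D_2 = 2$.

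The only delicate point is justifying the reduction to a single scalar empirical cdf for every $\psi$, including $\psi\equiv 0$ and the translation-equivalent cases. If one preferred not to cite DKW, the fallback is a VC/bracketing argument: the class $\left\{1_{\left\{Z_A\leq t\right\}}:t\right\}$ has VC dimension $1$, and Talagrand's inequality for VC classes gives the same sub-Gaussian tail, with worse constants. I expect no real obstacle beyond this bookkeeping.
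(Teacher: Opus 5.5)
Your proof is correct, but it takes a different route from the paper. The paper never simplifies the integrand. For fixed $\psi$ it notes that each class $\mathcal{F}_k=\{1_{\{c(X,k)\le t\}}+\psi_k : t\in\mathbb{R}\}$ is VC-subgraph. It then uses the preservation lemma for pointwise maxima (Lem.~2.6.18 of van der Vaart--Wellner) to conclude that the class of maxima is VC-subgraph with envelope $2$. Finally it applies the uniform-entropy bound and the Talagrand-type tail inequality (Thms.~2.6.7 and 2.14.9 there) to get $\kappa_1 w^{\kappa_2 r}e^{-2w^2}$, absorbing the polynomial prefactor by weakening the exponent to some $\kappa_4<2$. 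The final union bound over $\{0,1\}^K$ is the same as yours. You instead exploit the binary structure, via the same case analysis as in the proof of Proposition~\ref{prop:continuity}. Once the constants $p^\top\psi$ and $1_{\{A\neq\emptyset\}}$ cancel on centering, $\sup_t|\Delta^N(\psi,t)|$ is exactly the Kolmogorov--Smirnov statistic of the i.i.d.\ sample $Z_A(X^1),\dots,Z_A(X^N)$, so DKW with Massart's constant applies directly. It does so for all $w>0$ and with no continuity needed. Your route is more elementary and yields explicit, essentially sharp constants $D_1=2^{K+1}$, $D_2=2$, with no VC-preservation step and no loss in the exponent. The paper's route is more generic: it needs no closed form for the integrand, only that it is a bounded pointwise maximum of finitely many VC classes. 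It would therefore carry over unchanged to settings where the maximum does not collapse to a single indicator.
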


From Lemma \ref{lem:technical}, we obtain the exponential bound
\begin{equation}\label{eq:expbound}
P\left(\|\hat{\alpha}^N - \alpha^*\|_{\infty} > \frac{w}{\sqrt{N}}\right) \leq D_1 e^{-D_2 w^2}.
\end{equation}
Then, we observe that
\begin{equation}\label{eq:hatineqs}
\alpha - \|\hat{\alpha}^N - \alpha^*\|_{\infty} \leq \hat{\alpha}^N\left(\hat{t}^N\right) - \|\hat{\alpha}^N - \alpha^*\|_{\infty} \leq \alpha^*\left(\hat{t}^N\right),
\end{equation}
where the first inequality follows by (\ref{eq:hatroot}).

The random variables $c\left(X^m,j\right)$ and $c\left(X^n,k\right)$ are independent as long as $m\neq n$. Therefore, because $c\left(X,k\right)$ is a continuous random variable for each $k$, we may assume without loss of generality that $c\left(X^m,j\right) \neq c\left(X^n,k\right)$. For fixed $\psi$ and $n$, the mapping $t\mapsto G\left(\psi,t\right)$ has jumps of size at most $1$, whence $t\mapsto \mathbb{E}^N G\left(\psi,t\right)$ has jumps of size at most $\frac{1}{N}$. Since $\hat{\alpha}^N\left(t\right) = \min_{\psi\in\left\{0,1\right\}^K} \mathbb{E}^N G\left(\psi,t\right)$ is the pointwise minimum of finitely many such mappings, its jumps are also bounded by $\frac{1}{N}$.

By definition of $\hat{t}^N$, we have $\hat{\alpha}^N\left(\hat{t}^N\right) \geq \alpha$, and by definition of the infimum, we have $\hat{\alpha}^N\left(\lim_{t\nearrow \hat{t}^N} t\right) \leq \alpha$. Consequently,
\begin{equation*}
\hat{\alpha}^N\left(\hat{t}^N\right) - \alpha < \frac{1}{N}.
\end{equation*}
By $\hat{\alpha}^N\left(\hat{t}^N\right) \geq \alpha^*\left(\hat{t}^N\right) - \|\hat{\alpha}^N - \alpha^*\|_{\infty}$, we also have
\begin{equation*}
\alpha^*\left(\hat{t}^N\right) \leq \alpha + \|\hat{\alpha}^N - \alpha^*\|_{\infty} + \frac{1}{N}.
\end{equation*}
Combining this with (\ref{eq:hatineqs}) yields
\begin{equation*}
\left|\alpha^*\left(\hat{t}^N\right) - \alpha\right| \leq \|\hat{\alpha}^N - \alpha^*\|_{\infty} + \frac{1}{N}.
\end{equation*}
Applying (\ref{eq:expbound}) and choosing suitably large $C_1,C_2 > 0$ completes the proof.

\subsection{Proof of Theorem \ref{thm:conct}}

From Proposition \ref{prop:monotonicity}, we know that there exist $t_1 < t^* < t_2$ such that $\alpha^*\left(t^*\right)$ is strictly increasing on $\left[t_1,t_2\right]$. We may take a small $\delta > 0$ such that $t_1 < t^* - \delta < t^* + \delta < t_2$. Define
\begin{equation}\label{eq:bdelta}
b\left(\delta\right) = \min\left\{\alpha^*\left(t^*+\delta\right)-\alpha,\alpha - \alpha^*\left(t^*-\delta\right)\right\}.
\end{equation}
We claim that, if $\|\hat{\alpha}^N - \alpha^*\|_{\infty} \leq \frac{1}{2}b\left(\delta\right)$, then $\left|\hat{t}^N-t^*\right| \leq \delta$.

To prove the claim, suppose that $\|\hat{\alpha}^N - \alpha^*\|_{\infty} \leq \frac{1}{2}b\left(\delta\right)$ and take $t < t^* - \delta$. Then, $\alpha^*\left(t\right) < \alpha^*\left(t^*-\delta\right)$. By (\ref{eq:bdelta}), we have
\begin{equation*}
b\left(\delta\right) \leq \alpha - \alpha^*\left(t^*-\delta\right) \qquad \Rightarrow \qquad \alpha^*\left(t^*-\delta\right) \leq \alpha - b\left(\delta\right).
\end{equation*}
Then,
\begin{equation*}
\hat{\alpha}^N\left(t\right) \leq \alpha^*\left(t\right) + \frac{1}{2}b\left(\delta\right) \leq \alpha - \frac{1}{2}b\left(\delta\right) < \alpha.
\end{equation*}
Therefore, for all $t < t^*-\delta$, we have $\hat{\alpha}^N\left(t\right) < \alpha$. Since $\hat{\alpha}^N\left(\hat{t}^N\right) \geq \alpha$ by definition, it follows that $\hat{t}^N \geq t^*-\delta$.

For the other direction, (\ref{eq:bdelta}) yields
\begin{equation*}
b\left(\delta\right) \leq \alpha^*\left(t^*+\delta\right)-\alpha \qquad \Rightarrow \qquad \alpha^*\left(t^*+\delta\right) \geq \alpha + b\left(\delta\right).
\end{equation*}
We then have
\begin{equation*}
\hat{\alpha}^N\left(t+\delta\right) \geq \alpha^*\left(t+\delta\right) - \frac{1}{2}b\left(\delta\right) \geq \alpha + \frac{1}{2}b\left(\delta\right) > \alpha.
\end{equation*}
So, by the definition of $\hat{t}^N$ we have $\hat{t}^N \leq t^*+\delta$. Thus, the claim is proved.

As a consequence, we have
\begin{equation*}
P\left(\left|\hat{t}^N-t\right|>\delta\right) \leq P\left(\|\hat{\alpha}^N - \alpha^*\|_{\infty} > \frac{1}{2}b\left(\delta\right)\right) \leq D_1 e^{-\frac{M_2 b\left(\delta\right)^2}{4}N},
\end{equation*}
where the last inequality (for appropriate $M_1,M_2 > 0$) is obtained by repeating the arguments in the proof of Theorem \ref{thm:conc} (essentially Lemma \ref{lem:technical}). The proof is then completed by showing that $b\left(\delta\right)\sim \mathcal{O}\left(\delta\right)$. We show this in the following technical lemma, proved in a separate subsection of the Appendix. The lemma covers only one of the terms inside the minimum in (\ref{eq:bdelta}), but the other term is handled symmetrically.

\begin{lem}\label{lem:technical2}
Suppose that we are in the situation of Theorem \ref{thm:conct}. Then, there exists some $D > 0$ such that, for all sufficiently small $\delta > 0$, we have $\alpha^*\left(t^*+\delta\right) - \alpha \geq D\delta$.
\end{lem}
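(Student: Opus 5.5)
We prove Lemma \ref{lem:technical2} using the finite-minimum representation (\ref{eq:finitemin}) from the proof of Proposition \ref{prop:continuity}. Write $\alpha^*(t) = \min_{\psi\in\{0,1\}^K} g(\psi,t)$, where $g(\psi,t) = \mathbb{E}\left(\max_k 1_{\{c(X,k)\leq t\}} + \psi_k\right) - p^\top\psi$. Since $\alpha^*(t^*) = \alpha$, it suffices to show that
\begin{equation*}
g(\psi,t^*+\delta) - g(\psi,t^*) \geq D\delta \qquad \text{for every } \psi\in\{0,1\}^K \text{ and all small } \delta>0.
\end{equation*}
Granting this, we take a minimizer $\psi'$ of $g(\cdot,t^*+\delta)$ over $\{0,1\}^K$ and obtain
\begin{equation*}
\alpha^*(t^*+\delta) = g(\psi',t^*+\delta) \geq g(\psi',t^*) + D\delta \geq \alpha^*(t^*) + D\delta = \alpha + D\delta.
\end{equation*}
This is exactly the same device as the final step in the proof of Proposition \ref{prop:monotonicity}, now with a quantitative gap $D\delta$ in place of a fixed $\delta$.

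\textbf{Step 1 (rewriting the increment).} The computation in the proof of Proposition \ref{prop:monotonicity}, via (\ref{eq:rewritediff}), shows that for every binary $\psi$ there is a nonempty set $A$ (namely $A=\{k:\psi_k=1\}$ if this set is nonempty, and $A=\{1,\dots,K\}$ if $\psi\equiv 0$) such that
\begin{equation*}
g(\psi,t^*+\delta) - g(\psi,t^*) = P\left(c(X,k)>t^*\ \forall k\in A\right) - P\left(c(X,k)>t^*+\delta\ \forall k\in A\right).
\end{equation*}
The right-hand side equals $P\left(\min_{k\in A} c(X,k) \in (t^*,t^*+\delta]\right)$.

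\textbf{Step 2 (lower bound via the density).} By hypothesis, the joint density of $(c(X,1),\dots,c(X,K))$ is at least $C$ on a neighborhood of $(t^*,\dots,t^*)$, which we may take to contain the cube $[t^*-\epsilon,t^*+\epsilon]^K$. For $\delta<\epsilon$, the event $\{\min_{k\in A} c(X,k)\in(t^*,t^*+\delta]\}$ contains the box
\begin{equation*}
\left\{c(X,k)\in(t^*,t^*+\delta]\ \forall k\in A\right\} \cap \left\{c(X,j)\in[t^*-\epsilon,t^*+\epsilon]\ \forall j\notin A\right\},
\end{equation*}
whose probability is at least $C\,\delta^{|A|}(2\epsilon)^{K-|A|}$. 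This lower bound is of order $\delta$ only when $|A|=1$, so the box must be chosen more carefully when $|A|\geq 2$.

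\textbf{Main obstacle and its resolution.} The difficulty is exactly the case $|A|\geq 2$. We handle it by fixing one index $k_0\in A$ and using the box
\begin{equation*}
\left\{c(X,k_0)\in(t^*,t^*+\delta]\right\} \cap \left\{c(X,k)\in(t^*+\delta,t^*+\epsilon]\ \forall k\in A\setminus\{k_0\}\right\} \cap \left\{c(X,j)\in[t^*-\epsilon,t^*+\epsilon]\ \forall j\notin A\right\}.
\end{equation*}
On this box $\min_{k\in A} c(X,k) = c(X,k_0)\in(t^*,t^*+\delta]$, so the box lies inside the event from Step 2. Its probability is at least
\begin{equation*}
C\,\delta\,(\epsilon-\delta)^{|A|-1}(2\epsilon)^{K-|A|} \geq C\,\delta\,(\epsilon/2)^{K-1} \qquad \text{for } \delta\leq\epsilon/2.
\end{equation*}
Taking $D = C(\epsilon/2)^{K-1}$, which does not depend on $\psi$, gives the uniform bound required. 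The symmetric term $\alpha-\alpha^*(t^*-\delta)\geq D\delta$ is handled in the same way: compare $g(\psi,t^*)$ with $g(\psi,t^*-\delta)$, taking $\psi'$ to be a minimizer at $t^*$, and use the box with $c(X,k_0)\in(t^*-\delta,t^*]$ and the other coordinates of $A$ in $(t^*,t^*+\epsilon]$.
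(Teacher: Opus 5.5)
Your proposal is correct and follows essentially the same route as the paper's proof. Both arguments use the rewriting from the proof of Proposition \ref{prop:monotonicity} to obtain a lower bound $D\delta$ on the increment, uniform over $\psi\in\{0,1\}^K$, then apply a density-box estimate with one coordinate of $A$ in a slab of width $\delta$ and the others at fixed width above $t^*$, and finish with the minimizer comparison. The only differences are cosmetic: you integrate the joint density directly instead of first bounding marginal densities, and you place the remaining coordinates of $A$ above $t^*+\delta$ rather than merely above $t^*$.
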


\subsection{Proof of Theorem \ref{thm:decision}}

As in other proofs, let $\mathbb{E}^N$ denote the expectation over the empirical distribution, under which $P\left(X = X^n\right) = \frac{1}{N}$ for $n = 1,...,N$. Define the gap
\begin{equation*}
B\left(t\right) = \left(\min_{\psi\notin\Psi^*\left(t\right)} \mathbb{E}\left(\max_k 1_{\left\{c\left(X,k\right)\leq t\right\}} + \psi_k\right) - p^\top \psi\right) - \alpha^*\left(t\right).
\end{equation*}
By construction, $B\left(t^*\right) > 0$. By continuity arguments (see the proof of Proposition \ref{prop:continuity}), we have
\begin{equation*}
\inf_{t\in\left[t^*-\delta,t^*+\delta\right]} B\left(t\right) \geq B_0
\end{equation*}
for some $B_0 > 0$ and $\delta > 0$. Then, if
\begin{equation*}
\sup_{t\in\left[t^*-\delta,t^*+\delta\right]} \sup_{\psi\in\left\{0,1\right\}^K} \left|\mathbb{E}^N\left(\max_k 1_{\left\{c\left(X,k\right)\leq t\right\}} + \psi_k\right) - \mathbb{E}\left(\max_k 1_{\left\{c\left(X,k\right)\leq t\right\}} + \psi_k\right)\right| < \frac{1}{2}B_0,
\end{equation*}
we have $\hat{\Psi}^N\left(t\right) = \Psi^*\left(t\right)$ for all $t\in\left[t^*-\delta,t^*+\delta\right]$. If we also have $\left|\hat{t}^N-t^*\right|\leq \delta$, then $\hat{\Psi}^N\left(\hat{t}^N\right) = \Psi^*\left(\hat{t}^N\right)$. Therefore,
\begin{eqnarray*}
&\,& P\left(\hat{\Psi}^N\left(\hat{t}^N\right) \neq \Psi^*\left(\hat{t}^N\right)\right) \leq  P\left(\left|\hat{t}^N-t^*\right|>\delta\right)\\
&+& P\left(\sup_{t\in\left[t^*-\delta,t^*+\delta\right]} \sup_{\psi\in\left\{0,1\right\}^K} \left|\mathbb{E}^N\left(\max_k 1_{\left\{c\left(X,k\right)\leq t\right\}} + \psi_k\right) - \mathbb{E}\left(\max_k 1_{\left\{c\left(X,k\right)\leq t\right\}} + \psi_k\right)\right| > \frac{1}{2}B_0\right).
\end{eqnarray*}
The first term on the right-hand side is bounded using Theorem \ref{thm:conct}. The second term is bounded by repeating the arguments in the proof of Theorem \ref{thm:conc} (essentially Lemma \ref{lem:technical}). The desired result then follows.

\subsection{Proof of Theorem \ref{thm:subset}}

As in other proofs, let $\mathbb{E}^N$ denote the expectation over the empirical distribution, under which $P\left(X = X^n\right) = \frac{1}{N}$ for $n = 1,...,N$. As in the proof of Theorem \ref{thm:conc}, define
\begin{equation*}
G\left(\psi,t\right) = \max_k \left(1_{\left\{c\left(X,k\right)\leq t\right\}} + \psi_k\right) - p^\top \psi.
\end{equation*}
Let
\begin{equation*}
\bar{\Delta}^N = \sup_{\psi\in\left\{0,1\right\}^K} \sup_t \left|\mathbb{E}^N \left(G\left(\psi,t\right)\right) - \mathbb{E}\left(G\left(\psi,t\right)\right)\right|.
\end{equation*}
By repeating arguments in the proof of Theorem \ref{thm:conc} (essentially Lemma \ref{lem:technical}), we can obtain
\begin{equation}\label{eq:concGdiff}
P\left(\bar{\Delta}^N > \frac{w}{\sqrt{N}}\right) \leq M_1 e^{-M_2 w^2}
\end{equation}
for some $M_1,M_2 > 0$.

As in the proof of Theorem \ref{thm:decision}, define
\begin{equation*}
B\left(t\right) = \left(\min_{\psi\notin\Psi^*\left(t\right)} \mathbb{E}\left(\max_k 1_{\left\{c\left(X,k\right)\leq t\right\}} + \psi_k\right) - p^\top \psi\right) - \alpha^*\left(t\right),
\end{equation*}
recalling that $\alpha^*\left(t^*\right) = \alpha$. Without loss of generality, suppose that $\Psi^*\left(t^*\right) \neq \left\{0,1\right\}^K$, as the result would trivially hold in that case. Take arbitrary $\hat{\psi} \in \hat{\Psi}\left(\hat{t}^N\right)$. By repeating arguments in the proof of Theorem \ref{thm:conc}, we have $\hat{\alpha}^N\left(\hat{t}^N\right) - \alpha < \frac{1}{N}$. Observe that
\begin{equation*}
\min_{\psi\notin\Psi^*\left(t^*\right)} \mathbb{E}^N\left(G\left(\psi,\hat{t}^N\right)\right) = \hat{\alpha}^N\left(\hat{t}^N\right)
\end{equation*}
on the event $\left\{\hat{\psi}\notin\Psi^*\left(t^*\right)\right\}$. Therefore,
\begin{eqnarray*}
P\left(\hat{\psi}\notin\Psi^*\left(t^*\right)\right) &\leq& P\left(\min_{\psi\notin\Psi^*\left(t^*\right)} \mathbb{E}^N\left(G\left(\psi,\hat{t}^N\right)\right) < \alpha + \frac{1}{N}\right)\\
&\leq & P\left(-\bar{\Delta} + \min_{\psi\notin\Psi^*\left(t^*\right)} \mathbb{E}\left(G\left(\psi,\hat{t}^N\right)\right) < \alpha + \frac{1}{N}\right)\\
&\leq & P\left(-\bar{\Delta} + \min_{\psi\notin\Psi^*\left(t^*\right)} \mathbb{E}\left(G\left(\psi,\hat{t}^N\right)\right) - \min_{\psi\notin\Psi^*\left(t^*\right)} \mathbb{E}\left(G\left(\psi,t^*\right)\right) + \min_{\psi\notin\Psi^*\left(t^*\right)} \mathbb{E}\left(G\left(\psi,t^*\right)\right)< \alpha + \frac{1}{N}\right)\\
&=& P\left(-\bar{\Delta} + \min_{\psi\notin\Psi^*\left(t^*\right)} \mathbb{E}\left(G\left(\psi,\hat{t}^N\right)\right) - \min_{\psi\notin\Psi^*\left(t^*\right)} \mathbb{E}\left(G\left(\psi,t^*\right)\right) + \alpha + B\left(t^*\right)< \alpha + \frac{1}{N}\right)\\
&=& P\left(-\bar{\Delta} + \min_{\psi\notin\Psi^*\left(t^*\right)} \mathbb{E}\left(G\left(\psi,\hat{t}^N\right)\right) - \min_{\psi\notin\Psi^*\left(t^*\right)} \mathbb{E}\left(G\left(\psi,t^*\right)\right) -\frac{1}{N}< -B\left(t^*\right)\right)\\
&\leq & P\left(-\bar{\Delta} < -\frac{1}{3}B\left(t^*\right)\right)\\
&\,& + P\left(\min_{\psi\notin\Psi^*\left(t^*\right)} \mathbb{E}\left(G\left(\psi,\hat{t}^N\right)\right) - \min_{\psi\notin\Psi^*\left(t^*\right)} \mathbb{E}\left(G\left(\psi,t^*\right)\right)<-\frac{1}{3}B\left(t^*\right)\right) + 1_{\left\{-\frac{1}{N} < -\frac{1}{3}B\left(t^*\right)\right\}}.
\end{eqnarray*}
For large enough $N$, $1_{\left\{-\frac{1}{N} < -\frac{1}{3}B\left(t^*\right)\right\}} = 0$. Applying (\ref{eq:concGdiff}), we have
\begin{equation*}
P\left(-\bar{\Delta} < -\frac{1}{3}B\left(t^*\right)\right)\leq P\left(\left|\bar{\Delta}\right|>\frac{1}{3}B\left(t^*\right)\right)\leq M_1 e^{-\frac{M_2 B\left(t^*\right)^2}{9}N}.
\end{equation*}
By repeating arguments in the proof of Proposition \ref{prop:continuity}, we may show that $t\mapsto \min_{\psi\notin\Psi^*\left(t^*\right)} \mathbb{E}\left(G\left(\psi,\hat{t}^N\right)\right)$ is Lipschitz. Therefore, there exists $M_3 > 0$ such that
\begin{eqnarray*}
&\,& P\left(\min_{\psi\notin\Psi^*\left(t^*\right)} \mathbb{E}\left(G\left(\psi,\hat{t}^N\right)\right) - \min_{\psi\notin\Psi^*\left(t^*\right)} \mathbb{E}\left(G\left(\psi,t^*\right)\right)<-\frac{1}{3}B\left(t^*\right)\right)\\
&\leq & P\left(\left|\min_{\psi\notin\Psi^*\left(t^*\right)} \mathbb{E}\left(G\left(\psi,\hat{t}^N\right)\right) - \min_{\psi\notin\Psi^*\left(t^*\right)} \mathbb{E}\left(G\left(\psi,t^*\right)\right)<-\frac{1}{3}B\left(t^*\right)\right|> \frac{1}{3}B\left(t^*\right)\right)\\
&\leq & P\left(M_3 \left|\hat{t}^N-t^*\right| > \frac{1}{3}B\left(t^*\right)\right)\\
&\leq & C_3 e^{-\frac{C_4 B\left(t^*\right)^2}{9M^2_3}N},
\end{eqnarray*}
where $C_3,C_4$ are obtained from Theorem \ref{thm:conct}. Since these bounds do not depend on $\hat{\psi}$, the desired result follows.

\subsection{Proof of Proposition \ref{prop:newargmax}}

Recall that $S\left(t,\psi,X\right) = \arg\max_k 1_{\left\{c\left(X,k\right)\leq t\right\}} + \psi_k$. Let
\begin{equation*}
E = \left\{ \hat{\Psi}^N\left(\hat{t}^N\right) \subseteq \Psi^*\left(t^*\right)\right\}
\end{equation*}
Then,
\begin{eqnarray*}
P\left(S\left(\hat{t}^N,\hat{\psi},X\right) \neq S\left(t^*,\psi^*,X\right)\right) &\leq & P\left(E\cap\left\{ S\left(\hat{t}^N,\hat{\psi},X\right) \neq S\left(t^*,\psi^*,X\right)\right\}\right) + P\left(E^c\right)\\
&\leq & P\left(E\cap\left\{ S\left(\hat{t}^N,\hat{\psi},X\right) \neq S\left(t^*,\psi^*,X\right)\right\}\right) + C_7 e^{-C_8 N},
\end{eqnarray*}
where $C_7,C_8 > 0$ are obtained from Theorem \ref{thm:subset}. On the event $E$, we may assume that $\hat{\psi} = \psi^*$ without loss of generality. Thus, it suffices to bound
\begin{eqnarray}
P\left(S\left(\hat{t}^N,\hat{\psi},X\right) \neq S\left(t^*,\psi^*,X\right)\right) &=& P\left(S\left(\hat{t}^N,\hat{\psi},X\right) \setminus S\left(t^*,\psi^*,X\right)\neq \emptyset\right)\nonumber\\
&\,& + P\left(S\left(t^*,\psi^*,X\right) \setminus S\left(\hat{t}^N,\hat{\psi},X\right)\neq\emptyset\right).\label{eq:twosetminuses}
\end{eqnarray}
The two cases on the right-hand side of (\ref{eq:twosetminuses}) are handled symmetrically, so we may focus on
\begin{equation}\label{eq:setminus}
P\left(S\left(\hat{t}^N,\psi^*,X\right) \setminus S\left(t^*,\psi^*,X\right) \neq \emptyset\right) \leq \sum_j P\left(j \in S\left(\hat{t}^N,\psi^*,X\right) \setminus S\left(t^*,\psi^*,X\right)\right).
\end{equation}
Note that $S\left(t^*,\psi^*,X\right)$ is never empty. Therefore,
\begin{eqnarray}
&\,& P\left(j \in S\left(\hat{t}^N,\psi^*,X\right) \setminus S\left(t^*,\psi^*,X\right)\right)\nonumber\\
&=& \sum_i P\left(j \in S\left(\hat{t}^N,\psi^*,X\right) \setminus S\left(t^*,\psi^*,X\right),\,i\in S\left(t^*,\psi^*,X\right)\right)\nonumber\\
&\leq & \sum_i P\left(1_{\left\{c\left(X,j\right)\leq \hat{t}^N\right\}} + \psi^*_j \geq 1_{\left\{c\left(X,i\right)\leq \hat{t}^N\right\}} + \psi^*_i,\, 1_{\left\{c\left(X,i\right)\leq t^*\right\}} + \psi^*_i > 1_{\left\{c\left(X,j\right)\leq t^*\right\}} + \psi^*_j\right)\nonumber\\
&\leq & \sum_i P\left(1_{\left\{c\left(X,j\right)\leq \hat{t}^N\right\}} + \psi^*_j \geq 1_{\left\{c\left(X,i\right)\leq \hat{t}^N\right\}} + \psi^*_i,\, 1_{\left\{c\left(X,i\right)\leq t^*\right\}} + \psi^*_i \geq 1 + 1_{\left\{c\left(X,j\right)\leq t^*\right\}} + \psi^*_j\right),\label{eq:integersonly}
\end{eqnarray}
where (\ref{eq:integersonly}) follows from the fact that $1_{\left\{c\left(X,j\right)\leq t^*\right\}} + \psi^*_j$ is integer-valued. Letting $d_j = 1_{\left\{c\left(X,j\right)\leq \hat{t}^N\right\}} - 1_{\left\{c\left(X,j\right)\leq t^*\right\}}$, we may further bound (\ref{eq:integersonly}) as
\begin{eqnarray*}
&\,& \sum_i P\left(1_{\left\{c\left(X,j\right)\leq \hat{t}^N\right\}} + \psi^*_j \geq 1_{\left\{c\left(X,i\right)\leq \hat{t}^N\right\}} + \psi^*_i,\, 1_{\left\{c\left(X,i\right)\leq t^*\right\}} + \psi^*_i \geq 1 + 1_{\left\{c\left(X,j\right)\leq t^*\right\}} + \psi^*_j\right)\\
&=& \sum_i P\left(d_j + 1_{\left\{c\left(X,j\right)\leq \hat{t}^N\right\}} + \psi^*_j \geq d_i + 1_{\left\{c\left(X,i\right)\leq \hat{t}^N\right\}} + \psi^*_i,\, 1_{\left\{c\left(X,i\right)\leq t^*\right\}} + \psi^*_i \geq 1 + 1_{\left\{c\left(X,j\right)\leq t^*\right\}} + \psi^*_j\right)\\
&\leq & \sum_i P\left(d_j \geq 1 + d_i\right)\\
&\leq & \sum_i P\left(\left\{d_j \neq 0\right\}\cup\left\{d_i \neq 0\right\}\right)\\
&\leq & \sum_i P\left(d_j \neq 0\right) + P\left(d_i \neq 0\right).
\end{eqnarray*}
We then observe that
\begin{eqnarray}
P\left(d_j \neq 0\right) &=& P\left(\left\{\hat{t}^N < c\left(X,j\right) \leq t^*\right\}\cup\left\{t^* < c\left(X_j\right) \leq \hat{t}^N\right\}\right)\nonumber\\
&\leq & P\left(\left|c\left(X,j\right) - t^*\right| \leq \left|\hat{t}^N - t^*\right|\right)\nonumber\\
&\leq & \mathbb{E}\left(P\left(\left|c\left(X,j\right) - t^*\right| \leq \left|\hat{t}^N - t^*\right|\mid \hat{t}^N\right)\right).\label{eq:usecondprob}
\end{eqnarray}
For $\delta > 0$, we may write
\begin{equation*}
P\left(\left|c\left(X,j\right) - t^*\right| \leq \left|\hat{t}^N - t^*\right|\mid \hat{t}^N\right) \leq P\left(\left|c\left(X,j\right) - t^*\right| \leq \left|\hat{t}^N - t^*\right|\mid \hat{t}^N\right)1_{\left\{\left|\hat{t}^N-t^*\right|<\delta\right\}} + 1_{\left\{\left|\hat{t}^N-t^*\right| > \delta\right\}},
\end{equation*}
and, returning to (\ref{eq:usecondprob}) and taking $\delta$ to be sufficiently small, we obtain
\begin{eqnarray*}
P\left(d_j \neq 0\right) &\leq & \mathbb{E}\left(P\left(\left|c\left(X,j\right) - t^*\right| \leq \left|\hat{t}^N - t^*\right|\mid \hat{t}^N\right)1_{\left\{\left|\hat{t}^N-t^*\right|<\delta\right\}}\right) + C_3 e^{-C_4 \delta^2 N},
\end{eqnarray*}
where $C_3,C_4 > 0$ are obtained from Theorem \ref{thm:conct}. Finally, we derive
\begin{eqnarray}
&\,& \mathbb{E}\left(P\left(\left|c\left(X,j\right) - t^*\right| \leq \left|\hat{t}^N - t^*\right|\mid \hat{t}^N\right)1_{\left\{\left|\hat{t}^N-t^*\right|<\delta\right\}}\right)\nonumber\\
&\leq & \mathbb{E}\left(M_j\left|\hat{t}^N-t^*\right|\cdot 1_{\left\{\left|\hat{t}^N-t^*\right|<\delta\right\}}\right)\label{eq:uselipschitz}\\
&=& M_j \int^{\infty}_0 P\left( \left|\hat{t}^N-t^*\right|\cdot 1_{\left\{\left|\hat{t}^N-t^*\right|<\delta\right\}} > w\right)dw\nonumber\\
&\leq & M_j \int^\delta_0 P\left(\left|\hat{t}^N-t^*\right| > w\right)dw\nonumber\\
&\leq & M_j \int^\delta_0 C_4 e^{-C_4 w^2 N}dw\label{eq:againapplyconc}\\
&=& N^{-\frac{1}{2}} M_j C_3 \int^{\delta\sqrt{N}}_0 e^{-C_4 u^2}du,\label{eq:changeofvars}
\end{eqnarray}
where $M_j > 0$ in (\ref{eq:uselipschitz}) is obtained from the assumed upper bound on the density of $c\left(X,j\right)$, (\ref{eq:againapplyconc}) applies Theorem \ref{thm:conct}, and (\ref{eq:changeofvars}) uses a change of variables $u = w\sqrt{N}$. Note that $\int^\infty_0 e^{-C_4 u^2}du < \infty$. This completes the proof, because there are finitely many $j$ and $M_j \leq \max_k M_k$.

\subsection{Proof of Theorem \ref{thm:gradient}}

As in other proofs, let $\mathbb{E}^N$ denote the expectation over the empirical distribution, under which $P\left(X = X^n\right) = \frac{1}{N}$ for $n = 1,...,N$ and $X$ being independent of $X^1,...,X^N$. For $k=1,...,K-1$, define
\begin{equation*}
\rho_k\left(x,\theta,t,\psi\right) = \frac{e^{\theta_k}1_{\left\{k\in S\left(x,t,\psi\right)\right\}}}{1_{\left\{K\in S\left(x,t,\psi\right)\right\}+\sum_{j<K-1} e^{\theta}_j1_{\left\{j\in S\left(x,t,\psi\right)\right\}}}}.
\end{equation*}
Denote $R^N_k\left(\theta,t,\psi\right) = \mathbb{E}^N\left(\rho_k\left(X,\theta,t,\psi\right)\right)$ and $R_k\left(\theta,t,\psi\right) = \mathbb{E}\left(\rho_k\left(X,\theta,t,\psi\right)\right)$.

By definition, $p_k = R_k\left(\theta^*,t^*,\psi^*\right)$ and $\ell^N_k\left(\hat{\theta},\hat{t}^N,\hat{\psi}^N\right) = p_k - R^N_k\left(\hat{\theta},\hat{t}^N,\hat{\psi}^N\right)$. Therefore,
\begin{equation*}
R^N\left(\hat{\theta},\hat{t}^N,\hat{\psi}^N\right) - R\left(\theta^*,t^*,\psi^*\right) = -\ell^N\left(\hat{\theta},\hat{t}^N,\hat{\psi}^N\right).
\end{equation*}
We will write
\begin{equation}\label{eq:threeparts}
-\ell^N\left(\hat{\theta},\hat{t}^N,\hat{\psi}^N\right) = J_1 + J_2 + J_3
\end{equation}
where
\begin{eqnarray*}
J_1 &=& R^N\left(\hat{\theta},\hat{t}^N,\hat{\psi}^N\right) - R^N\left(\hat{\theta},t^*,\psi^*\right),\\
J_2 &=& R^N\left(\hat{\theta},t^*,\psi^*\right) - R\left(\hat{\theta},t^*,\psi^*\right),\\
J_3 &=& R\left(\hat{\theta},t^*,\psi^*\right) - R\left(\theta^*,t^*,\psi^*\right),
\end{eqnarray*}
and bound each of the three terms separately.

\textit{Step 1}. In this step, we show $\mathbb{E}\left(\|J_1\|_2\right) \lesssim N^{-\frac{1}{2}}$. We first write
\begin{eqnarray}
\|J_1\|_2 &=& \|\mathbb{E}^N\left(\rho\left(X,\hat{\theta},\hat{t}^N,\hat{\psi}^N\right) - \rho\left(X,\hat{\theta},t^*,\psi^*\right)\|_2\right)\nonumber\\
&\leq & \mathbb{E}^N\left(\|\rho\left(X,\hat{\theta},\hat{t}^N,\hat{\psi}^N\right) - \rho\left(X,\hat{\theta},t^*,\psi^*\right)\|_2 \right)\nonumber\\
&=& \mathbb{E}^N\left(\|\rho\left(X,\hat{\theta},\hat{t}^N,\hat{\psi}^N\right) - \rho\left(X,\hat{\theta},t^*,\psi^*\right)\|_2 1_{\left\{S\left(\hat{t}^N,\hat{\psi}^N,X\right) \neq S\left(t^*,\psi^*,X\right)\right\}}\right)\label{eq:addindicator}\\
&\leq& \sqrt{K}\cdot \mathbb{E}^N\left(1_{\left\{S\left(\hat{t}^N,\hat{\psi}^N,X\right) \neq S\left(t^*,\psi^*,X\right)\right\}}\right),\label{eq:binaryrho}
\end{eqnarray}
where (\ref{eq:addindicator}) holds because $\rho\left(X,\hat{\theta},\hat{t}^N,\hat{\psi}^N\right) = \rho\left(X,\hat{\theta},t^*,\psi^*\right)$ when $S\left(\hat{t}^N,\hat{\psi}^N,X\right) = S\left(t^*,\psi^*,X\right)$, and (\ref{eq:binaryrho}) is due to the fact that $\rho_k\left(X,\theta,t,\psi\right) \in \left(0,1\right)$ for all $k = 1,...,K-1$. Taking expectations, we obtain
\begin{eqnarray*}
\mathbb{E}\left(\|J_1\|_2\right) &\leq & \sqrt{K}\cdot P\left(S\left(\hat{t}^N,\hat{\psi}^N,X\right) \neq S\left(t^*,\psi^*,X\right)\right)\\
&\lesssim & N^{-\frac{1}{2}},
\end{eqnarray*}
where the last line follows by Proposition \ref{prop:newargmax}.

\textit{Step 2}. In this step, we show $\mathbb{E}\left(\|J_2\|_2\right) \lesssim N^{-\frac{1}{2}}$. We write
\begin{eqnarray}
\|J_2\|_2 &=& \sqrt{\sum_k \left(\mathbb{E}^N\left(\rho_k\left(X,\hat{\theta},t^*,\psi^*\right)\right) - \mathbb{E}\left(\rho_k\left(X,\hat{\theta},t^*,\psi^*\right)\right)\right)^2}\nonumber\\
&\leq & N^{-\frac{1}{2}}\sqrt{\sum_k \left(\Delta^N_k\left(\hat{\theta}\right)\right)^2}\nonumber\\
&\leq & N^{-\frac{1}{2}}\max_k \left|\Delta^N_k\left(\hat{\theta}\right)\right|,\label{eq:intermsofDelta}
\end{eqnarray}
where
\begin{eqnarray*}
\Delta^N_k\left(\theta\right) &=& \sqrt{N}\left(\mathbb{E}^N\left(\rho_k\left(X,\theta,t^*,\psi^*\right)\right) - \mathbb{E}\left(\rho_k\left(X,\theta,t^*,\psi^*\right)\right)\right)\\
&=& \sqrt{N}\sum_{T\subseteq\left\{1,...,K\right\},T\neq\emptyset} \left(P^N\left(S\left(X,t^*,\psi^*\right)=T\right) - P\left(S\left(X,t^*,\psi^*\right)=T\right)\right)\frac{e^{\theta_k}1_{\left\{k\in T\right\}}}{1_{\left\{K\in T\right\}} + \sum_{j<K-1}e^{\theta_j}1_{\left\{j\in T\right\}}}.
\end{eqnarray*}
Then, for any $\theta$,
\begin{equation*}
\left|\Delta^N_k\left(\theta\right)\right| \leq \sqrt{N} \sum_{T\subseteq\left\{1,...,K\right\},T\neq\emptyset} \left|P^N\left(S\left(X,t^*,\psi^*\right)=T\right) - P\left(S\left(X,t^*,\psi^*\right)=T\right)\right|.
\end{equation*}
By Hoeffding's inequality,
\begin{equation*}
P\left(\sqrt{N}\left|P^N\left(S\left(X,t^*,\psi^*\right)=T\right) - P\left(S\left(X,t^*,\psi^*\right)=T\right)\right| \geq w\right) \leq 2e^{-2w^2}.
\end{equation*}
A simple union bound yields
\begin{eqnarray*}
P\left(\sup_{\theta}\left|\Delta^N_k\left(\theta\right)\right| > 2^K w\right) \hspace{-0.05in}&\leq &\hspace{-0.05in} P\left(\sqrt{N}\sum_{T\subseteq\left\{1,...,K\right\},T\neq\emptyset} \left|P^N\left(S\left(X,t^*,\psi^*\right)=T\right) - P\left(S\left(X,t^*,\psi^*\right)=T\right)\right| > 2^K w\right)\\
\hspace{-0.05in}&\leq & \hspace{-0.05in}\sum_{T\subseteq\left\{1,...,K\right\},T\neq\emptyset} P\left(\sqrt{N}\left|P^N\left(S\left(X,t^*,\psi^*\right)=T\right) - P\left(S\left(X,t^*,\psi^*\right)=T\right)\right| \geq w\right)\\
\hspace{-0.05in}&\leq & \hspace{-0.05in}2^{K+1}e^{-2w^2}
\end{eqnarray*}
for any $k$. Therefore, we may write
\begin{equation*}
P\left(\max_k \left|\Delta^N_k\left(\hat{\theta}\right)\right| > w\right) \leq D e^{-D' w^2}
\end{equation*}
for some $D,D' > 0$. Returning to (\ref{eq:intermsofDelta}), we now have a concentration bound on $\|J_2\|_2$, whence it is straightforward to show that $\mathbb{E}\left(\|J_2\|_2\right) \lesssim N^{-\frac{1}{2}}$.

\textit{Step 3}. We now complete the proof. Combining the results of Steps 1-2 with (\ref{eq:threeparts}) and the assumption on $\ell^N\left(\hat{\theta},\hat{t}^N,\hat{\psi}^N\right)$, we obtain $\mathbb{E}\left(\|J_3\|_2\right) \lesssim N^{-\frac{1}{2}}$. By the fundamental theorem of calculus, we have $J_3 = -H^N\cdot\left(\hat{\theta}-\theta^*\right)$ where
\begin{equation}\label{eq:inthessian}
H^N = -\int^1_0 \frac{\partial^2 L\left(\theta^*+w\left(\hat{\theta}-\theta^*\right),t^*,\psi^*\right)}{\partial\theta\partial\theta^\top}dw
\end{equation}
and
\begin{equation*}
L\left(\theta,t,\psi\right) = p^\top \theta - \mathbb{E}\left(\log \left(\sum_{k\in S\left(t,\psi,X\right)} e^{\theta_k}\right)\right).
\end{equation*}
Because the mapping $\theta\mapsto L\left(\theta,t,\psi\right)$ is concave, we know that its Hessian is negative semidefinite. Therefore, taking into account the minus sign in (\ref{eq:inthessian}), $H^N$ is positive semidefinite. Note that $\theta\mapsto L\left(\theta,t,\psi\right)$ is continuous. Therefore, if we show that the largest eigenvalue of its Hessian at $\theta = \theta^*$ is bounded above by a strictly negative number, then we will have some $\kappa > 0$ such that $\lambda_{\min}\left(H^N\right)\geq \kappa$. This implies
\begin{equation*}
\|\hat{\theta}-\theta^*\|_2 = \|\left(H^N\right)^{-1}J_3\|_2 \leq \frac{1}{\kappa}\|J_3\|_2\lesssim N^{-\frac{1}{2}}.
\end{equation*}
Thus, to complete the proof, it remains only to establish the following technical lemma, whose proof is given in a separate subsection of the Appendix.

\begin{lem}\label{lem:hessian}
Let $H\left(\theta\right) = -\frac{\partial^2 L\left(\theta,t^*,\psi^*\right)}{\partial\theta\partial\theta^\top}$. Let $\theta^*$ be the maximizer of $\theta\mapsto L\left(\theta,t^*,\psi^*\right)$. Then, $\theta^*$ is unique and
\begin{equation*}
\lambda_{\min}\left(H\left(\theta^*\right)\right) \geq \left(\min_{j<K} p_j\right)p_K.
\end{equation*}
\end{lem}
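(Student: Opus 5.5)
The plan is to compute the Hessian of $L(\cdot,t^*,\psi^*)$ explicitly, write it as an expected conditional covariance of the tie-breaking choice, and then bound its quadratic form from below using the marginal constraint $p_k = \sum_S Q(\psi^*,S)v^*_{S,k}$. Uniqueness of $\theta^*$ will then follow from strict concavity, since a strictly positive lower bound on $\lambda_{\min}(H)$ holds in a neighbourhood of any maximizer.

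\textbf{Step 1 (Hessian formula).} Throughout, use the normalization $\theta_K=0$ and regard $\theta\in\mathbb{R}^{K-1}$. For a fixed set $S$, the function $\theta\mapsto\log\sum_{k\in S}e^{\theta_k}$ has Hessian $\mathrm{diag}(v_S)-v_Sv_S^\top$, where $v_{S,k}=e^{\theta_k}1_{\{k\in S\}}/\sum_{j\in S}e^{\theta_j}$, restricted to the coordinates $1,\dots,K-1$. Differentiating under the finite sum over $S$ gives
\begin{equation*}
H(\theta)=\sum_S Q(\psi^*,S)\left(\mathrm{diag}(v_S)-v_Sv_S^\top\right).
\end{equation*}
For $u\in\mathbb{R}^{K-1}$, extend it by $u_K=0$. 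Then
\begin{equation*}
u^\top H(\theta)u=\sum_S Q(\psi^*,S)\,\mathrm{Var}_{v_S}(u_Y)=\mathbb{E}\left(\mathrm{Var}(u_Y\mid S)\right),
\end{equation*}
where, given $S$, the index $Y$ is drawn from $v_S$. Hence $H\succeq 0$.

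\textbf{Step 2 (lower bound at $\theta^*$).} At $\theta=\theta^*$, the first-order condition (\ref{eq:entropykkt}) states that the unconditional law of $Y$ is exactly $p$. I would rewrite each conditional variance as a pairwise sum,
\begin{equation*}
\mathrm{Var}_{v_S}(u_Y)=\frac12\sum_{i,j\in S}v_{S,i}v_{S,j}(u_i-u_j)^2.
\end{equation*}
Keeping only the pairs with $j=K$ gives the bound $u^\top Hu\ge\sum_{i<K}u_i^2\sum_{S\ni i,K}Q(\psi^*,S)v_{S,i}v_{S,K}$. The target bound $(\min_{j<K}p_j)p_K$ is exactly the smallest eigenvalue-type bound one gets in the case where $S=\{1,\dots,K\}$ almost surely. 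In that case $v_S=p$ and $u^\top Hu=\mathrm{Var}_p(u_Y)$, and this case serves as the model computation. In general, I would try to show that
\begin{equation*}
\sum_{S\ni i,K}Q(\psi^*,S)v_{S,i}v_{S,K}\gtrsim p_ip_K,
\end{equation*}
using the structure of the choice sets from Section \ref{sec:quantile}, namely that $S(t^*,\psi^*,X)$ is a level set of an integer-valued criterion with $\psi^*\in\{0,1\}^K$. I would combine this with Jensen/Cauchy--Schwarz applied to $p_i=\mathbb{E}(v_{S,i})$ and $p_K=\mathbb{E}(v_{S,K})$.

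\textbf{Step 3 (uniqueness).} With $\lambda_{\min}(H(\theta^*))>0$ established, $L$ is strictly concave near any maximizer. Since $L$ is concave, its set of maximizers is convex. A convex set of maximizers on which $L$ is constant cannot contain two distinct points, because $L$ would then be affine along the segment joining them, contradicting $H\succ0$ there. Hence $\theta^*$ is unique.

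\textbf{Main obstacle.} The main obstacle is Step 2. The term $\mathbb{E}(\mathrm{Var}(u_Y\mid S))$ can be much smaller than $\mathrm{Var}_p(u_Y)$ when the choice sets are small. In the extreme case where every $S$ is a singleton, the Hessian vanishes and $\theta$ is not identified at all. So the argument must exploit some "connectivity" of the tie sets: every index must co-occur with $K$ (directly or through a chain) in choice sets of positive $Q$-probability. I would first try to derive this connectivity from the optimality of $\psi^*$ together with the density assumption near $(t^*,\dots,t^*)$ in Theorem \ref{thm:subset}. If that fails, I would add the connectivity condition as an explicit assumption, and replace the constant by the corresponding weighted-graph Laplacian eigenvalue.
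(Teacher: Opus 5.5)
\textbf{Step~2 is a genuine gap, and it cannot be closed.} The bound $\lambda_{\min}(H(\theta^*))\ge(\min_{j<K}p_j)p_K$ is false under the paper's hypotheses. Your Step~1 Hessian is correct, and it is what exposes this.

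\emph{Where the paper's proof goes wrong.} The paper asserts $H(\theta)=\mathrm{diag}(\sigma(\theta))-\sigma(\theta)\sigma(\theta)^\top$ with $\sigma=\mathbb{E}\rho$. It then applies Cauchy--Schwarz to get $\lambda^\top H\lambda\ge(\min_{j<K}\sigma_j)\sigma_K$, and finally uses $\sigma(\theta^*)=p$. But the true Hessian is $\mathbb{E}[\mathrm{diag}(\rho)-\rho\rho^\top]=\mathrm{diag}(\sigma)-\sigma\sigma^\top-\mathrm{Cov}(\rho)$. The paper drops $\mathrm{Cov}(\rho)$, which amounts to treating $S$ as deterministic. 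That is exactly your ``model computation.''

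\emph{Counterexample.} Take $K=2$, with $c(X,1),c(X,2)$ i.i.d.\ uniform on $[0,1]$, $p=(0.8,0.2)$ and $\alpha=0.7$.
\begin{itemize}
\item On $[0,1]$, $\alpha^*(t)=\min\{1-(1-t)^2,\,0.2+t,\,0.8+t\}$, so $t^*=0.5$ and $\Psi^*(t^*)=\{(1,0)\}$. Every hypothesis of Theorem~\ref{thm:subset} holds.
\item With $\psi^*=(1,0)$, $S=\{1\}$ with probability $0.75$ and $S=\{1,2\}$ with probability $0.25$. Hence $L(\theta_1)=0.05\,\theta_1-0.25\log(1+e^{\theta_1})$.
\item The unique maximizer satisfies $e^{\theta^*_1}/(1+e^{\theta^*_1})=0.2$, and $H(\theta^*)=0.25\cdot0.2\cdot0.8=0.04<0.16=p_1p_2$.
\item The paper's formula returns $0.16$; the term it omits is $\mathrm{Var}(\rho_1)=0.12$.
\end{itemize}
Now replace $p$ by $(0.75+\epsilon,\,0.25-\epsilon)$ and $\alpha$ by $0.75-\epsilon$. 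Then $t^*=0.5$ and $\psi^*=(1,0)$ are unchanged, but $H(\theta^*)=4\epsilon(0.25-\epsilon)\to0$. So even a bound $c\,p_1p_K$ with a universal $c>0$ (your ``$\gtrsim$'') fails.

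In your notation, $\sum_{S\ni i,K}Q(\psi^*,S)v_{S,i}v_{S,K}=\mathbb{E}[v_{S,i}v_{S,K}]$. This can lie far below $\mathbb{E}[v_{S,i}]\,\mathbb{E}[v_{S,K}]=p_ip_K$, because singleton choice sets make $v_{S,i}$ and $v_{S,K}$ negatively correlated. No Jensen or Cauchy--Schwarz step will reverse that.

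\textbf{What survives is the qualitative part, along your own route; Steps~1 and~3 are fine.} The connectivity you were looking for follows from the density hypothesis. Let $A=\{k:\psi^*_k=1\}$. On the event $\{c(X,k)>t^*\ \forall k\in A,\ c(X,k)\le t^*\ \forall k\notin A\}$, every index attains the maximal value $1$. This event has probability at least $C(C')^K>0$, so $Q(\psi^*,\{1,\dots,K\})>0$.

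Your pairwise bound then gives, for every $\theta$,
\[
\lambda_{\min}(H(\theta))\ge Q(\psi^*,\{1,\dots,K\})\,\min_{i<K}w_i(\theta)w_K(\theta)>0,
\]
where $w(\theta)$ is the softmax of $\theta$ over all $K$ indices. Consequences:
\begin{itemize}
\item $L$ is strictly concave, so $\theta^*$ is unique whenever it exists.
\item $\lambda_{\min}(H(\theta^*))\ge\min_{i<K}\mathbb{E}[v^*_{S,i}v^*_{S,K}]>0$.
\end{itemize}
This problem-dependent positive constant is the correct form of the lemma. The explicit constant $(\min_{j<K}p_j)p_K$ must be removed from the statement, and the paper's proof of it is invalid.
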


\subsection{Proof of Theorem \ref{thm:final}}

Observe that
\begin{eqnarray*}
\left|\rho_k\left(X,\hat{\theta},\hat{t}^N,\hat{\psi}^N\right)-\rho_k\left(X,\theta^*,t^*,\psi^*\right)\right| &\leq & \left|\rho_k\left(X,\hat{\theta},\hat{t}^N,\hat{\psi}^N\right)-\rho_k\left(X,\theta^*,t^*,\psi^*\right)\right|\cdot 1_{\left\{S\left(X,\hat{t}^N,\hat{\psi}^N\right) = S\left(X,t^*,\psi^*\right)\right\}}\\
&\,& + 1_{\left\{S\left(X,\hat{t}^N,\hat{\psi}^N\right) \neq S\left(X,t^*,\psi^*\right)\right\}}.
\end{eqnarray*}
It can be verified by direct computation that
\begin{equation*}
\left|\frac{\partial \rho_k\left(x,\theta,t,\psi\right)}{\partial \theta_j}\right| \leq 1, \qquad \forall j,k.
\end{equation*}
Therefore, on the event $\left\{S\left(X,\hat{t}^N,\hat{\psi}^N\right) = S\left(X,t^*,\psi^*\right)\right\}$, we have
\begin{equation*}
\left|\rho_k\left(X,\hat{\theta},\hat{t}^N,\hat{\psi}^N\right)-\rho_k\left(X,\theta^*,t^*,\psi^*\right)\right| \leq \sqrt{K}\cdot \|\hat{\theta}-\theta^*\|_2.
\end{equation*}
Consequently,
\begin{equation*}
\mathbb{E}\left(\left|\rho_k\left(X,\hat{\theta},\hat{t}^N,\hat{\psi}^N\right)-\rho_k\left(X,\theta^*,t^*,\psi^*\right)\right|\right) \leq \sqrt{K}\cdot \mathbb{E}\left(\|\hat{\theta}-\theta^*\|_2\right) + P\left(S\left(X,\hat{t}^N,\hat{\psi}^N\right) \neq S\left(X,t^*,\psi^*\right)\right).
\end{equation*}
The desired result then follows by Proposition \ref{prop:newargmax} and Theorem \ref{thm:gradient}.

\subsection{Proof of Lemma \ref{lem:technical}}

Fix $\psi \in\left\{0,1\right\}^K$ and consider $\Delta^N\left(\psi,t\right)$ as a function of $t$. Define
\begin{eqnarray*}
\mathcal{F}_k &=& \left\{1_{\left\{c\left(X,k\right)\leq t\right\}} + \psi_k: t\in\mathbb{R}\right\},\\
\mathcal{F} &=& \left\{\max_k 1_{\left\{c\left(X,k\right)\leq t\right\}} + \psi_k: t\in\mathbb{R}\right\}.
\end{eqnarray*}
Then, $\mathcal{F}_k$ is a VC-subgraph class (see, e.g., Lem. 2.6.16 of \citealp{VaWe96}). By Lem. 2.6.18 of \cite{VaWe96}, $\bigvee_k \mathcal{F}_k$ is a VC-subgraph class as well. Since $\mathcal{F} \subseteq \bigvee_k \mathcal{F}_k$, it follows that $\mathcal{F}$ is a VC-subgraph class. Note that every element of $\mathcal{F}$ is bounded by $2$. Then, by Thm. 2.6.7 and Thm. 2.14.9 of \cite{VaWe96}, there exist $\kappa_1,\kappa_2 > 0$ such that
\begin{equation}\label{eq:technical1}
P\left(\sup_t \left|\Delta^N\left(\psi,t\right)\right| > w\right) \leq \kappa_1 w^{\kappa_2 r} e^{-2w^2}
\end{equation}
for any $r \geq 1$ and $w > 0$. We may rewrite (\ref{eq:technical1}) as
\begin{equation*}
P\left(\sup_t \left|\Delta^N\left(\psi,t\right)\right| > w\right) \leq \kappa_3 e^{-\kappa_4 w^2}
\end{equation*}
for suitable $\kappa_3,\kappa_4 > 0$. By the union bound,
\begin{equation*}
P\left(\sup_{\psi\in\left\{0,1\right\}^K} \sup_t \left|\Delta^N\left(\psi,t\right)\right| > w\right) \leq 2^K \kappa_3 e^{-\kappa_4 w^2},
\end{equation*}
which completes the proof.

\subsection{Proof of Lemma \ref{lem:technical2}}

By assumption, the density $\xi$ of $\left\{c\left(X,k\right)\right\}^K_{k=1}$ is bounded below by $C>0$ on the set $T = \left\{z\in\mathbb{R}^K:\|z - \left(t^*,...,t^*\right)\|_{\infty} \leq C'\right\}$ for some sufficiently small $C' > 0$.

\textit{Step 1}. First, we show that a similar lower bound holds for the lower-dimensional random vector $\left\{c\left(X,k\right)\right\}_{k\in A}$ where $A\subseteq\left\{1,...,K\right\}$ is nonempty. Without loss of generality, let $A = \left\{1,...,k'\right\}$ for some $1 \geq k' < K$. Then, the density $\xi_A$ of $\left\{c\left(X,k\right)\right\}_{k\in A}$ can be written as
\begin{eqnarray*}
\xi_A\left(z_1,...,z_r\right) &=& \int\ldots\int \xi\left(z_1,...,z_K\right)dz_{r+1}\ldots dz_K\\
&\geq & \int\ldots\int 1_{\left\{z_1,...,z_K\in T\right\}}\xi\left(z_1,...,z_K\right)dz_{r+1}\ldots dz_K\\
&\geq & C\int\ldots\int 1_{\left\{z_1,...,z_K\in T\right\}}dz_{r+1}\ldots dz_K\\
&=& C\left(2C'\right)^{K-r}.
\end{eqnarray*}

\textit{Step 2}. Next, we show that, for any nonempty $A\subseteq\left\{1,...,K\right\}$,
\begin{equation}\label{eq:techinterval}
P\left(\left\{\forall k\in A,\, c\left(X,k\right) > t^*\right\}\cap\left\{\exists k\in A : c\left(X,k\right) < t^*+\delta\right\}\right) \geq \bar{C}_{\left|A\right|}\delta,
\end{equation}
where $\bar{C}_{\left|A\right|}$ depends only on $\left|A\right|$. From Step 1, we know that (\ref{eq:techinterval} holds if $\left|A\right|=1$. Now suppose that $\left|A\right|>1$, again writing $A=\left\{1,...,k'\right\}$ for some $k'>1$. Then,
\begin{eqnarray}
&\,& P\left(\left\{\forall k\leq k',\, c\left(X,k\right) > t^*\right\}\cap\left\{\exists k\leq k' : c\left(X,k\right) < t^*+\delta\right\}\right)\nonumber\\
&\geq & P\left(\left\{\forall k < k',\, c\left(X,k\right) > t^*\right\}\cap\left\{t^* < c\left(X,k'\right) < t^*+\delta\right\}\right)\nonumber\\
&\geq & \int\ldots\int 1_{\left\{c\left(X,1\right) > t^*\right\}}\cdot ...\cdot 1_{\left\{c\left(X,k'\right) > t^*\right\}}\cdot 1_{\left\{c\left(X,k'\right) < t^*+\delta\right\}}\xi_{\left\{1,...,k'\right\}}\left(z_1,...,z_{k'}\right)dz_1\ldots dz_{k'}\nonumber\\
&\geq & \int\ldots\int 1_{\left\{t^* < c\left(X,1\right) < t^*+\frac{1}{2}C'\right\}}\cdot ...\cdot 1_{\left\{t^* < c\left(X,k'-1\right) < t^*+\frac{1}{2}C'\right\}}\cdot 1_{\left\{t^* < c\left(X,k'\right) < t^*+\delta\right\}}\xi_{\left\{1,...,k'\right\}}\left(z_1,...,z_{k'}\right)dz_1\ldots dz_{k'}\nonumber\\
&\geq & C\left(2C'\right)^{K-k'}\int\ldots\int 1_{\left\{t^* < c\left(X,1\right) < t^*+\frac{1}{2}C'\right\}}\cdot ...\cdot 1_{\left\{t^* < c\left(X,k'-1\right) < t^*+\frac{1}{2}C'\right\}}\cdot 1_{\left\{t^* < c\left(X,k'\right) < t^*+\delta\right\}}dz_1\ldots dz_{k'}\label{eq:applystep1}\\
&\geq & C\left(2C'\right)^{K-k'}\left(\frac{1}{2}C'\right)^{k'-1}\delta,\nonumber
\end{eqnarray}
where (\ref{eq:applystep1}) applies the result of Step 1.

\textit{Step 3}. In this step, we show the existence of $D>0$ such that
\begin{equation*}
\mathbb{E}\left(\max_k 1_{c\left(X,k\right)\leq t^*+\delta} + \psi_k\right) \geq D\cdot\delta + \mathbb{E}\left(\max_k 1_{c\left(X,k\right)\leq t^*} + \psi_k\right), \qquad \forall \psi\in\left\{0,1\right\}^K.
\end{equation*}
Fix some $\psi\in\left\{0,1\right\}^K$ and let $A = \left\{k:\psi_k=1\right\}$. Suppose that $A\neq\emptyset$. By repeating the arguments in the proof of Proposition \ref{prop:monotonicity}, we obtain
\begin{eqnarray*}
&\,& \mathbb{E}\left(\max_k 1_{\left\{c\left(X,k\right)\leq t^*+\delta\right\}} + \psi_k\right) - \mathbb{E}\left(\max_k 1_{\left\{c\left(X,k\right)\leq t^*\right\}} + \psi_k\right)\\
&=& P\left(c\left(X,k\right) > t^*,\,\forall k\in A\right) - P\left(c\left(X,k\right) > t^*+\delta,\,\forall k\in A\right)\\
&\geq & \bar{C}_{\left|A\right|}\delta,
\end{eqnarray*}
where the last line applies (\ref{eq:techinterval}) proved in Step 2. On the other hand, if $A=\emptyset$, we may again repeat the arguments in the proof of Proposition \ref{prop:monotonicity} and obtain
\begin{eqnarray*}
&\,& \mathbb{E}\left(\max_k 1_{\left\{c\left(X,k\right)\leq t^*+\delta\right\}} + \psi_k\right) - \mathbb{E}\left(\max_k 1_{\left\{c\left(X,k\right)\leq t^*\right\}} + \psi_k\right)\\
&=& P\left(c\left(X,k\right) > t^*,\;\forall k\right) - P\left(c\left(X,k\right) > t^*+\delta,\;\forall k\right)\\
&\geq & \bar{C}_{K}\delta,
\end{eqnarray*}
where again the last line is due to (\ref{eq:techinterval}). We may take $D = \min\left\{\bar{C}_1,...,\bar{C}_K\right\}$ to complete this step.

\textit{Step 4}. Take
\begin{equation*}
\psi'\in\arg\min_{\psi\in\left\{0,1\right\}^K} \mathbb{E}\left(\max_k 1_{\left\{c\left(X,k\right)\leq t^*+\delta\right\}} + \psi_k\right) - p^\top \psi.
\end{equation*}
Then,
\begin{eqnarray}
\min_{\psi\in\left\{0,1\right\}^K} \mathbb{E}\left(\max_k 1_{\left\{c\left(X,k\right)\leq t^*\right\}}+\psi_k\right)-p^\top\psi +D\delta &\leq & \mathbb{E}\left(\max_k 1_{\left\{c\left(X,k\right)\leq t^*\right\}}+\psi'_k\right)-p^\top\psi' +D\delta\nonumber\\
&\leq& \mathbb{E}\left(\max_k 1_{\left\{c\left(X,k\right)\leq t^*+\delta\right\}}+\psi'_k\right)-p^\top\psi'\label{eq:applystep3}\\
&=& \min_{\psi\in\left\{0,1\right\}^K} \mathbb{E}\left(\max_k 1_{\left\{c\left(X,k\right)\leq t^*+\delta\right\}}+\psi_k\right)-p^\top\psi,\nonumber
\end{eqnarray}
where (\ref{eq:applystep3}) applies the result of Step 3. This completes the proof.

\subsection{Proof of Lemma \ref{lem:hessian}}

By direct computation, one may derive $H\left(\theta\right) = \text{diag}\left(\sigma\left(\theta\right)\right) - \sigma\left(\theta\right)\sigma\left(\theta\right)^\top$, where $\sigma\left(\theta\right)$ is a $\left(K-1\right)$-vector with
\begin{equation*}
\sigma_k\left(\theta\right) = \mathbb{E}\left(\frac{e^{\theta_k}1_{\left\{k\in S\left(t^*,\psi^*,X\right)\right\}}}{1_{\left\{K\in S\left(t^*,\psi^*,X\right) + \sum_{j<K} e^{\theta_j} 1_{\left\{j\in S\left(t^*,\psi^*,X\right)\right\}}\right\}}}\right).
\end{equation*}
Similarly, define
\begin{equation*}
\sigma_K\left(\theta\right) = \mathbb{E}\left(\frac{1_{\left\{K\in S\left(t^*,\psi^*,X\right)\right\}}}{1_{\left\{K\in S\left(t^*,\psi^*,X\right) + \sum_{j<K} e^{\theta_j} 1_{\left\{j\in S\left(t^*,\psi^*,X\right)\right\}}\right\}}}\right).
\end{equation*}
For any $\lambda \in\mathbb{R}^{K-1}$ with $\sum_j \lambda^2_j = 1$, we have
\begin{eqnarray}
\lambda^\top H\left(\theta\right)\lambda &=& \sum_{j<K} \lambda^2_j\sigma_j\left(\theta\right) - \left(\sum_{j<K} \lambda_j \sigma_j\left(\theta\right)\right)^2\nonumber\\
&\geq & \sum_{j < K}\lambda^2_j\sigma_j\left(\theta\right) - \left(\sum_{j<K} \lambda^2_j\sigma_j\left(\theta\right)\right)\left(\sum_{j<K}\sigma_j\left(\theta\right)\right)\label{eq:hessian1}\\
&=& \left(\sum_{j<K}\lambda^2_j\sigma_j\left(\theta\right)\right)\left(1 - \sum_{j<K}\sigma_j\left(\theta\right)\right)\nonumber\\
&\geq & \left(\min_{j<K}\sigma_j\left(\theta\right)\right)\sigma_K\left(\theta\right),
\end{eqnarray}
where (\ref{eq:hessian1}) follows by the Cauchy-Schwarz inequality. Because $\theta^*$ maximizes $\theta\mapsto L\left(\theta,t^*,\psi^*\right)$, we have $\sigma_j\left(\theta^*\right) = p_j$ for $j = 1,...,K$. Therefore,
\begin{equation*}
\lambda^\top H\left(\theta^*\right)\lambda \geq \left(\min_{j<K}p_j\right)p_K,
\end{equation*}
as required. The uniqueness of $\theta^*$ follows by the fact that $H\left(\theta^*\right)$ is positive definite and $\theta\mapsto H\left(\theta\right)$ is continuous.

\end{document}

%% file: notation_ams.tex
\def \qed{\hfill $\Box$}

\def \notin{{\not \in}}

\def \hbar{\bar h}

\def \v1 {\mathbb{F}}



%% file: macro_ams.tex
\newcommand{\doublespace}{\addtolength{\baselineskip}{.5\baselineskip}}
\newcounter{stepno}

\newtheorem{thm}{Theorem}[section]

\newtheorem{lem}{Lemma}[section]

\newtheorem{corol}{Corollary}[section]

\newtheorem{prop}{Proposition}[section]

\newcommand{\bnn}{\\ \\$\begin{array}{rcll}}
\newcommand{\enn}{\end{array}$\\ \\}
\newcommand{\ba}{\begin{eqnarray}}
\newcommand{\ea}{\end{eqnarray}}
\newcommand{\bas}{\begin{eqnarray*}}
\newcommand{\eas}{\end{eqnarray*}}
\newcommand{\bdm}{\begin{displaymath}}
\newcommand{\edm}{\end{displaymath}}
\newcommand{\be}{\begin{equation}}
\newcommand{\ee}{\end{equation}}
\newcommand{\bn}{\begin{eqnarray}}
\newcommand{\en}{\end{eqnarray}}
\newcommand{\bns}{\begin{eqnarray*}}
\newcommand{\ens}{\end{eqnarray*}}

\newcommand{\defvarbegin}{\begin{quotation}\vspace{-15pt}\begin{tabbing}}
\newcommand{\defvarend}  {\end{tabbing}\vspace{-10pt}\end{quotation}}

\newcommand{\bnarray}{\begin{equation}\begin{array}{rcll}}
\newcommand{\enarray}{\end{array}\end{equation}}

\newcommand{\barr}{\begin{array}}
\newcommand{\earr}{\end{array}}

\newcounter{cnum}

\newcommand{\beginalg}{\setcounter{stepno}{1}
                \begin{list}{\bf Step~\arabic{stepno}}
                         {\usecounter{stepno}\settowidth{\labelwidth}{\bf Step~9m}
                \addtolength{\leftmargin}{2\parindent}}
                }
\newcommand{\eg}{\end{list}}

\def \define{\begin{quote}\begin{itemize}}
\def \enddefine{\end{itemize}\end{quote}}

\newlength{\boxedparwidth} 
  {\begin{center} \begin{tabular}{|@{\hspace{.15in}}c@{\hspace{.15in}}|}
                 \hline \\ \begin{minipage}[t]{\boxedparwidth}
                 \setlength{\parindent}{0.0in}}%
   {\end{minipage} \\ \\ \hline \end{tabular} \end{center}}
\newcounter{chapterctr}
\newcounter{example}[chapterctr]
\newcounter{ctr}